\PassOptionsToPackage{unicode}{hyperref}
\PassOptionsToPackage{hyphens}{url}
\PassOptionsToPackage{dvipsnames,svgnames,x11names}{xcolor}
\documentclass[
  12pt,
]{article}
\usepackage{xcolor}
\usepackage[margin=1.25in]{geometry}
\usepackage{amsmath,amssymb}
\usepackage{iftex}
\ifPDFTeX
  \usepackage[T1]{fontenc}
  \usepackage[utf8]{inputenc}
  \usepackage{textcomp} 
\else 
  \usepackage{unicode-math} 
  \defaultfontfeatures{Scale=MatchLowercase}
  \defaultfontfeatures[\rmfamily]{Ligatures=TeX,Scale=1}
\fi
\usepackage{lmodern}
\ifPDFTeX\else
\fi
\IfFileExists{upquote.sty}{\usepackage{upquote}}{}
\IfFileExists{microtype.sty}{
  \usepackage[]{microtype}
  \UseMicrotypeSet[protrusion]{basicmath} 
}{}
\makeatletter
\@ifundefined{KOMAClassName}{
  \IfFileExists{parskip.sty}{%
    \usepackage{parskip}
  }{
    \setlength{\parindent}{0pt}
    \setlength{\parskip}{6pt plus 2pt minus 1pt}}
}{
  \KOMAoptions{parskip=half}}
\makeatother

\usepackage{longtable,booktabs,array}
\usepackage{calc} 
\usepackage{etoolbox}
\makeatletter
\patchcmd\longtable{\par}{\if@noskipsec\mbox{}\fi\par}{}{}
\makeatother
\IfFileExists{footnotehyper.sty}{\usepackage{footnotehyper}}{\usepackage{footnote}}
\makesavenoteenv{longtable}
\usepackage{graphicx}
\makeatletter
\newsavebox\pandoc@box
\newcommand*\pandocbounded[1]{
  \sbox\pandoc@box{#1}%
  \Gscale@div\@tempa{\textheight}{\dimexpr\ht\pandoc@box+\dp\pandoc@box\relax}%
  \Gscale@div\@tempb{\linewidth}{\wd\pandoc@box}%
  \ifdim\@tempb\p@<\@tempa\p@\let\@tempa\@tempb\fi
  \ifdim\@tempa\p@<\p@\scalebox{\@tempa}{\usebox\pandoc@box}%
  \else\usebox{\pandoc@box}%
  \fi%
}
\def\fps@figure{htbp}
\makeatother

\providecommand{\tightlist}{%
  \setlength{\itemsep}{0pt}\setlength{\parskip}{0pt}}

\usepackage[style=chicago-authordate,backend=biber,pagetracker=true,autocite=inline,alldates=comp,labeldateparts=true,citetracker=true,uniquename=minfull,useeditor=true,usetranslator=true,usenamec=true,alltimes=12h,urltime=24h,datecirca=true,datezeros=false,dateuncertain=true,timezones=true,compressyears=true,ibidtracker=constrict,sorting=cms,punctfont,cmslos=true,nodates,uniquelist=minyear,maxbibnames=10,minbibnames=7,sortcase=false,abbreviate=false,dateabbrev=false,avdate=true]{biblatex}
\usepackage{amsthm}

\theoremstyle{newstyle}

\usepackage[ruled]{algorithm2e}
\usepackage{xspace}

\newcommand{\countrylabel}{A\xspace}

\newcommand{\W}{H_{1}}

\makeatletter
\def\verbatim@font{\ttfamily\footnotesize}
\makeatother

\usepackage{xr-hyper}

\usepackage{booktabs}
\usepackage{longtable}
\usepackage{array}
\usepackage{multirow}
\usepackage{wrapfig}
\usepackage{float}
\usepackage{colortbl}
\usepackage{pdflscape}
\usepackage{tabu}
\usepackage{threeparttable}
\usepackage{threeparttablex}
\usepackage[normalem]{ulem}
\usepackage{makecell}
\usepackage{xcolor}
\usepackage[table]{xcolor}
\makeatletter
\@ifpackageloaded{caption}{}{\usepackage{caption}}
\AtBeginDocument{%
\ifdefined\contentsname
  \renewcommand*\contentsname{Table of contents}
\else
  \newcommand\contentsname{Table of contents}
\fi
\ifdefined\listfigurename
  \renewcommand*\listfigurename{List of Figures}
\else
  \newcommand\listfigurename{List of Figures}
\fi
\ifdefined\listtablename
  \renewcommand*\listtablename{List of Tables}
\else
  \newcommand\listtablename{List of Tables}
\fi
\ifdefined\figurename
  \renewcommand*\figurename{Figure}
\else
  \newcommand\figurename{Figure}
\fi
\ifdefined\tablename
  \renewcommand*\tablename{Table}
\else
  \newcommand\tablename{Table}
\fi
}
\@ifpackageloaded{float}{}{\usepackage{float}}
\floatstyle{ruled}
\@ifundefined{c@chapter}{\newfloat{codelisting}{h}{lop}}{\newfloat{codelisting}{h}{lop}[chapter]}
\floatname{codelisting}{Listing}

\makeatother
\makeatletter
\@ifpackageloaded{caption}{}{\usepackage{caption}}
\@ifpackageloaded{subcaption}{}{\usepackage{subcaption}}
\makeatother
\usepackage{bookmark}
\IfFileExists{xurl.sty}{\usepackage{xurl}}{} 
\makeatletter
\@ifundefined{xmpquote}{}{}
\makeatother
\hypersetup{
  pdftitle={Fully specified Bayes factors for hypothesis testing and sensitivity analysis in process tracing},
  pdfauthor={Matias López; Jake Bowers; Daniel Gajardo Cooper},
  colorlinks=true,
  linkcolor={NavyBlue},
  filecolor={Maroon},
  citecolor={NavyBlue},
  urlcolor={NavyBlue},
  pdfcreator={LaTeX via pandoc}}

\title{Fully specified Bayes factors for hypothesis testing and
sensitivity analysis in process tracing}

\author{Matias López\thanks{Universidad Diego Portales, Escuela de
Ciencia Política. \texttt{matiaslopez.uy@gmail.com}} \and Jake
Bowers\thanks{University of Illinois Urbana-Champaign, Departments of
Political Science \&
Statistics. \texttt{jwbowers@illinois.edu}} \and Daniel Gajardo
Cooper\thanks{Pontificia Universidad Católica de Chile, Instituto de
Ciencia Política. \texttt{drgajardo@uc.cl}}}

\date{2026-08-31}

\usepackage{color}
\usepackage{fancyvrb}

\DefineVerbatimEnvironment{Highlighting}{Verbatim}{commandchars=\\\{\}}
\usepackage{framed}
\definecolor{shadecolor}{RGB}{241,243,245}
\newenvironment{Shaded}{\begin{snugshade}}{\end{snugshade}}

\newcommand{\AttributeTok}[1]{\textcolor[rgb]{0.40,0.45,0.13}{#1}}

\newcommand{\CommentTok}[1]{\textcolor[rgb]{0.37,0.37,0.37}{#1}}

\newcommand{\ConstantTok}[1]{\textcolor[rgb]{0.56,0.35,0.01}{#1}}
\newcommand{\ControlFlowTok}[1]{\textcolor[rgb]{0.00,0.23,0.31}{\textbf{#1}}}

\newcommand{\DecValTok}[1]{\textcolor[rgb]{0.68,0.00,0.00}{#1}}
\newcommand{\DocumentationTok}[1]{\textcolor[rgb]{0.37,0.37,0.37}{\textit{#1}}}

\newcommand{\FloatTok}[1]{\textcolor[rgb]{0.68,0.00,0.00}{#1}}
\newcommand{\FunctionTok}[1]{\textcolor[rgb]{0.28,0.35,0.67}{#1}}

\newcommand{\NormalTok}[1]{\textcolor[rgb]{0.00,0.23,0.31}{#1}}

\newcommand{\OtherTok}[1]{\textcolor[rgb]{0.00,0.23,0.31}{#1}}

\newcommand{\SpecialCharTok}[1]{\textcolor[rgb]{0.37,0.37,0.37}{#1}}

\newcommand{\StringTok}[1]{\textcolor[rgb]{0.13,0.47,0.30}{#1}}

\NewDocumentCommand\citeproctext{}{}

\makeatletter
 \let\@cite@ofmt\@firstofone
 \def\@biblabel#1{}
 \def\@cite#1#2{{#1\if@tempswa , #2\fi}}
\makeatother
\newlength{\cslhangindent}
\newlength{\csllabelwidth}
\newenvironment{CSLReferences}[2] 
 {\begin{list}{}{%
  \setlength{\itemindent}{0pt}
  \setlength{\leftmargin}{0pt}
  \setlength{\parsep}{0pt}
  \ifodd #1
   \setlength{\leftmargin}{\cslhangindent}
   \setlength{\itemindent}{-1\cslhangindent}
  \fi
  \setlength{\itemsep}{#2\baselineskip}}}
 {\end{list}}
\usepackage{calc}

\theoremstyle{plain}
\newtheorem{corollary}{Corollary}
\theoremstyle{plain}
\newtheorem{proposition}{Proposition}
\theoremstyle{plain}
\newtheorem{theorem}{Theorem}
\theoremstyle{remark}
\AtBeginDocument{}

\begin{document}
\maketitle
\begin{abstract}
In process tracing, researchers ask how strongly their evidence favors
their explanation, the working theory, over a rival. Fairfield and
Charman \autocite*{fairfield2022} compare the two theories with a Bayes
factor whose probabilities are set by hand, which Zaks
\autocite*{zaks2021updating} argues lets researchers overstate findings.
We derive those probabilities from a fully specified model of the
evidence a researcher could have examined, given which observations
support each theory. The working theory says most of the evidence
supports it. The rival says no more than half does. The model gives two
Bayes factors, neither designed to favor the working theory. The second
grants the rival more benefit of the doubt and is provably the smallest
value the rival's claim allows. Researchers can report how much
observation bias, re-coding, or a smoking-gun weight would change a
conclusion. In six published studies we reanalyze, both Bayes factors
exceed 20 for two studies and not for the other four.
\end{abstract}

\section{Introduction}\label{introduction}

Process tracing is a method for testing hypotheses about a specific
outcome within a case based on how well qualitative evidence fits
different arguments
\autocite{collier2011,falleti2006a,george2005a,gerring2004,hall2013,mahoney2012logic}.
A Bayes factor is the ratio of the probability of the observed data
under one hypothesis to its probability under a competing hypothesis
\autocite{jeffreys1961theory,kass1995bayes}. Scholars, notably
\textcite{fairfield2022}, have argued for using Bayes factors in process
tracing to reason about competing hypotheses \autocites[see
also][]{barrenechea2019,bennett2008process,rohlfing_bayes_2026,humphreys2023integrated}.

While the rationale for using Bayes factors is intuitive in qualitative
research, the main point of contention is how to represent the
probability of qualitative observations.\footnote{This paper takes the
  researcher's two hypotheses as given and asks only how probable the
  observations are under each of them. It does not ask how qualitative
  evidence can establish that one thing caused another in a particular
  case, which is the question \textcite{waldner2025book},
  \textcite{humphreys2023integrated}, and \textcite{glynn2015} take up.
  We say later in this introduction how our model relates to theirs.}
Fairfield and Charman \autocite*{fairfield2022} want us to assign
probabilities based on the probative value of each observation (which
can be expressed with decibels) given each hypothesis. But Zaks
\autocite*{zaks2021updating} argues that this method allows researchers
to inflate Bayes factors and overstate the conclusiveness of their
research. Zaks's concern has a parallel in statistics: Bayes factors can
be very sensitive to the choice of prior distribution, so two analysts
with the same data can report different Bayes factors, and a Bayes
factor can provide conclusions that are at odds with non-Bayesian tests
on the same data --- a phenomenon known as the Jeffreys-Lindley paradox
\autocite{lindley1957}.

Zaks and Bennett debated whether assigning probabilities this way lets
researchers overstate their findings
\autocite{zaks2021updating,bennett_reply,zaks2022return}, and the debate
did not settle it: we still lack a fully specified probability model
that says how probable a researcher's qualitative observations are under
each hypothesis in this type of study.

This paper addresses this problem by proposing one \emph{generative}
model of observation, specified completely enough to produce every
probability a Bayes factor needs. By \emph{generative}, we mean a
probability model of how observations reach the researcher: it describes
a process that could have produced any pattern of observations, and so
it gives a probability to every pattern, including the one actually
seen. We say \emph{model} because no researcher's documents actually
arrive by the process we describe. The process is a simplification,
chosen so that every pattern of observations a researcher could have
made gets a probability we can compute. The model formalizes the two
sides of a dispute: what the rival theory claims about the evidence and
what the working theory claims, two claims we explain
below.\footnote{In this paper we engage only with the simple case of two theories that motivate a given study: the "working theory" is the explanation the researchers believe, and the "rival theory" is an explanation that is not compatible with it.}
The Bayes factor is then the ratio of the probability of the
observations under the working theory's claim to their probability under
the rival's claim: how many times more probable the observations are
under the working theory's claim than under the rival's. A generative
model of observations is useful in the analysis of a process-tracing
design because it puts the assumptions behind each probability out in
the open, where coauthors, reviewers, and critics see them and because
it allows the researcher to vary those assumptions and inspect the
results of such sensitivity analyses.

We take the working theory to mean that more than half of the evidence
the researcher could have examined supports it, and the rival to mean
that no more than half does. Neither claim names a single number: ``more
than half'' does not say how much more. A Bayes factor is a ratio of two
probabilities, one for each claim, so we need a rule that turns ``more
than half'' and ``no more than half'' into probabilities. In this paper,
we approach the problem from the perspective of a proponent of the rival
theory. A critic who says ``no more than half of the evidence that could
have been observed would support the working theory'' has not said
whether that means exactly half or far less. We propose two ways of
turning ``no more than half'' into one probability: one that spreads the
critic's claim evenly over the values it permits, and one that gives the
value under which the actual observations are most probable. The first
approach gives the denominator of what we call the
\emph{uniform-weights} Bayes factor: we compute the probability of the
observations at each share of the evidence supporting the working theory
that is consistent with the rival's claim, from zero up to one half, and
average those probabilities, giving each the same weight. For the
second, which we call the \emph{worst-case} Bayes factor, since it is
the worst-case for the working theory proponent, we evaluate the rival's
claim at the single working theory share of the evidence that gives the
rival proponent the most benefit of the doubt.

On the working theory's side, we average the probability of the
observations across every share of the evidence above one half, giving
each the same weight unless the researcher chooses other weights before
classifying the observations: the Online Supplement reports what other
claims a researcher might make and their implications for the resulting
Bayes factors. In this paper we use equal weights because, with every
share of the evidence base from zero to one weighted equally, no split
of the observations is more probable than any other before a single
observation is classified: an evidence base in which every observation
supports the working theory is as probable, in advance, as an even
split, or as one in which every observation opposes it. The two theories
then make mirror-image predictions --- whatever probability the working
theory's weights give to some number of supporting observations, the
rival's weights give to that same number of opposing ones --- so neither
side is given the sharper prediction. Equal weights also put the
worst-case Bayes factor in a closed form a researcher can compute by
hand, shown below. Equal weights say only that the working theory holds
a majority, without saying how large, so a researcher whose theory
predicts a lopsided archive should use weights that say so, and the
Online Supplement reports what several such choices do to the number.

To use the model the researcher has to do one thing: classify each
observation actually made as favoring the working theory or the rival.
Coding whether an observation supports one theory or the other is a
judgment about direction, not strength. Reducing a case to two counts
simplifies what a researcher actually does in an archive or in the
field, and we mean the counts to complement case knowledge rather than
to replace it: interpretation of the documents is what produces the
counts in the first place. We also show how a researcher may attach a
probative strength or weight to a particular observation if they have
reason to. Both that weight and the weights over the shares of the
evidence consistent with the working theory are optional in the method
that we describe here.

In a sensitivity analysis we ask how much would have to change before a
researcher changed their mind, so we first have to say what they would
decide as things stand. We therefore fix a rejection threshold before
coding: a value of the Bayes factor at or above which the researcher
treats the working theory as the best explanation and sets the rival
aside, at least provisionally. For example, researchers using Bayes
factors for decision making often treat 20 as such a threshold: if
observations are 20 times more probable under the working theory than
they are under the rival, common practice is to decide against the rival
in favor of the working theory.

We explain how our Bayes factors work using a hypothetical study, the
process tracing of ``country \countrylabel,'' and then illustrate their
real-world application by re-analyzing the evidence reported in six
recent process-tracing studies published in leading political science
journals, on subjects that range from refugee law, online governance,
and judicial review to Scandinavian state development,
nineteenth-century Prussian party politics, and mass violence in 1960s
Indonesia. We report both Bayes factors for every study. When
observations favor the working theory, the worst-case Bayes factor is
never larger than the uniform-weights one, because evaluating the
rival's claim at the critic's best case gives the smallest Bayes factor
any way of turning the rival's claim into one probability can give.

The proposed method is related to other approaches to statistical
inference with qualitative data
\autocite{humphreys2023integrated,revisiting_skocpol,glynn2015} and to
counterfactual reasoning \autocite{levi,runhardt_2024,waldner2025book}.
However, we differ from these approaches in what we model: not what is
known about the case, but claims about the process by which observations
are made, so that the probability of the observations depends on how
much of all the evidence that could have been examined --- including the
part that was not --- supports the working theory. That is, we are using
probability models to represent scholarly arguments, not actual sampling
processes. Why choose the models we choose? The Online Supplement gives
two reasons: the rival theory should receive the benefit of the doubt,
and any probability model must still be able to produce the observations
actually made. The worst-case Bayes factor is what those two
requirements leave. Holding fixed the researcher's weights, chosen
before coding, no weighting a proponent of the rival could hold gives a
smaller Bayes factor, provided the count supporting the working theory
is at least the count supporting the rival. So the reported value is a
lower bound on the Bayes factor against any such proponent.

To return to the debate about the use of Bayes factors in qualitative
inquiry: our model sets no probability of any single observation by
hand, which is the freedom Zaks argues lets researchers inflate Bayes
factors, but it does not make every worry about bias go away. A
researcher may still find, or code, more of the observations that favor
the working theory by overlooking observations that might support the
rival theory. What the formalized models that we develop here add is a
way to compute how much such bias would change the Bayes factor. So a
researcher using a fully specified model can say how much bias it would
take to change the conclusion, which a hand-set probability cannot.

\section{Why Bayes factors require models}\label{sec-BF}

In statistics, a Bayes factor compares the likelihood of evidence \(E\)
given a probabilistic model of observation that represents a working
theory \(H_1\) versus one that represents a rival \(H_R\)
\autocite{kass1995bayes}, thus:

\[
\text{BF} = \frac{p(E \mid H_1)}{p(E \mid H_R)}.
\]

For example, two observation models can describe one coin: the first
says it lands tails 75\% of the time (\(H_1\): the coin is rigged), and
the second says it lands tails half the time (\(H_R\): the coin is
fair). Say the evidence \(E\) consists of 10 flips, of which only 2 land
heads. The Bayes factor tells us which theory makes the data more
expected.\footnote{See the Online Supplement section ``Bayes Factors as
  Summaries of Evidence.''} To compute \(p(E \mid H_1)\) and
\(p(E \mid H_R)\) we need a formula that says how probable a particular
flip count is under each model. The binomial formula does this: it
counts the number of ways two heads can land among ten flips,
\(\binom{10}{2}\), and multiplies by the probability of any one such
sequence under the model. We can put the two models together into a
Bayes Factor as follows:

\[
\text{BF}
= \frac{p(E \mid H_1)}{p(E \mid H_R)}
= \frac{\binom{10}{2}(0.25)^{2}(0.75)^{8}}{\binom{10}{2}(0.50)^{2}(0.50)^{8}}
= \frac{0.282}{0.044}
\approx 6.4.
\]

A Bayes factor of 6.4 describes the data as over six times more probable
under \(H_1\) than under \(H_R\). Whether we should then believe the
coin is rigged depends also on how probable we thought that was before
the flips: a reader who thought before the flips that the coin was as
likely to be rigged as fair now holds odds of 6.4 to 1 that it is
rigged, while a reader who started at 19 to 1 that the coin was fair now
holds about 3 to 1, still leaning that way.\footnote{Posterior odds are
  the prior odds multiplied by the Bayes factor. A reader who put the
  odds at 19 to 1 that the coin was fair put them at 1 to 19 that it was
  rigged. The eight tails multiply the odds on rigged by 6.4, giving 6.4
  to 19 for rigged, which is \(19/6.4 = 2.97\), or about 3 to 1, for
  fair. Run from even odds the same multiplication gives
  \(1 \times 6.4 = 6.4\) to 1 for rigged.} The Bayes factor reports what
the data contribute; the prior odds over the hypotheses are the
reader's. In the coin example we stated a very simple theory, classified
observations, and used the Bayes factor to summarize evidence. We can
apply the same logic in qualitative research
\autocite{fairfield2022}.\footnote{Other approaches to Bayesian
  reasoning in qualitative research differ in their starting point.
  \textcite{humphreys2023integrated} integrate over uncertainty about
  the causal structure that generated the evidence: an observation's
  probative value there is derived from a causal model of the case. Our
  likelihoods are derived instead from a model of the observation
  process --- claims about the nature of the possible evidence and how
  it reaches the researcher --- and say nothing about causal structure,
  so the two derivations answer different questions and can be used
  together. \textcite{beach_pedersen} and \textcite{bennett2008process}
  develop process-tracing test typologies.} We work with a hypothetical
study: a researcher with a project about democratic breakdown in country
\countrylabel, entertaining two competing hypotheses:

\begin{itemize}
\tightlist
\item
  \(H_1\): Elites in country \countrylabel helped bring the
  authoritarian leader to power in order to prevent a left-wing
  government from forming. Institutions did not matter.
\item
  \(H_R\): The institutions of country \countrylabel failed to prevent
  the rise of the authoritarian leader. Elites did not matter.
\end{itemize}

Suppose the evidence \(E\) is a set of 12 observations the researcher
made from interviews, newspaper articles, and archival documents, of
which 9 favor elite choice (\(H_1\)) over weak institutions (\(H_R\)).
The researcher would like to do for this case what we just did for the
coin: compute \(p(E \mid H_1) / p(E \mid H_R)\) and learn how strongly
the evidence favors one theory over the other. But the researcher cannot
do this directly. In the coin example, probabilities come from models
specified by how the coin was assumed to behave: rigged to land tails
75\% of the time under \(H_1\), fair under \(H_R\). Once we picked
probability models, the rest was arithmetic. But what probability
model(s) should the researcher use for the evidence in the process
tracing of country \countrylabel? Nothing in the case tells the
researcher how often a pro-\(H_1\) document should appear in the archive
under \(H_1\), or under \(H_R\) in the way that ``number of heads'' can
be justified as generated by the binomial model. So the researcher has
to supply those probabilities on some grounds a reader can evaluate, in
order to use a Bayes factor to summarize evidence.

One possibility is to skip the probability model and have the researcher
state each observation's probative value directly. Fairfield and Charman
\autocite*{fairfield2022} ask the researcher to assign each observation
a decibel value directly --- a smoking gun might be worth 10 dB, or in
more decisive cases as much as 30 dB, while other positive evidence
might be worth 3 to 7 dB.\footnote{``Smoking gun evidence'' and
  ``smoking gun test'' are terms widely used in process-tracing research
  for an observation that is highly conclusive even if not necessary to
  infer that a causal statement is correct
  \autocite{collier2011,vanevera1997guide}.} In practical terms, a
decibel value \(d\) stands for a likelihood ratio, the probability of
the observation under \(H_1\) divided by its probability under \(H_R\),
of \(10^{d/10}\): 3 dB is about 2, 7 dB about 5, 10 dB is 10, and 30 dB
is 1,000. A 10 dB smoking gun is therefore 2 to 5 times as probative as
ordinary positive evidence worth 3 to 7 dB, and a 30 dB smoking gun 200
to 500 times as probative. The researcher adds the decibels across the
observations, and the Bayes factor is \(10^{D/10}\), where \(D\) is that
total.

Fairfield and Charman elicit likelihood ratios directly and do not
require a generative model of the observation process: the two
probabilities \(p(E \mid H_1)\) and \(p(E \mid H_R)\) never have to be
specified separately, only their ratio at each step. We compute a Bayes
factor for the process tracing of country \countrylabel using this
rationale.

Suppose that one among the nine pieces of evidence suggesting \(H_1\) is
a ``smoking gun'': a secret memo entitled ``Opening the way for the
autocrat: a plan to prevent a left-wing government.'' The researcher
chooses to represent the ratio \(p(E \mid H_1)/p(E \mid H_R)\) by
assigning a likelihood ratio of 5 to 1 to each of eight ordinary
pro-\(H_1\) observations, 1 to 5 to each of the three pro-\(H_R\)
observations, and 50 to 1 to the smoking-gun memo. When the observations
are independent under each theory, the Bayes factor is the product of
the likelihood ratios of the individual observations.\footnote{Fairfield
  and Charman \autocite*{fairfield2022} write the same calculation on
  the decibel scale, which adds rather than multiplies; either
  presentation yields the same Bayes factor.} Multiplying these
likelihood ratios across the twelve observations: \[
BF = \underbrace{5^8}_{\text{eight ordinary pro-}H_1 \text{ obs.}} \cdot \underbrace{50}_{\text{smoking gun}} \cdot \underbrace{(1/5)^3}_{\text{three pro-}H_R \text{ obs.}} = 5^{8-3} \cdot 50 = 3{,}125 \cdot 50 = 156{,}250.
\] If, following Fairfield and Charman's own convention, we treat the
two theories as a priori equally probable, this Bayes factor translates
to a \emph{posterior probability} --- the probability that \(H_1\) is
true given the evidence --- of
\(p(H_1 \mid E) = BF/(1+BF) = 156{,}250/156{,}251 \approx 0.9999936\).
This expresses near certainty about the truth of \(H_1\), and the
researcher's calibration produces it: the eight ordinary pro-\(H_1\)
observations at 5 to 1, after the three pro-\(H_R\) observations at 1 to
5 cancel three of them, contribute the factor \(5^5 = 3{,}125\), and the
smoking-gun memo contributes the further factor of 50.

A peer objects, holding that odds of 5 to 1 over-represent the
evidentiary weight of the eight observations and that the 50 to 1 odds
of the smoking gun ignore the possibility of the memo being fake. To
this peer, more realistic odds are 1.5 to 1 (observations are 50\% more
likely given \(H_1\)) and 2 to 1 for the ``smoking gun'' (100\% more
likely). Now we have
\(BF = (1.5)^8 \cdot 2 \cdot (1/1.5)^3 = (1.5)^5 \cdot 2 \approx 7.6 \cdot 2 \approx 15.2\),
which gives
\(P(H_1 \mid E) = \frac{BF}{1+BF} = \frac{15.2}{16.2} \approx 0.938\).

With the peer's calibration, confidence in \(H_1\) has dropped from
99.99936\% to about 94\%. The researcher is all but certain of \(H_1\),
and the peer is somewhat less so. Who has a better understanding of the
research context? Without specified models, we cannot really arbitrate.

As we see it, the problem runs deeper than calibration. The researcher
in our example is doing something specific: using probabilities to
represent a personal sense of how strongly each observation supports
each theory, not how likely each observation would be under a model.
These are different measures. A probability is a fraction of the
outcomes that could have happened: saying a coin lands heads with
probability \(1/2\) has meaning only because we have already said the
coin can also land tails. The researcher's 5 to 1 is meant to be one
such fraction divided by another, but the researcher has named no
alternatives, no description of what else the world of \(H_1\) would
produce, nor of what else the world of \(H_R\) would produce. The
problem is not the Bayes factor itself but the absence of a specified
model of the observations that could have come from each theory.

Therefore, if we want to move beyond subjective ideas about probative
value of evidence, and if we want the Bayes factor to express the
probability of data (not their strength), we need to specify models.

\section{Our method}\label{our-method}

We propose one probability model of how a researcher's observations
arrive, and two Bayes factors computed from it. The model represents a
stylized world that produces observations like those the researcher
found, with one parameter the reader can inspect and argue with: the
share of the partially observed evidence that supports the working
theory. The two theories are claims about that share. We then compute
the Bayes factor from the model, as we did for the coin. Whether the
coin landed tails or heads is uncontroversial, whereas what a
qualitative piece of evidence says (whether it is pro-\(H_1\) or
pro-\(H_R\)) can be more easily disputed. Our model still requires human
coding of this information. But a specified model makes it easy for a
researcher to ask whether coding error, or the failure of any other
assumption the model makes, is driving the researcher's confidence in
\(H_1\). Being able to ask that question changes how we embrace a
hypothesis and reject others in process tracing, moving from reliance on
the calibration of evidence to stating conclusions on the grounds of
sensitivity analysis.

Below, we specify the model, and then compute the two Bayes factors from
it.

\subsection{A binomial model of observation}\label{sec-binomial}

We return to the researcher with twelve observations, nine of them
favoring elite choice. Before a Bayes factor can be computed, the
researcher has to say what the hypothetical world of each theory would
produce. In the coin example the two theories were claims about one
thing, the chance that a flip lands tails. Here too the two theories are
claims about one thing: the evidence the researcher could have examined.
The twelve observations are a subset of that evidence base. Some
interviews were conducted but more could have been, and some documents
were read but not every document that bears on the case. The two
theories imply disagreement about what that larger evidence base
contains. We formalize this next.

We can relate this observation process to the task of adjudicating
between theories using a single parameter: \(\theta\), the share of the
evidence the researcher could have examined that favors elite choice
(\(H_1\)). Each observation independently supports \(H_1\) with
probability \(\theta\) and supports weak institutions (\(H_R\)) with
probability \(1 - \theta\).\footnote{The independence assumption is a
  working idealization. Fairfield and Charman make a similar concession,
  noting that ``proceeding as though the evidence is independent will be
  a reasonable approximation'' in many qualitative settings
  \autocite[116]{fairfield2022}}

The evidence the researcher could have examined takes one of two forms,
and \(\theta\) is a share of it either way. If the evidence base is
genuinely open-ended --- say a social scientist could continue
interviewing indefinitely, or an evolving political process keeps
producing new documents --- then \(\theta\) is the proportion of
pro-\(H_1\) evidence in a pool of possible observations that keeps
growing. But qualitative researchers often work with finite evidence,
such as a closed archive, and then \(\theta\) is the proportion in that
collection: the number of its documents that support \(H_1\) divided by
the number it holds. The two forms differ in how the documents are
drawn: from a closed archive they are drawn without replacement from a
finite collection, which changes the probability of the counts by an
amount that shrinks as the collection grows. We compute both Bayes
factors from the binomial probability either way. Neither theory claims
a size, only a share, so the researcher never has to say how large the
collection is. The Online Supplement gives the probability of the counts
when the draws are made without replacement, and proves that the
worst-case Bayes factor's error rate of one in twenty holds for a finite
collection of any size as well as for independent draws.

This model allows us to represent the working theory with
\(\theta > 1/2\) and the rival with \(\theta \leq 1/2\). This means
that, if elites' choices really did drive democratic collapse in country
\countrylabel, more potential evidence in the world should point to
\(H_1\) than to \(H_R\), and conversely. This is a modeling choice, not
a logical equivalence. Even when \(H_1\) is true about the world, the
surviving evidence in an archive may favor \(H_R\) (e.g.~if pro-\(H_1\)
documents were destroyed), and even when \(H_1\) is false, the archive
may favor \(H_1\) (e.g.~if curated by partisans of the regime). The
mapping from ``\(H_1\) is true in the world'' to ``\(\theta > 1/2\) in
the evidence the researcher could have examined'' is itself an
assumption, and one we revisit when we discuss observation bias.

Given a value of \(\theta\), the model says how probable each count is.
Let \(N\) be the total number of observations and \(k\) the number
supporting \(H_1\), so \(N - k\) support \(H_R\); we write \(r\) for
\(N - k\) where a name helps. Given \(\theta\), the count \(k\) follows
a binomial distribution

\[p(E \mid \theta)=\binom{N}{k} \theta^{k}(1-\theta)^{N-k}.\]

The researcher found that 9 of the 12 observations favor elite choice
and 3 favor weak institutions. For example, if \(\theta\) had single
values of \(0.3\) (more evidence favoring \(H_R\)) or \(0.7\) (more
evidence favoring \(H_1\)), the probability of observing this exact
split would be:

\[p(E \mid \theta = 0.3)=\binom{12}{9}0.3^{9}0.7^3\approx 0.001.\]
\[p(E \mid \theta = 0.7)=\binom{12}{9}0.7^{9}0.3^3\approx 0.24.\]

But neither theory names a single value of \(\theta\). The working
theory says \(\theta\) is above one half and the rival says it is one
half or less, and the probability of nine of twelve differs from one
value of \(\theta\) to another, as the two examples show. To compute a
Bayes factor we need one probability for each theory, so we have to say
how to get from a range of values of \(\theta\) to a single number. The
next subsection does this, and it does it in two ways.

\subsection{Two Bayes factors from one model}\label{sec-two-bfs}

On the working theory's side we average the probability of the counts
across the values of \(\theta\) above one half. Before coding, the
researcher chooses weights over those values, the ones consistent with
the working theory: how much weight each carries in the average. The
default is to favor no value of \(\theta\) above one half over any
other, which gives the uniform distribution on \((1/2, 1)\), with
density \(p(\theta\mid H_1)=2\) --- the density is 2 because a uniform
density on an interval of width one half must be 2 to integrate to one.
The researcher does not need to say how likely the working theory is
relative to the rival. The weights concern only which values of
\(\theta\) the working theory predicts, and computing the Bayes factor
never requires prior probabilities of \(H_1\) and \(H_R\).

In this article we use the uniform weights, but a researcher may use
other weights and say why, and we report below what two other choices do
to the Bayes factor.

Under the uniform weights, the probability of the counts under the
working theory is
\[p(E\mid H_1)=\displaystyle\int_{0.5}^1 \binom{N}{k} \theta^{k}(1-\theta)^{N-k} p(\theta\mid H_1)\,d\theta=2\displaystyle\int_{0.5}^1 \binom{N}{k} \theta^{k}(1-\theta)^{N-k}\,d\theta, \]
which at nine of twelve is \(7814/53248 = 0.14675\); we say below, after
the closed form, where those two integers come from. This number is the
numerator of both Bayes factors.

Differences in how we treat the rival's model, creates the two Bayes
factors we work with in this paper. For the uniform-weights Bayes factor
we treat the rival's claim the same way we did the working theory: every
value of \(\theta\) from zero to one half gets the same weight, so
\(p(E \mid H_R)\) is the average of the binomial probability over those
values,
\[p(E\mid H_R)=\displaystyle\int_{0}^{0.5} \binom{N}{k} \theta^{k}(1-\theta)^{N-k} p(\theta\mid H_R)\,d\theta=2\displaystyle\int_{0}^{0.5} \binom{N}{k} \theta^{k}(1-\theta)^{N-k}\,d\theta, \]
which at nine of twelve is \(378/53248 = 0.00710\): the two averages add
to \(2/13 = 8192/53248\), because with every value of \(\theta\) from
zero to one weighted equally each of the thirteen counts the researcher
might have reported --- none of the twelve documents supporting elite
choice, or one, or two, and so on up to all twelve --- is equally
probable, and \(8192 - 7814 = 378\). Dividing gives
\[\text{uniform-weights BF}=\dfrac{p(E\mid H_1)}{p(E\mid H_R)} = \frac{0.14675}{0.00710} \approx 20.67.\]
The factor \(2\binom{N}{k}\) is the same in numerator and denominator
and cancels, so the same ratio is
\[\dfrac{\displaystyle\int_{0.5}^{1}\theta^{k}(1-\theta)^{N-k}\,d\theta}{\displaystyle\int_0^{0.5}\theta^{k}(1-\theta)^{N-k}\,d\theta}.\]
This is a Bayes factor as \textcite{kass1995bayes} define one: a ratio
of two averaged likelihoods, each averaged under the researcher's
weights over the values of \(\theta\) that theory covers. With weights
on each theory's range of \(\theta\), this ratio is the Bayes factor
itself, not a value the Bayes factor exceeds. The second Bayes factor,
below, is the Bayes factor against one particular proponent of the
rival, the one who puts all the weight at \(\theta = 1/2\).
Section~\ref{sec-lower-bound} shows that it is also the smallest value
the Bayes factor can take against any proponent, whatever weights that
proponent puts on the values of \(\theta\) at or below one half.

For the worst-case Bayes factor we let the critic pick the value of
\(\theta\). A critic who says ``no more than half of the evidence
supports elite choice'' has not said which value is meant, and the
uniform-weights Bayes factor assumes all of them are meant equally. That
may not be so. Among the values consistent with the critic's claim, the
one that makes nine of twelve most probable is one half exactly: the
more of the evidence supports the working theory, the easier it is to
draw nine supporting observations out of twelve, and it keeps getting
easier until \(\theta\) reaches nine twelfths, far past anything
consistent with that claim. So we evaluate the rival's claim there. That
value makes the denominator as large as any value consistent with the
claim can make it, and a larger denominator makes a smaller Bayes
factor, so no other way of turning the claim into one probability --- a
single value, or any weighting of values --- gives a number below this
one (Section~\ref{sec-lower-bound}). The denominator is then one
binomial probability, \(\binom{12}{9}(1/2)^{12} = 220/4096 = 0.05371\),
and dividing gives
\[\text{worst-case BF} = \frac{0.14675}{0.05371} \approx 2.73.\] The
0.14675 in that division is an average over a range, but 2.73 can also
be reached from whole numbers alone: with the uniform weights on the
numerator, the ones both Bayes factors share, this ratio has a closed
form in binomial coefficients (derived in the Online Supplement), so the
researcher can compute it by hand, dividing
\(\sum_{j=0}^{k}\binom{N+1}{j}\) by \((N+1)\binom{N}{k}\). At nine of
twelve, that is

\begin{equation}\protect\phantomsection\label{eq-bounded-main}{
\text{worst-case BF} \;=\; \frac{7814}{13 \times 220} \;=\; \frac{7814}{2860} \;\approx\; 2.73.
}\end{equation}

The numerator is the sum of the first ten of those binomial
coefficients,
\(1 + 13 + 78 + 286 + 715 + 1287 + 1716 + 1716 + 1287 + 715 = 7814\),
and the denominator is \(13 \times 220 = 2860\).

The two integers 7814 and 2860 are the two probabilities of that first
division put over a common denominator: the probability at
\(\theta = 1/2\) is \(220/4096\), which is \(2860/53248\) once both
numbers are multiplied by 13, and the numerator, that same formula
evaluated at every value of \(\theta\) above one half and averaged under
the uniform weights, is \(7814/53248\), so the ratio of the two is the
\(7814/2860\) above. The numerator is not the binomial probability at
any single value of \(\theta\); it is an average of binomial
probabilities.

The two Bayes factors share their numerator and differ only in the
denominator, so their ratio, \(20.67 / 2.73 = 7.57\), is the ratio of
the two denominators, \(0.05371 / 0.00710\): nine of twelve is 7.57
times as probable at \(\theta = 1/2\) as it is on average across the
values of \(\theta\) at or below one half. Figure~\ref{fig-bounded}
shows the three probabilities behind the two divisions at each of the
outcomes the researcher might have reported. At the observed nine the
numerator's bar stands 20.67 times as tall as the bar for the rival's
claim averaged over the values of \(\theta\) consistent with it and 2.73
times as tall as the bar for the rival's claim at \(\theta = 1/2\).

Neither choice of denominator is arbitrary. Spreading the rival's claim
evenly is what a researcher does who does not know which value of
\(\theta\) the critic means and declines to guess. Evaluating the claim
at one half is what a researcher does who lets the critic mean whichever
value of \(\theta\) suits the critic best. Each Bayes factor says how
much worse the rival's claim fits the counts than the working theory's
claim does, under one way of turning the range that claim covers into
one probability, and a reader who wants the other way can compute it
from the same two counts. Whether these Bayes factors represent strong
evidence in favor of \(H_1\) is something we take up in
Section~\ref{sec-decision}, where we adopt the threshold of 20 common in
the literature; by that standard, the uniform-weights Bayes factor of
20.67 is just above it, while the worst-case Bayes factor of 2.73 falls
well short of it. The next two subsections say what the worst-case Bayes
factor is a lower bound on, and what can be reported when it is below
20.

\begin{figure}[H]

\centering{

\pandocbounded{\includegraphics[keepaspectratio]{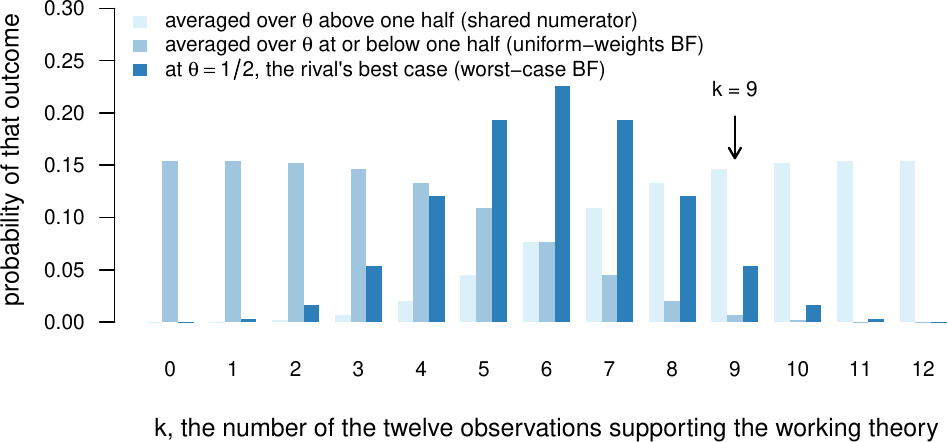}}

}

\caption{\label{fig-bounded}Each group of bars stands for one of the
outcomes the researcher might have reported --- none of the twelve
observations supporting the working theory, or one, and so on up to all
twelve --- and the three bars give the probability of that outcome under
three weightings. Light bars: averaged over values of \(\theta\) above
one half under uniform weights, the numerator both Bayes factors share.
Medium bars: averaged over values of \(\theta\) at or below one half,
the uniform-weights Bayes factor's denominator. Dark bars: at
\(\theta = 1/2\), the rival's best case and the worst-case Bayes
factor's denominator. Each Bayes factor is the height of the light bar
divided by the height of one of the other two, which is why the light
bars appear: at the observed nine the two divisions give 20.67 and
2.73.}

\end{figure}%

Both Bayes factors so far have taken equal weight over the values of
\(\theta\) above one half. That was a default, not the only choice
available, and the two other choices we report are weights arising from
rescaled Beta distributions. The Beta distribution Beta\((a,b)\) is a
distribution over the interval \((0,1)\) with two parameters, \(a\) and
\(b\). If \(a>b\), then the mass is on the right, if \(a<b\) it is
concentrated on the left, and if \(a=b\) the distribution is symmetric
about the center: concentrated at the center when \(a=b\) is above one,
uniform when \(a=b=1\), and rising toward both ends when \(a=b\) is
below one. Larger values of \(a\) and \(b\) make the distribution more
concentrated, and smaller values make it less concentrated (more
variance). A Beta distribution puts all its probability on the interval
from 0 to 1. If \(X\) has a Beta distribution, then
\(\theta = 1/2 + X/2\) has a distribution on the interval from one half
to one, and \(\theta = X/2\) has one on the interval from zero to one
half. This is what we mean by rescaling a Beta distribution, and it
gives a distribution for \(\theta\) in each interval \((0,0.5)\) and
\((0.5,1)\). Figure~\ref{fig-beta} shows different possible
distributions for \(\theta\) in \((0.5,1)\).

\begin{figure}[H]

\centering{

\pandocbounded{\includegraphics[keepaspectratio]{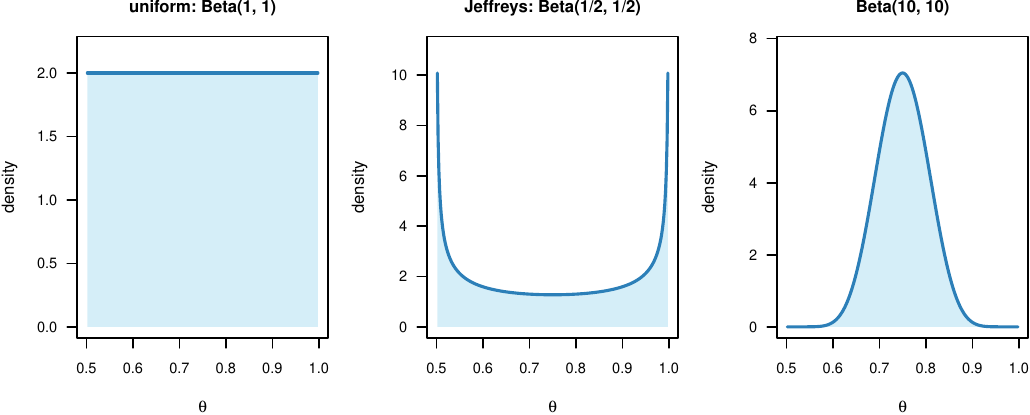}}

}

\caption{\label{fig-beta}The three weightings the paper reports, drawn
on the working theory's side, \(\theta \mid H_1\) on \((0.5,1)\). The
same shape rescaled onto \((0,0.5)\) is used on the rival's side. Left:
the uniform weights, which give a uniform-weights Bayes factor of 20.67
at nine of twelve. Middle: the Jeffreys weights, Beta\((1/2,1/2)\),
which put more weight at both ends of the range and give 9.53. Right:
Beta\((10,10)\), concentrated near three quarters, which gives 252.92.}

\end{figure}%

Readers of \textcite{fairfield2022} will recognize a similar-sounding
assumption that is in fact a different one. Fairfield and Charman assume
equal priors \emph{over the hypotheses}, and that assumption cancels
from their ratio: their reported number is the same whether or not a
reader shares it. Our weights are \emph{over the parameter} \(\theta\),
inside each theory's range, and they do not cancel: change them and the
reported number moves. With nine of twelve observations favoring
\(H_1\), the uniform weights give a uniform-weights Bayes factor of
20.67. Beta\((0.5, 0.5)\) rescaled onto each half, which places more
weight at both ends of each half and which we call the Jeffreys weights,
gives a Jeffreys-weights Bayes factor of 9.53.\footnote{Beta\((1/2,1/2)\)
  is what Jeffreys's \autocite*{jeffreys1961theory} general-purpose rule
  for default weights gives when it is applied to \(\theta\), and it
  remains the main alternative to the uniform default, which is why we
  report it beside ours. Inside each theory's range it puts more weight
  near the two ends than in the middle, so it stands for a proponent who
  expects \(\theta\) close to an even split or close to all the evidence
  favoring their own theory, rather than partway between. Applying the
  shape inside each range rather than across the whole range from zero
  to one is our own construction, not Jeffreys's rule for this model.
  The Online Supplement reports every study under both weightings.}
Beta\((10, 10)\) rescaled onto each half, concentrated at the middle of
each half so that the working theory predicts \(\theta\) near three
quarters and the rival \(\theta\) near one quarter, gives 252.92. The
weights are a substantive input, and the Online Supplement's
prior-sensitivity analysis treats them as one, giving the Bayes factor
under five weightings a researcher might use.

\subsection{A lower bound on the Bayes factor}\label{sec-lower-bound}

We interpret the 2.73 as a lower bound. A proponent of the rival who
spread weights over the values of \(\theta\) at or below one half in
some way we never learn would face a different Bayes factor, and
whatever those weights, that Bayes factor is at least 2.73, because we
evaluated the rival's claim at its best case, which is the researcher's
worst case. The next paragraph gives the argument and the Online
Supplement the proof.

Suppose a proponent of the rival held weights of their own, over the
values of \(\theta\) implied by the rival's claim. The Bayes factor
against the rival theory would keep the numerator both our Bayes factors
use, the probability of the counts averaged over the values of
\(\theta\) above one half, and divide it by the probability of the
counts averaged under the rival proponent's chosen weights. Of course,
in this paper, the rival is hypothetical. We never learn that
proponent's weights, but we can show that when the counts favor the
working theory we do not need them: the counts of observations that
support the working theory are no more probable at any value of
\(\theta\) consistent with the rival's claim than they are at one half,
so an average taken over those values is no larger than the probability
at \(\theta=1/2\) either. In our example the Bayes factor against the
rival is at least 2.73 whatever weights the rival's proponent holds.
Using weights spread evenly across the values consistent with the
rival's claim, that Bayes factor is 20.67, which is the uniform-weights
Bayes factor. Under the weighting most favorable to the rival, all of it
at \(\theta = 1/2\), it is exactly 2.73. The one condition is that the
count supporting the working theory be at least the count supporting the
rival, for the reason Section~\ref{sec-two-bfs} gave: the probability of
the counts rises with \(\theta\) until \(\theta\) reaches the observed
fraction, \(k/N\), and falls after, so when \(k/N\) is at least \(1/2\)
every value of \(\theta\) the rival can claim lies before that peak, and
\(\theta = 1/2\) is the best of them.

\subsection{What can be reported when the worst-case Bayes factor is
below 20}\label{sec-separation-main}

The 2.73 is far below 20: observing nine of twelve supporting
observations is not strong evidence against a rival who claims the
evidence is really only half supporting the working theory. Even weights
that put everything on \(\theta = 9/12\), the observed fraction exactly,
reach only 4.81: the binomial probability of nine of twelve at that
value is \(220 \times (3/4)^{9}(1/4)^{3} = 0.2581\), and
\(0.2581/0.05371 = 4.81\). The Online Supplement proves that no weights
do better. Favoring the rival is not what makes the number small; nine
of twelve is simply not 20-to-1 evidence against an evenly split
evidence base. The worst-case Bayes factor can also fall below one at
counts that favor the working theory. At seven of twelve it is 0.56,
because the probability of seven of twelve at \(\theta = 1/2\),
\(792/4096 = 0.193\), is larger than the average of that probability
over the values of \(\theta\) above one half, 0.109, and
\(0.109/0.193 = 0.56\).

Reaching 20 from these twelve observations would require the two
theories to be further apart than ``more than half'' and ``no more than
half.'' A working theory might claim that at least 60 percent of the
evidence supports it, and a rival might claim at most 40 percent. In
what we call the \emph{separated Bayes factor}, the working theory
claims \(\theta \geq 1/2 + g\) and the rival claims
\(\theta \leq 1/2 - g\), and each theory is evaluated at the value of
\(\theta\) in its claim nearest an even split: the researcher computes
the binomial probability of the nine of twelve at \(1/2 + g\) and
divides by the binomial probability at \(1/2 - g\). The separated Bayes
factor is like the worst-case Bayes factor in evaluating the rival's
claim at a single value of \(\theta\), the rival's best case, and unlike
it in evaluating the working theory's claim at a single value too.
Nothing is averaged on either side, so the researcher's weights over the
values of \(\theta\) above one half play no part here. The value
\(1/2 - g\) is the best case consistent with the rival's claim, and
\(1/2 + g\) is the smallest value the working theory claims, so the
separated Bayes factor compares the two theories at the two values of
\(\theta\) consistent with their claims that are nearest an even split,
and nothing else about either claim enters the ratio. We call \(g\) the
separation: each theory's value of \(\theta\) sits \(g\) away from an
even split, the working theory's at \(1/2 + g\) and the rival's at
\(1/2 - g\), so the two claims are \(2g\) apart. The researcher does not
choose \(g\), but finds the smallest separation \(g^{*}\), in steps of
one thousandth, at which the ratio reaches 20, that is, the smallest
such \(g\) with
\((1/2+g)^{9}(1/2-g)^{3} \geq 20 \times (1/2-g)^{9}(1/2+g)^{3}\), where
the coefficient \(\binom{12}{9}\) that both probabilities carry has
cancelled, and reports it. At nine of twelve that value is
\(g^{*} = 0.123\), which puts the working theory's value at
\(1/2 + 0.123 = 0.623\) and the rival's at \(1/2 - 0.123 = 0.377\): for
these twelve observations to be 20-to-1 evidence for elite choice,
elite-choice theory must claim that at least 62.3 percent of the
evidence supports it, and weak-institutions theory at most 37.7 percent.
The observation-bias analysis of Section~\ref{sec-sensitivity} makes the
same move: instead of asserting a quantity no one knows, the researcher
reports the value at which the conclusion would change, and hands the
substantive question --- how lopsided does elite-choice theory say the
evidence would be? --- to readers who know the case.

\subsection{Probative weight}\label{sec-weight}

Up to this point, we have not addressed the probative value of the
evidence, allowing researchers to weigh in only on the direction of each
observation. This is intentional, as we want the reported values to be
independent of a researcher's assessment of the strength of the
evidence. However, researchers may occasionally encounter evidence that
appears conclusive. In this section, we show one way to incorporate
weights into both Bayes factors.

Imagine that the researcher found a secret memo, a ``smoking gun'' for
\(H_1\), that should be decisive in a way that a single supporting
newspaper article is not. A researcher who believes one observation
carries more probative weight than the others can say so without leaving
the framework.

Our move is to treat weight as \emph{effective replication}. When the
researcher assigns weight \(w_i \geq 1\) to observation \(i\), we
compute both Bayes factors as if \(w_i\) identical copies of that
observation had been observed. For example, if the assigned weight is
\(10\), we count it as \(10\) pieces of evidence. We restrict weights to
positive integers so that every weighted count is a whole number and
both Bayes factors are computed exactly.\footnote{A researcher who wants
  a non-integer weight, say two and a half routine observations, rounds
  it to the nearest whole number. Multiplying every weight by a common
  factor to make them whole would not do: it changes the counts, and
  with them both Bayes factors.} The weighted totals \[
W = \sum_{i \in H_1} w_i, \qquad R = \sum_{j \in H_R} w_j
\] replace the unit counts \(k\) and \(r\) in each Bayes factor. Under
unit weights (\(w_i = 1\) for every \(i\)), this recovers the Bayes
factors we have been using. At weighted counts both Bayes factors are
what the two formulas give, and the worst-case Bayes factor is still a
lower bound over the rival's weights there, by the argument of
Section~\ref{sec-lower-bound}.

The researcher supplies the weight, and we compute both Bayes factors
from the weighted counts. Two scholars can disagree about a weight, but
the disagreement is about how many effective replications a piece of
evidence represents, not about what likelihood ratio the data ought to
produce.

For our country \countrylabel example, the researcher identifies the
secret memo as a smoking gun and assigns it weight
\(w_{\text{smoke}} = 10\), treating it as worth ten unit-weight
pro-\(H_1\) observations. The remaining eight pro-\(H_1\) observations
and the three pro-\(H_R\) observations keep their unit weights. The
weighted totals are \(W = 8 + 10 = 18\) pro-\(H_1\) items and \(R = 3\)
pro-\(H_R\) items.

\begin{table}[H]

\caption{\label{tbl-weighted-bf}Bayes factors before and after
weighting.}

\centering{

\centering
\begin{tabular}{lccc}
\toprule
Bayes factor & $(k, r)$ & Unweighted & Memo at weight 10\\
\midrule
Uniform-weights & (9, 3) & 21 & 2,337\\
Worst-case & (9, 3) & 2.73 & 143.3\\
\bottomrule
\multicolumn{4}{l}{\rule{0pt}{1em}\textit{Note:} One pro-$H_1$ observation (a memo) assigned weight $w = 10$.}\\
\end{tabular}

}

\end{table}%

Table~\ref{tbl-weighted-bf} shows the result. Both Bayes factors
increase substantially: if the memo really were worth ten routine
observations, the uniform-weights Bayes factor would be 2,337 and the
worst-case value 143.3, both above 20. The researcher supplied one
additional number, the weight of the smoking gun, and we recomputed both
Bayes factors from the weighted counts. The choice of that number is
itself open to disagreement. But instead of disputing which weight best
represents the data, we think the more useful question is how far the
weight can fall before we can no longer decide in favor of \(H_1\). To
answer this question we need the decision threshold, which the next
section fixes at a Bayes factor of 20, the value the introduction
described.

\subsection{Making a decision}\label{sec-decision}

How large does the Bayes factor need to be before the evidence warrants
a conclusion? We adopt the threshold the introduction described and
Section~\ref{sec-separation-main} already used: \(\text{BF} \geq 20\),
which is the conventional cutoff for ``strong'' evidence in favor of
\(H_1\) over \(H_R\).\footnote{Fairfield and Charman propose tolerance
  bands of 3 to 10 decibels \autocite*[137--138]{fairfield2022}, but
  they use them as consistency checks across reorderings or repackagings
  of the same evidence, not as a decision threshold. We pick the
  \(\text{BF} \geq 20\) threshold because the numerical value \(0.048\)
  is familiar from the \(\alpha = 0.05\) convention in frequentist
  hypothesis testing. A Bayesian posterior probability of \(H_R\)
  remains a different quantity from a frequentist Type I error rate even
  when the two coincide in numerical value.} The \(\text{BF} \geq 20\)
threshold comes from \textcite{kass1995bayes} as a reference for
``strong'' evidence. Under equal prior odds, \(\text{BF} = 20\) implies
a posterior probability of the rival of
\(p(H_R \mid E) = 1/(1 + \text{BF}) \leq 1/21 \approx 0.048\), a number
researchers conventionally treat as small.

A reader compares both Bayes factors to the threshold the same way: the
reported value is at or above 20, or it is not. What differs is what a
value above 20 licenses. For the uniform-weights Bayes factor it says
the observations are that many times more probable under the working
theory's claim than under the rival's, given the uniform weights on each
theory's range of \(\theta\). For the worst-case Bayes factor it says
\emph{at least} that many times, whatever weights a proponent of the
rival might hold (Section~\ref{sec-lower-bound}).\footnote{As an aside,
  the worst-case Bayes factor also has an error-rate property, which the
  rest of the paper does not use. If the rival's claim is true, the
  value is 20 or more in at most one repetition of the observation
  process in twenty, so a researcher who sets the rival aside whenever
  the value is 20 or more makes that mistake at most one time in twenty.
  That rate holds for any weights fixed before coding, and whether the
  observations are independent draws or draws without replacement from a
  finite collection of any size. The Online Supplement proves it for an
  unbiased search and gives the corresponding rate for any threshold in
  place of 20. For a search biased toward the working theory we have
  numerical checks at every bias factor we have tried but no proof
  written.}

In the running example the two Bayes factors then tell the researcher
different things. The uniform-weights Bayes factor is 20.67, just above
20: a reader who adopts the uniform weights on each theory's range of
\(\theta\) sets the rival aside, barely. The worst-case Bayes factor is
2.73, far below 20: against a rival who may claim an evenly split
evidence base, nine of twelve is not strong evidence, and no weights
fixed before coding would make it so, since 4.81 is the most any such
weights can produce at these counts (Online Supplement). One thing a
researcher can report instead is the separation: the twelve observations
are 20-to-1 evidence for elite choice provided elite-choice theory
claims at least 62.3 percent of the evidence and weak-institutions
theory at most 37.7 percent (\(g^{*} = 0.123\),
Section~\ref{sec-separation-main}). With the smoking-gun memo at weight
10 both Bayes factors exceed 20, at 2,337 and 143.3, so a reader's
decision at these counts depends on which resolution of the rival's
claim the reader accepts, on the weight, or on the separation, and the
next section says how much search bias each conclusion survives and what
separation the worst-case Bayes factor needs, with and without the
weight.

But the Bayes factor depends on how the researcher classified the
evidence. What if some of the observations coded as favoring \(H_1\)
would strike a second careful reader as fitting \(H_R\)? What if the
search were more likely to turn up documents supporting \(H_1\) than
documents supporting \(H_R\)? What if a single ``smoking gun'' carries
more probative weight than the unit-weight observations around it, and
the equal-weighting baseline understates its force? A decision threshold
is only useful if we can assess how robust the decision is to the
assumptions behind it. The probability model lets us ask precisely how
much observation bias or probative weight would be needed to change the
decision --- the two assumptions we examine next. The Online Supplement
takes up coding error and the weights over \(\theta\) in the same way.

\section{Assessing robustness via sensitivity
analysis}\label{sec-sensitivity}

Five assumptions stand behind every Bayes factor we have computed:

\begin{enumerate}
\def\labelenumi{\arabic{enumi}.}
\tightlist
\item
  \emph{Unbiased observation}: Nothing in the research process
  systematically biases observations towards those favoring \(H_1\).
\item
  \emph{Weight specification}: The probative weights the analyst
  assigned --- whether uniform or with a heavier multiplier on a
  ``smoking gun'' --- express the relative evidential force of the
  observations.
\item
  \emph{Independence of observations}: learning the direction of one
  observation tells us nothing about the direction of any other.
\item
  \emph{Coding}: A second careful reader would classify each observation
  as the analyst did --- as favoring \(H_1\) or favoring \(H_R\).
\item
  \emph{Stated weights on \(\theta\)}: The uniform weights over the
  values of \(\theta\) above one half, which both Bayes factors use, and
  the uniform weights over the values at or below one half, which the
  uniform-weights Bayes factor adds, are the weights the analyst fixes
  before coding over the share of the evidence supporting \(H_1\).
\end{enumerate}

In the previous section we compared each Bayes factor to a threshold of
20 to reach a conclusion. That conclusion rests on the five assumptions
above, and a researcher can never be certain that any of them holds. For
instance, the evidence can be examined for bias, but no examination can
prove there is none. What a researcher can do instead is turn each
assumption into a question that does have an answer: how much violation
of the assumption is needed to change the conclusion? Sensitivity
analysis names that amount. What this gains is not the number itself but
a specific thing to disagree about: a critic who doubts the conclusion
has to claim that the search really was at least that lopsided, or that
a second reader really would re-code at least that many observations.
Readers who know the case can weigh such a claim. No one can weigh a
general worry that bias might be present.

Here we focus mainly on the assumption researchers worry about most:
whether the evidence is biased, i.e.~assumption 1. But we also compare
sensitivity with and without weighted evidence, thus also addressing
assumption 2. Both are real worries --- a researcher selects the
materials and the surviving historical evidence is itself selective, so
some observation bias is hard to rule out, and although researchers
agree that a smoking gun outweighs a routine confirmation, they can
disagree about the multiplier that best represents that smoking gun. The
same logic handles coding accuracy (assumption 4), the weights on
\(\theta\) (assumption 5), and dependence (assumption 3), which we take
up in the Online Supplement.

Unbiased observation (assumption 1) requires that evidence for \(H_1\)
and evidence for \(H_R\) be equally likely to reach the researcher. On
that assumption, finding mostly pro-\(H_1\) evidence is evidence that
there really is mostly pro-\(H_1\) evidence out there --- the
interpretation we would like to give the result. Observation bias is the
competing interpretation: the search sought out pro-\(H_1\) evidence and
walked past pro-\(H_R\) evidence that was there to be found, so what
turned up reflects how the researcher looked rather than what the world
holds. Sensitivity analysis will show how lopsided that search would
have had to be to change what is reported.

We write \(\omega\) for the bias factor: a search with bias factor
\(\omega\) is \(\omega\) times as likely to find a pro-\(H_1\)
observation as a pro-\(H_R\) one, so a share \(\theta\) of the evidence
becomes a share \(\omega\theta/(\omega\theta + 1 - \theta)\) among the
observations the search finds, and we compute the probability of the
counts with that share in place of \(\theta\) (the Online Supplement
gives the derivation). The two Bayes factors treat the bias question
differently. For the uniform-weights Bayes factor, we compute the bias
factor \(\omega^{\star}\) at which it falls to 20: how much more likely
a pro-\(H_1\) observation would have to be to be found, relative to a
pro-\(H_R\) observation, before the conclusion is overturned. For the
worst-case Bayes factor no such tipping point exists at the running
example, because its value is already below 20. For the worst-case Bayes
factor we therefore ask the bias question of the separated Bayes factor
from Section~\ref{sec-separation-main}: the binomial probability of the
counts at the value \(\theta = 1/2 + g\) the working theory names,
divided by the probability at the value \(\theta = 1/2 - g\) the rival
names, with each of the two replaced by the share among the observations
the search finds, \(\omega\theta/(\omega\theta + 1 - \theta)\) at
\(\theta = 1/2 + g\) and at \(\theta = 1/2 - g\). The Online Supplement
tabulates that ratio at each assumed bias factor of one or more. We do
not tabulate bias factors below one, because a search biased against the
researcher's own theory only understates that case. In the main text we
report for the worst-case Bayes factor only the separation \(g^{*}\),
with and without the smoking-gun weight.

\begin{table}[!h]

\caption{\label{tbl-sensitivity}The running example (nine observations
favoring \(H_1\), three favoring \(H_R\), one a candidate smoking gun)
under each Bayes factor, with and without the smoking-gun weight: the
Bayes factor, the uniform-weights Bayes factor's observation-bias
tipping point, and the worst-case Bayes factor's required separation.}

\centering{

\centering\begingroup\fontsize{10}{12}\selectfont

\begin{threeparttable}
\begin{tabular}{>{\raggedright\arraybackslash}p{6.5cm}cc}
\toprule
 & Unweighted & Weighted ($w_{\text{smoke}} = 10$)\\
\midrule
Uniform-weights Bayes factor & 20.67 & 2,337\\
Bias tipping point $\omega^\star$ (uniform-weights) & 1.01 & 2.33\\
Worst-case Bayes factor & 2.73 & 143.3\\
Separation $g^{\star}$ (worst-case) & 0.123 & 0.050\\
\bottomrule
\end{tabular}
\begin{tablenotes}[para]
\item \textit{Note:} 
\item Bias $\omega^\star$: how much more likely a pro-$H_1$ observation would have to be to be found, relative to a pro-$H_R$ observation, before the uniform-weights Bayes factor falls to 20. Separation $g^{\star}$: the distance from an even split that each theory's claim about the evidence must exceed for the observations counted in that row to be 20-to-1 evidence.
\end{tablenotes}
\end{threeparttable}
\endgroup{}

}

\end{table}%

Recall the running example: nine observations favor \(H_1\) and three
favor \(H_R\), with one pro-\(H_1\) memo a candidate smoking gun.
Table~\ref{tbl-sensitivity} summarizes both Bayes factors for this
example and their sensitivity, with and without the smoking-gun weight.
The Online Supplement gives the derivations behind these calculations.

Under the uniform-weights Bayes factor, the unweighted conclusion is
fragile: the Bayes factor of 20.67 sits just above the threshold, so
pro-\(H_1\) evidence need only have been 1.01 times as likely to be
found as pro-\(H_R\) evidence --- essentially no bias at all --- before
the conclusion is overturned. With the memo at weight 10 the tipping
point rises only to 2.33: a search a little more than twice as likely to
surface pro-\(H_1\) evidence as pro-\(H_R\) evidence still overturns the
conclusion, even though the same weight raises the uniform-weights Bayes
factor from 20.67 to 2,337. The weight itself can fall all the way to
one under the uniform-weights Bayes factor, because nine of twelve is
above 20 without it.

Under the worst-case Bayes factor the same evidence never reached the
threshold, so there is no conclusion for bias to overturn. A different
question can be asked instead: how far apart would the two theories'
claims have to be for these counts to be 20-to-1 evidence? Unweighted,
they would have to sit 12.3 percentage points on each side of an even
split. With the memo at weight 10 the required separation falls to 5
points --- claims of 55 percent against 45 percent would already make
the weighted counts 20-to-1 evidence --- but that number is a
sensitivity analysis resting entirely on the weight. The weight can fall
to 6 before the worst-case Bayes factor drops below 20: at weight 6 the
counts are 14 and 3 and the value is 21.3, and at weight 5 they are 13
and 3 and the value is 13.7. The researcher still argues for a
conclusion here, and readers still weigh that argument. What the model
adds is the amount that would overturn it --- how lopsided the search
would have to have been, or how much the memo would have to be worth ---
so a reader who is unconvinced can name which of those quantities is in
doubt and say how far it would have to move.

\section{Applications in Published Work}\label{sec-applications}

To see how our approach performs on a sample of recent publications in
top political science journals (see Online Supplement for how we chose
them), we had two independent AI agents code the observations in six
published papers. A \emph{charitable} coder took the author's framing at
face value and resolved ambiguity in favor of the working theory. A
\emph{skeptical} coder pushed back on the author's framing and resolved
ambiguity in favor of the rival.\footnote{Each agent backed every coding
  with a quote and a page citation.} The two sets of coded material were
then merged under a consensus rule: an observation enters the evidence
counts (\(k\) for \(H_1\) and \(r\) for \(H_R\)) only when both coders
agree.\footnote{We set aside observations as ambiguous when the two
  coders disagreed about which theory the observation favors. The merge
  is conservative by design --- it loses signal from observations that
  one coder flags but the other does not --- but it forces the analyst
  to justify each consensus coding. Each observation was recorded in a
  row in a data table. The full per-row coding for all applications is
  at
  \texttt{replications/\textless{}paper\_key\textgreater{}/cases.csv}.}
We then checked the coding and searched for evidence that most
researchers would probably classify as a smoking gun.

Here is a summary of the six studies. \textcite{steinsson2024} process
traces how Wikipedia became a trustworthy source of information, testing
a theory about struggles within the community of Wikipedia editors
versus a rival theory about causes outside Wikipedia.
\textcite{winward2021} tests a theory that credits the 1960s Indonesian
killings to low state capacity. The rival credits violence to the
strength of challenging groups. \textcite{gallego2023} put forward a
theory about the unexpected generosity of migration laws in Latin
America, claiming that symbolic politics pushed by liberal elites, not
immigrant stocks, accounted for it. \textcite{mor2022} attributes the
emergence of a Catholic party in 19th-century Prussia to voters
coordinating around Catholic identity under the threat of hostile
government policies, against the rival theory that sees the party as
pushed by elites. \textcite{andersen2024} tells us that the peaceful
agrarian reforms in the 18th and 19th centuries in Scandinavia were the
product of technocratic states rationalizing land distribution rather
than driven by conflict between lords and peasants. None of these five
studies relies on a smoking-gun observation. Finally,
\textcite{pavone2022} argue that policy choices in Norway's 2019
social-benefits reform are explained by proponents trying to avoid
judicial review, and not by a recognition of ``legal obligation.'' Here
the authors did find a smoking gun: a letter explicitly stating that the
threat of judicialization was the reason why policymakers changed their
policies. We report both Bayes factors for every study, and
Table~\ref{tbl-app-summary} collects the results.

\begin{table}[!h]

\caption{\label{tbl-app-summary}Six published process-tracing studies
under both Bayes factors, ordered by the worst-case Bayes factor, with
the Jeffreys-weights Bayes factor beside the uniform-weights one. The
separation \(g^{\star}\) is the distance from an even split that each
theory's claim about the evidence must exceed for that study's
observations to be 20-to-1 evidence.}

\centering{

\centering\begingroup\fontsize{10}{12}\selectfont

\resizebox{\ifdim\width>\linewidth\linewidth\else\width\fi}{!}{
\begin{threeparttable}
\begin{tabular}{lrrrrr}
\toprule
Paper & $(k, r)$ & Uniform-weights BF & Jeffreys-weights BF & Worst-case BF & $g^{\star}$\\
\midrule
Steinsson 2024 & (12, 0) & 8,191 & 6,219 & 630 & 0.063\\
Winward 2021 & (14, 3) & 264 & 91.58 & 21.34 & 0.068\\
Hammoud-Gallego \& Freier 2023 & (11, 2) & 154 & 60.32 & 14.91 & 0.083\\
Mor 2022 & (8, 2) & 29.57 & 13.72 & 4.00 & 0.123\\
Andersen 2024 & (9, 3) & 20.67 & 9.53 & 2.73 & 0.123\\
\addlinespace
Pavone \& Stiansen 2022 & (7, 4) & 4.16 & 2.69 & 0.83 & 0.231\\
\hspace{1em}with smoking gun, $w = 10$ &  & 277 & 90.19 & 20.54 & 0.063\\
\bottomrule
\end{tabular}
\begin{tablenotes}
\item \textit{Note: } 
\item $(k, r)$ counts observations favoring $H_1$ and $H_R$. The uniform-weights Bayes factor spreads the rival's claim evenly over the values of $\theta$ at or below one half. The Jeffreys-weights Bayes factor instead spreads both theories' claims under Beta(1/2, 1/2) rescaled onto each half, which puts more weight near the ends of each range than in the middle. The worst-case Bayes factor evaluates the rival's claim at one half, the rival's best case, and is a lower bound on the Bayes factor against the rival. The Pavone \& Stiansen sub-row puts the letter at weight 10.
\end{tablenotes}
\end{threeparttable}}
\endgroup{}

}

\end{table}%

For two studies, Steinsson and Winward, both Bayes factors are above 20:
their observations are strong evidence even against a rival who claims
that the evidence base is split evenly between observations that would
support the working theory if examined and observations that would
support the rival theory if examined. Three studies --- Mor,
Hammoud-Gallego and Freier, and Andersen --- are above 20 under the
uniform-weights Bayes factor and below it under the worst-case one. The
uniform-weights Bayes factor gives every value of \(\theta\) at or below
one half the same weight, values near zero as much as a value near one
half. At Andersen's nine of twelve, the probability of that outcome at
values of \(\theta\) near zero is tiny, and those tiny probabilities go
into the denominator of the Bayes factor. Averaging them in with the
rest makes that denominator far smaller than the probability at one half
alone: 0.00710 against 0.05371. Dividing by the smaller denominator is
what puts the number above 20. The worst-case Bayes factor grants the
rival its best case and reports 4.00 for Mor, 14.91 for Hammoud-Gallego
and Freier, and 2.73 for Andersen. The Jeffreys-weights Bayes factor
shows how much the uniform weights themselves contribute: it averages
the probability of each study's observations over the same two ranges,
under Beta\((0.5, 0.5)\) rescaled onto each half in place of the uniform
weights. Those weights put more weight at both ends of each half, and
one of those ends, on each side, is one half: the value of \(\theta\) at
which the rival's claim fits observations favoring the working theory
best and the working theory's claim fits them worst. So the rival's
denominator rises and the working theory's numerator falls, and the
Jeffreys-weights Bayes factor is 13.7 for Mor, 60.3 for Hammoud-Gallego
and Freier, and 9.5 for Andersen, while Steinsson's and Winward's stay
far above 20. The three do not behave alike. For Mor and Andersen the
number is above 20 only under the uniform weights: their observations
give a Jeffreys-weights Bayes factor below 20, and a worst-case Bayes
factor below 20 as well. A reader who uses the Jeffreys weights instead,
or who grants the rival its best case, does not conclude against the
rival from those observations. Hammoud-Gallego and Freier's observations
give a Bayes factor above 20 under both weightings, and one below 20
only when the rival is granted its best case. Beside both numbers we
therefore report the separation. Mor and Andersen both need
\(g^{\star} = 0.123\), a working theory claiming at least 62.3 percent
of its evidence, because in both studies the count of observations
supporting the working theory exceeds the count supporting the rival by
six, as nine exceeds three.\footnote{The separated Bayes factor, in
  which each theory is evaluated at the value of \(\theta\) in its claim
  nearest an even split, depends on a study's two counts only through
  their difference: dividing the probability of those counts at
  \(\theta = 1/2 + g\) by the probability at \(\theta = 1/2 - g\)
  cancels every factor except \(((1+2g)/(1-2g))^{k-r}\), so two studies
  whose counts differ by the same amount have the same separation.}
Hammoud-Gallego and Freier, with a difference of nine, need 0.083.
Pavone and Stiansen's observations give values below 20 under both Bayes
factors, 4.16 under the uniform-weights Bayes factor and 0.83 under the
worst-case one, and the sensitivity analysis says what a reader must
grant to change that: a weight of 4.51 on the smoking-gun letter lifts
the uniform-weights Bayes factor to 20, and a weight of 10, which Pavone
and Stiansen's own account of the letter would support, lifts it to 277
and the worst-case value to 20.5. The worst-case column is smaller than
the uniform-weights column in every row. That ordering is what
evaluating the rival's claim at its best case, rather than averaging
over it, does to the number.

For a Bayes factor above 20, each study's section of the Online
Supplement says how much would have to change before it fell below 20
--- so many changes in coding or interpretation, so much search bias, so
much difference in probative weight --- and what the value becomes under
other weightings on \(\theta\). A Bayes factor below 20 is not a verdict
on the study. It says only that the observations fall short of 20-to-1
evidence for the working theory over the rival. We then report what
would have to hold for those observations to reach 20-to-1: how far
apart the two theories' claims about the evidence base would have to be,
or how many routine observations one contested document would have to be
worth. Each answer is a question about the case, and the questions
differ across the six studies. For Steinsson and Winward every Bayes
factor in the table is above 20, so a critic has to dispute the coding
of the observations themselves. For Mor and Andersen only the
uniform-weights Bayes factor is above 20, and the question becomes
whether historians of Prussia and of Scandinavia would say that at least
62.3 percent of the evidence base, most of it unexamined, supports the
working theory. For Hammoud-Gallego and Freier both averaged Bayes
factors are above 20 and the worst-case one is not, so the question is
whether half the evidence in the published sources they searched would
have favored the rival theory. For Pavone and Stiansen no Bayes factor
reaches 20, and the question is whether one contested letter is worth
several routine observations.

\section{Discussion and conclusion}\label{sec-discussion}

We agree with several scholars that probability-based reasoning about
qualitative data provides a powerful framework for judging hypotheses
about case-specific causal chains
\autocite{barrenechea2019,bennett2008process,fairfield2022,humphreys2023integrated,rohlfing_bayes_2026}.
We do not engage directly with causal inference per se. Instead, we
formalize claims about the process by which some set of evidence (and
not another) reaches the researcher, and such claims or arguments about
process can corroborate a causal argument. Formalizing those arguments
forces us to entertain counterfactuals because the probability of the
observed data under a hypothesis depends on what else could have reached
the researcher. The Bayes factor is a ratio of two such probabilities.

The consequences of leaving that process unspecified have surfaced in a
recent exchange about Bayesian process tracing in practice
\autocite{zaks2021updating,bennett_reply,zaks2022return}.
\textcite{zaks2021updating} argues that the method as currently
practiced ``introduces more bias than it corrects for on numerous
dimensions,'' and \textcite{zaks2022return} reinforces the point that
practitioners still face ``a method without guidelines or guardrails.''
Statisticians raise similar concerns about Bayes factors: the value
depends on the weights a researcher places on the parameter values each
hypothesis covers, and with some weights it can return evidence against
that hypothesis on data that commonsense would count as evidence for it
\autocite{gelman2013bda3,gelman2014understanding,rubin1984bayesianly}.\footnote{Lindley
  \autocite*{lindley1957} was the first to demonstrate this tension.}

Zaks's concern is that a researcher can set the probabilities so that
they favor the intended conclusion. The statisticians' concern is that
the answer moves when the weights move. Our answer to each differs for
the two Bayes factors. The worst-case Bayes factor meets the first
concern for the rival's weights: whatever weights a proponent of the
rival theory holds over the values of \(\theta\) consistent with the
rival's claim, the Bayes factor against the rival theory is at least the
reported value, and the Online Supplement gives the proof. A reviewer
cannot object that we chose the rival's weights to the researcher's
advantage, because the value we report is the smallest the Bayes factor
takes under any weights that proponent could hold. A reviewer can object
to the uniform weights over the values of \(\theta\), which enter the
numerator of both Bayes factors and are a substantive choice. No such
theorem covers the uniform-weights Bayes factor. It is the value the
uniform weights give, not a value that holds whatever weights the
rival's proponent might use: every value of \(\theta\) above one half
carries the same weight, and so does every value at or below one half. A
reader who would use different weights can look at the prior-sensitivity
analysis in the Online Supplement. It gives the value under five
weightings, the uniform weights among them. At nine of twelve the
smallest of the five is 7.35 and the largest is 252.92, and two of the
five are below 20. The model does not claim to describe how qualitative
evidence is actually produced\footnote{The realism question arises for
  many widely used models --- a Poisson likelihood for count data, for
  instance, rarely describes the mechanism behind the counts. Those
  models are used because they are tractable or because their properties
  are well understood. We use ours because every probability in the
  Bayes factor is derived from it rather than assigned by hand.} The
worst-case Bayes factor is built so that it cannot overstate the case
for \(H_1\) against any weights the rival's proponent might hold.

We also think that our specifications make it easier to understand what
Bayes factors are computing in typical process-tracing research. They
separate two things Fairfield and Charman deliberately combine. For them
the probative value of a piece of evidence is the likelihood ratio
itself --- ``the probative value of any given piece of evidence, as
determined by the likelihood ratio, falls on a continuum''
\autocite[103]{fairfield2022} --- so a researcher who judges one
document more telling than another must record that judgment in the
likelihoods. In our model the two enter separately: the binomial model
gives the probability of the counts, and a researcher who thinks one
document is worth ten routine ones gives it weight 10, which changes the
counts and leaves the probabilities alone. While we offer one model and
two ways of turning the rival's claim into one probability, others may
use the same rationale to specify other models, with assumptions that
fit debates about the evidence bases and how evidence reaches a
researcher from those bases in their own setting --- a closed archive,
say, or observations that arrive in clusters. Every probability in the
Bayes factor comes from the model rather than from a judgment entered by
hand, so a reader who doubts a number can name the assumption that
produced it and say what they would put in its place.

With a fully specified Bayes factor, a researcher concludes for one
hypothesis by comparing a Bayes factor to a threshold, and reports
beside that conclusion, and what a reader can argue with, is how much
bias, re-coding, or change of weight would reverse it. This changes the
conversation between author and reviewer from ``how much do you trust
this?'' to ``how much bias, re-coding, or weight change would it take to
change the conclusion?''

The Bayes factors proposed here also have limitations. The worst-case
Bayes factor's lower bound is proved for unbiased draws and for every
set of weights fixed before coding, and the bound holds at weighted
counts. The uniform-weights Bayes factor is the value the uniform
weights give on each theory's range of \(\theta\), and it bounds
nothing. On dependence, the worst-case Bayes factor covers the
dependence that drawing without replacement from a finite collection
creates, but neither Bayes factor covers the clustering qualitative
evidence often carries --- documents from the same archive, quotations
from one interview, reports of one event across different sources. The
Online Supplement computes exactly what such clustering does to the
uniform-weights Bayes factor. The question is unresolved for the
worst-case Bayes factor; a model that builds dependence into the
observation process directly is the subject of a separate paper. The
framework also leaves the coding of each observation --- does it favor
\(H_1\) or \(H_R\)? --- to the researcher's judgment, as it should: case
knowledge is exactly what that judgment draws on. The limitation is that
both Bayes factors inherit that judgment, and two careful readers may
code the same item differently. Here too the sensitivity analysis can
help. We can use the same approach to calculate how many observations
would have to be recoded, or how much a contested weight would have to
change, before the conclusion flips. Therefore, the two readers can ask
whether their disagreement is large enough to matter.

A related issue concerns what counts as one observation. For instance,
splitting a single document into three items changes the counts, and the
Bayes factor with them. The framework does not make that choice for the
researcher, but stating the coding at the item level, as our
applications do, lets a reader merge or split observations and recompute
the Bayes factor. The less splitting, the more the independence
assumption may hold. Finally, our formalization in this first paper on
full probability models requires us to consider one rival theory at a
time, and process tracing often uses a range of rivals.

Each of these limitations names work the same approach can take on:
generative models with more than one rival, models that say which
observations depend on which others, and coding schemes that let one
observation bear on more than one theory. Each names something
researchers already do that this model cannot yet represent. Extending
it would let more researchers set out how strongly their observations
favor one theory together with the reasoning behind that judgment, so
another scholar can take up the argument, dispute a step in it, or build
on it.

\printbibliography

\clearpage
\setcounter{secnumdepth}{5}
\setcounter{section}{0}
\setcounter{subsection}{0}
\setcounter{subsubsection}{0}
\ifcsname c@theorem\endcsname\setcounter{theorem}{0}\fi
\ifcsname c@proposition\endcsname\setcounter{proposition}{0}\fi
\ifcsname c@lemma\endcsname\setcounter{lemma}{0}\fi
\ifcsname c@corollary\endcsname\setcounter{corollary}{0}\fi
\phantomsection
\addcontentsline{toc}{section}{Online Supplement}
\begin{center}
{\LARGE\bfseries Online Supplement\par}
\end{center}
\vspace{1.5em}

\section{Bayes Factors as Summaries of
Evidence}\label{bayes-factors-as-summaries-of-evidence}

This supplement provides formal details for the generative model
introduced in the main paper, for the two Bayes factors computed from
it, and for the application of the method to six published studies.

\subsection{A note on notation}\label{a-note-on-notation}

This supplement follows the main paper in writing \(k\) for the count of
pro-\(H_1\) observations and \(r\) for the count of pro-\(H_R\)
observations. We do depart from the main paper in one place: where the
main paper writes \(p(\cdot)\) for both probabilities and densities,
this supplement reserves \(f(\cdot)\) for densities of continuous
parameters (such as the posterior on \(\theta\) in the binomial model)
and \(p(\cdot)\) for probabilities and pmfs. We apologize for the slight
inconsistency; the supplement uses both kinds of object in close
succession in the binomial proofs, and a visible \(f\)/\(p\) distinction
makes the proofs easier to follow. The two files are otherwise
consistent in notation.

\subsection{What a Bayes Factor
Measures}\label{what-a-bayes-factor-measures}

A Bayes factor compares two claims about the evidence: we ask which
claim gave higher probability to the data the researcher actually saw.
We put each claim into the model and compute one number from it --- the
probability of the observed evidence under that claim --- and the Bayes
factor is the first of those numbers divided by the second. The running
example from the main paper is a researcher who has observed 9 pieces of
evidence supporting elite choice (\(H_1\)) and 3 pieces supporting weak
institutions (\(H_R\)). The one model we develop in this paper gives
that exact \((9, 3)\) split a probability under each theory's claim
about the share of the evidence supporting the working theory, and the
Bayes factor divides the working theory's probability by the rival's.

We never ask which claim is true. We calculate the probability of the
observed \((9, 3)\) split under the working theory's claim and under the
rival's, and the Bayes factor divides the first by the second: it says
which claim made the observed split more probable, and by how much.

The observed counts are fixed: we have seen \(9\) pro-\(H_1\) and \(3\)
pro-\(H_R\). What varies, in the model, is which twelve observations
happened to be made: the model describes a process that could have
delivered any of the thirteen counts, and gives each a probability. That
is what lets us ask how probable the actual count is under each theory's
claim, and it is what Section~\ref{sec-error-rate} is about: how often,
across those repetitions, a researcher who sets the rival aside whenever
the worst-case Bayes factor is 20 or more sets aside a rival whose claim
about how much of the evidence supports the working theory was correct.
The probability of the data under a hypothesis is also called its
``likelihood'': likelihoods are central elements of Bayesian statistical
inference, where they need not represent randomness in the world. Our
goal is to formalize discussions about ``plausible'' or ``implausible''
into discussions of ``probable'' or ``improbable'' so that we can
empower researchers to say things like, ``Even if key rival data were
not observed by me, my substantive interpretation of my observations
would still hold.'' (This would be a result of the sensitivity analysis
we develop below.)

More formally, let \(E\) denote the observed evidence and let \(H_1\)
and \(H_R\) denote two competing hypotheses, each a claim about the
share of the evidence that supports the working theory, and each turned
into a probability of the counts by the one model of observation this
paper specifies. \(E\) is what was actually observed: nine of the twelve
observations support the working theory and three support the rival. The
model treats those twelve as a draw from the larger evidence base that
could have been examined, and the share of that evidence base supporting
the working theory is the one thing the two theories disagree about. A
Bayes factor is the ratio of the two probabilities of \(E\), whatever a
reader believed about the two hypotheses before seeing the evidence:

\begin{equation}\protect\phantomsection\label{eq-bayes-factor}{ \mathrm{BF}_{\W:H_R}(E) = \frac{p(E \mid H_1)}{p(E \mid H_R)}. }\end{equation}

The prior odds over the two hypotheses do not appear in this expression.
Property 1 below says why: a reader's prior odds multiply the Bayes
factor rather than entering it.

\textbf{Interpreting the Bayes factor:} A Bayes factor of 10 means ``the
data are 10 times more probable under model \(H_1\) than under model
\(H_R\).'' A Bayes factor of 1 means both models predict the data
equally well. A Bayes factor of 0.1 means the data favor \(H_R\) over
\(H_1\) by a factor of 10.\footnote{Jeffreys (1961, app. B) proposed a
  widely used scale for interpreting Bayes factors: values between 1 and
  3 provide ``barely worth mentioning'' evidence, 3--10 provide
  ``substantial'' evidence, 10--30 provide ``strong'' evidence, 30--100
  provide ``very strong'' evidence, and values above 100 provide
  ``decisive'' evidence.}

\subsection{What ``fully specified'' means in this
paper}\label{what-fully-specified-means-in-this-paper}

This paper develops one ``fully specified'' probability model and two
Bayes factors from it. By fully specified we mean: every input to the
reported number is stated before the data are coded, and every step from
the inputs to the number is arithmetic a reader can redo. The two Bayes
factors are fully specified in slightly different senses.

The paper and this supplement use two names for one thing. Where the
paper says \emph{weights} over the values of \(\theta\) inside each
theory's range, this supplement says \emph{priors}. The paper uses
weights so that a reader can act on the word without first learning what
a prior distribution is. These weights are prior distributions, and the
proofs below are easier to state in the usual language. Only the name
differs.

The uniform-weights Bayes factor is fully specified as a likelihood plus
a set of weights on each theory's range of \(\theta\). Each observation
supports the working theory with probability \(\theta\), the unknown
share of the evidence that supports it. Before coding, the researcher
states weights over the values of \(\theta\) above one half, and the
default is to give every such value the same weight. The rival's claim
is treated the same way over the values at or below one half. Each side
then averages the probability of the counts over its own range of
\(\theta\). Nothing is elicited observation by observation, and the
weighting is the one input the researcher states.

The worst-case Bayes factor is fully specified by two things the
researcher writes down before coding. The first is the rival's claim: at
most half of the evidence supports the working theory. The second is the
researcher's own weights over the values of \(\theta\) above one half.
We compute the reported value from those two inputs and the counts
alone. The rival's side needs no weights, because we evaluate the
rival's claim at the single value of \(\theta\) most favorable to it.

The two reported numbers are therefore slightly different objects, and
the difference matters for how each may be read. For a reader who adopts
the uniform weights on each theory's range of \(\theta\), the
uniform-weights Bayes factor is the factor by which the evidence shifts
the odds. The worst-case Bayes factor is a bound: when the count
supporting the working theory is at least the count supporting the
rival, then whatever weights a proponent of the rival theory holds over
the values of \(\theta\) consistent with the rival's claim, the Bayes
factor against that proponent is at least the reported value (we prove
this below), and a researcher who treats 20 as the value at which to set
the rival aside errs at most one time in twenty when the rival is right
(we prove this too). When the count supporting the working theory is at
least the count supporting the rival, the uniform-weights Bayes factor
is the larger of the two, and it carries no guarantee about the rival's
weights. The worst-case Bayes factor gives the smaller number there and
carries both guarantees above.

\subsection{Two key properties of Bayes
factors}\label{two-key-properties-of-bayes-factors}

\subsubsection{Property 1: Model priors do not appear inside the Bayes
factor}\label{property-1-model-priors-do-not-appear-inside-the-bayes-factor}

If we assign prior probabilities \(p(\W)\) and \(p(H_R)\) to the two
models, Bayes' rule gives us the posterior odds:

\begin{equation}\protect\phantomsection\label{eq-posterior-odds}{ \underbrace{\frac{p(\W \mid E)}{p(H_R \mid E)}}_{\text{Posterior odds}} =
\underbrace{\frac{p(\W)}{p(H_R)}}_{\text{prior odds}}
\times
\underbrace{\frac{p(E \mid \W)}{p(E \mid H_R)}.}_{\text{Bayes factor}}
}\end{equation}

In words: \textbf{Posterior odds = Prior odds x Bayes factor}. The Bayes
factor captures what the data tell us. The prior odds capture what we
believed before seeing the data. We focus on the Bayes factor because it
isolates the evidential contribution of the data, leaving the choice of
prior odds to the researcher. This separation of evidence from prior
belief is what Fairfield and Charman (2022, 76--78) call the
``likelihood ratio'' approach to assessing evidential weight.

\subsubsection{Property 2: A Bayes factor needs a probability for the
observations under each
hypothesis}\label{property-2-a-bayes-factor-needs-a-probability-for-the-observations-under-each-hypothesis}

To compute a Bayes factor, we must specify, for each hypothesis, a
probability model over the observations. Fairfield and Charman (2022)
ask the researcher to state the ratio \(p(E \mid H_1) / p(E \mid H_R)\)
directly, one observation at a time, so the two probabilities never have
to be written down separately. In this paper we make the probability
model explicit, because doing so lets us enrich discussions about
evidence with sensitivity analysis and the other benefits of
formalization developed throughout the paper.

\section{The Model and the Uniform-Weights Bayes
Factor}\label{the-model-and-the-uniform-weights-bayes-factor}

The model represents the evidence-generating process as a sequence of
independent draws, each supporting \(H_1\) with probability \(\theta\)
and \(H_R\) with probability \(1 - \theta\), where \(\theta\) is the
share of the evidence the researcher could have examined that supports
the working theory. We assume \(\theta>1/2\) under \(H_1\) and
\(\theta\leq 1/2\) under \(H_R\), and each theory carries a density over
its own range, \(f(\theta\mid H_1)\) on \((1/2, 1)\) and
\(f(\theta\mid H_R)\) on \((0, 1/2]\). If we collect \(N\) pieces of
evidence, \(k\) of them supporting \(H_1\), the uniform-weights Bayes
factor is
\[\text{BF}=\dfrac{\displaystyle\int_{0.5}^1 \theta^k(1-\theta)^{N-k}\, f(\theta\mid H_1)\,d\theta}{\displaystyle\int_{0}^{0.5} \theta^k(1-\theta)^{N-k}\, f(\theta\mid H_R)\,d\theta},\]
the binomial coefficient having cancelled. Our default gives \(\theta\)
uniform weight within each range, and
Section~\ref{sec-choosing-distributions} says what other choices mean.

This section derives the posterior on \(\theta\), says what the uniform
weights within each range represent and why we do not take the worst
case on the working theory's range as well as the rival's, and develops
prior sensitivity through the fact, proved below as
Theorem~\ref{thm-additive}, that a Beta prior on \(\theta\) acts exactly
like a set of background observations added to the ones actually made.
The next section keeps the same model and the same numerator and
replaces the rival's average with the rival's best case.

\subsection{\texorpdfstring{Posterior on \(\theta\) under equal weights
within each
range}{Posterior on \textbackslash theta under equal weights within each range}}\label{posterior-on-theta-under-equal-weights-within-each-range}

In the model, each observation is an independent draw from the evidence
the researcher could have examined
(Section~\ref{sec-binomial-derivation} says what \(\theta\) is a share
of), with probability \(\theta\) of supporting \(H_1\) and
\((1-\theta)\) of supporting \(H_R\). We assume independence as a
working idealization; in process tracing, multiple observations from one
archive or one informant are typically dependent, and the model treats
each as a fresh draw. If the total number of draws is fixed and equals
\(N\), then the probability of obtaining exactly \(k\) observations
favoring \(H_1\) is
\[p(k~|~\theta)=\binom{N}{k}\theta^k(1-\theta)^{N-k}.\] We assume that,
given \(H_1\), \(\theta\) is uniformly distributed on \((0.5,1)\), so
its density is \[f(\theta\mid H_1)=\begin{cases}
2,&  0.5<\theta<1\\
0,& \text{otherwise,}
\end{cases}\] and that, given \(H_R\), \(\theta\) is uniformly
distributed on \((0, 0.5]\): \[f(\theta\mid H_R)=\begin{cases}
2,&  0<\theta\leq 0.5\\
0,& \text{otherwise.}
\end{cases}\] The density is 2 on each range because a uniform density
on an interval of width one half must be 2 to integrate to one.

If we also give the two theories equal weight before the evidence,
\(p(H_1)=p(H_R)=1/2\), the two ranges combine into one prior on
\(\theta\) across the whole interval:
\[f(\theta)=f(\theta\mid H_1)\cdot p(H_1)+f(\theta\mid H_R)\cdot p(H_R)=\begin{cases}0\cdot \frac{1}{2}+2\cdot \frac{1}{2},&0< \theta\leq 0.5\\
2\cdot\dfrac{1}{2}+0\cdot\frac{1}{2},&0.5<\theta<1, \end{cases}\] which
equals 1 for \(0<\theta<1\). So \(\theta\) has a uniform distribution on
\((0,1)\), \(f(\theta)=1\). The equal weight on the two theories does
only this job, combining the two ranges into one density; the Bayes
factor itself does not need it, because the prior odds on the two
theories cancel from the ratio of the two averaged probabilities
(Property 1 above).

We need the posterior on \(\theta\) for two later results. Under the
uniform weights the prior on \(\theta\) is flat across the whole
interval, so the posterior at each value of \(\theta\) is the
probability of the counts at that value divided by one constant;
dividing the posterior mass above one half by the mass at or below it
cancels that constant, and what is left is the ratio of the two averaged
probabilities, which is how the replication code computes the
uniform-weights Bayes factor. And the prior-sensitivity result of
Section~\ref{sec-prior-sens} is a statement about this posterior. The
following theorem gives it:

\begin{theorem}[]\protect\hypertarget{thm-posterior}{}\label{thm-posterior}

Let \(E\) be the set of observed evidence, with \(N\) pieces of
evidence, \(k\) of them supporting \(H_1\). Assume that each observation
is independent and supports \(H_1\) with probability \(\theta\). If the
prior density function is \(f(\theta)=1\), \(0\leq \theta\leq 1\), then
the posterior density function of \(\theta\) given the set of evidence
\(E\) is
\[f(\theta \mid E)=(N+1)\binom{N}{k}\theta^{k}(1-\theta)^{N-k},\quad 0\leq\theta\leq 1.\]

\end{theorem}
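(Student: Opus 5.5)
The plan is to apply Bayes' rule for a continuous parameter and then evaluate the normalizing constant in closed form with the Beta integral. Under the model stated above, the likelihood of the observed counts given \(\theta\) is the binomial probability
\[
p(k\mid\theta)=\binom{N}{k}\theta^{k}(1-\theta)^{N-k}.
\]
With the flat prior \(f(\theta)=1\) on \([0,1]\), Bayes' rule gives
\[
f(\theta\mid E)=\frac{p(k\mid\theta)f(\theta)}{\int_0^1 p(k\mid u)f(u)\,du}
=\frac{\theta^{k}(1-\theta)^{N-k}}{\int_0^1 u^{k}(1-u)^{N-k}\,du}.
\]
The coefficient \(\binom{N}{k}\) cancels between numerator and denominator. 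The only substantive task is therefore to show that
\[
\int_0^1 u^{k}(1-u)^{N-k}\,du=\frac{1}{(N+1)\binom{N}{k}}=\frac{k!\,(N-k)!}{(N+1)!}.
\]

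This Beta-function identity is the main step. I would prove it self-containedly by induction on \(k\) via integration by parts. Write \(I(k,m)=\int_0^1 u^{k}(1-u)^{m}\,du\). The base case is \(I(0,m)=1/(m+1)\). Integrating by parts, differentiating \(u^k\) and integrating \((1-u)^m\), gives \(I(k,m)=\frac{k}{m+1}I(k-1,m+1)\); the boundary terms vanish for \(k\ge1\). Iterating down to \(k=0\) yields
\[
I(k,m)=\frac{k!\,m!}{(k+m+1)!},
\]
and setting \(m=N-k\) gives the claim.

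As an alternative check, I could use the paper's own observation that under uniform weights each of the \(N+1\) counts is equally probable. That statement says exactly that \(\binom{N}{k}I(k,N-k)=1/(N+1)\). However, it is itself usually proved by this same integral (or by the order-statistics "\(N+1\) uniform points" argument), so I would present the integration-by-parts route as the proof. The order-statistics argument would appear only as a remark explaining why the answer does not depend on \(k\).

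Substituting the constant back into the Bayes' rule expression gives \(f(\theta\mid E)=(N+1)\binom{N}{k}\theta^{k}(1-\theta)^{N-k}\) on \([0,1]\), which is the Beta\((k+1,N-k+1)\) density. I would also note that the independence assumption is used only to justify the binomial likelihood. The equal prior weights on \(H_1\) and \(H_R\) are used only to combine the two half-range uniform densities into \(f(\theta)=1\), as shown just before the theorem. The value of \(f\) at the single point \(\theta=1/2\), or at the endpoints, does not affect the integral.
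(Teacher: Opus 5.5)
Your proof is correct and follows the paper's route: Bayes' theorem with the flat prior, recognition of the Beta$(k+1,N-k+1)$ kernel, and evaluation of the normalizing constant $B(k+1,N-k+1)=k!\,(N-k)!/(N+1)!$ by repeated integration by parts. Your explicit recursion $I(k,m)=\frac{k}{m+1}I(k-1,m+1)$ spells out the step the paper only sketches in a parenthetical.
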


\begin{proof}
By Bayes' theorem, we have
$$f(\theta~|~E)=\dfrac{f(\theta)p(E \mid \theta)}{p(E)}=\dfrac{\binom{N}{k}\theta^k(1-\theta)^{N-k}}{p(E)}.$$
The value of $\dfrac{\binom{N}{k}}{p(E)}$ does not depend on $\theta$, so $f(\theta \mid E) = C\,\theta^{k}(1-\theta)^{N-k}$ for some constant $C$, which is the shape of a Beta$(k+1,N-k+1)$ distribution. The requirement $\int_0^1 f(\theta \mid E)\,d\theta = 1$ fixes $C$: it is one divided by $\int_0^1 \theta^{k}(1-\theta)^{N-k}\,d\theta$, and that integral is the Beta function $B(k+1, N-k+1)$, which for whole numbers equals $k!\,(N-k)!/(N+1)!$ (integrating by parts $k$ times lowers the exponent on $\theta$ to zero one step at a time). So $C = (N+1)!/\bigl(k!\,(N-k)!\bigr) = (N+1)\binom{N}{k}$. Hence,
$$f(\theta \mid E)=(N+1)\binom{N}{k}\theta^{k}(1-\theta)^{N-k},\quad 0\leq\theta\leq 1.$$
\end{proof}

\subsection{\texorpdfstring{What \(\theta\) is a share of, and what the
uniform weights
are}{What \textbackslash theta is a share of, and what the uniform weights are}}\label{sec-binomial-derivation}

\(\theta\) is the share of the evidence the researcher could have
examined that supports \(H_1\), and that evidence takes one of two
forms. When it is open-ended --- a pool of possible observations that
keeps growing as interviews continue or as documents appear ---
\(\theta\) is the proportion of that pool supporting \(H_1\). When it is
finite --- a closed archive of \(M\) documents, \(K_1\) of which support
\(H_1\) --- \(\theta\) is \(K_1/M\). In both cases the model treats each
observation as an independent draw with probability \(\theta\) of
supporting \(H_1\); Section~\ref{sec-bounded-pmf} says what changes when
the draws are made without replacement from the finite collection, and
Theorem~\ref{thm-error-rate} holds at every \(M\). The researcher never
states \(M\), because each theory claims a share of the evidence and
neither claims a size.

The uniform weights within each theory's range --- density 2 on
\((1/2, 1)\) under \(H_1\) and density 2 on \((0, 1/2]\) under \(H_R\),
which combine into the uniform prior \(f(\theta) = 1\) on \([0, 1]\)
when the two theories are given equal weight --- are the weights each
theory's proponent states before coding, over the values of \(\theta\)
that theory claims. They are a claim, not a record of what the proponent
does not know: a working-theory proponent who states the uniform weights
says that the working theory stakes as much on \(\theta\) being six
tenths as on its being nine tenths, and one who believes the theory
implies a large \(\theta\) should state weights that say so. The main
paper lists these weights among the assumptions its sensitivity analysis
examines.

The uniform weighting within a range is our default. Other defaults
exist. The main competitor is the Jeffreys prior
\(\text{Beta}(1/2, 1/2)\), which puts more weight at both ends of the
range; a reviewer who prefers it can pose the question through the table
of Section~\ref{sec-sensitivity-prior} below, where Beta\((1/2, 1/2)\)
rescaled onto each half is one of the pairs. Rescaling the shape onto
each half is our construction rather than Jeffreys's rule applied to
\(\theta\) restricted to that half, which would instead keep the part of
\(\text{Beta}(1/2, 1/2)\) lying in the half and renormalize it. We adopt
the uniform weights because they are the simplest weighting that favors
no value of \(\theta\) in the range over any other, and because,
combined, they can be understood as pseudo-observations via
Theorem~\ref{thm-additive}: a \(\text{Beta}(\alpha, \beta)\) prior with
\(\alpha, \beta \geq 1\) is equivalent to entering the analysis with
\(\alpha + \beta - 2\) pseudo-observations of which \(\alpha - 1\)
favored \(H_1\), and the uniform prior corresponds to zero
pseudo-observations.

On each side of the uniform-weights Bayes factor we average the
probability of the counts over that theory's range of \(\theta\), under
the weights stated for that range, which is how Kass and Raftery (1995)
define a Bayes factor when a hypothesis covers a range of values rather
than naming one. The worst-case Bayes factor of the next section shares
its numerator with the uniform-weights one --- the researcher's stated
weights over the values of \(\theta\) above one half do the same
averaging job on both --- and replaces the rival's average with the
rival's single best case, which is what makes the second number a proved
lower bound rather than the Bayes factor itself.

\subsection{\texorpdfstring{Choosing the distributions of \(\theta\)
within each
range}{Choosing the distributions of \textbackslash theta within each range}}\label{sec-choosing-distributions}

Within each range we have given \(\theta\) the uniform density,
\(f(\theta\mid H_1)=2\) for \(0.5<\theta<1\) and \(f(\theta\mid H_R)=2\)
for \(0<\theta\leq 0.5\), but other distributions are possible.

To choose a distribution, we need to understand what it means for
\(\theta\) to be near 0.5 or near 0 and 1. When the working theory is
right, \(\theta\) is between 0.5 and 1. If \(\theta\) is near 0.5, then
a researcher should find evidence supporting both theories and only a
little more favoring \(H_1\); if it is near 1, then almost all the
evidence found should favor \(H_1\). That is a substantive question
about the world: in a world where the working theory (or the rival) is
right, how likely is it to find evidence favoring the other one?

We suggest rescaling the Beta\((a,b)\) family to obtain a distribution
on each range. Beta\((a,b)\) is a distribution for \(x\) in \((0,1)\)
with density function
\[f_X(x) = \frac{x^{a-1}(1-x)^{b-1}}{B(a,b)}, \qquad 0 < x < 1,\] where
\(B(a,b)=\displaystyle\int_{0}^1 t^{a-1}(1-t)^{b-1}\,dt\). If we write
\(\theta = 0.5 + 0.5X\), then \(X = 2\theta - 1\), and a stretch of
values for \(X\) becomes a stretch half as wide for \(\theta\).
Squeezing the same probability into half the width doubles the density,
so the density for \(\theta\) is twice the density for \(X\) evaluated
at \(2\theta - 1\):
\(2 f_X(2\theta - 1) = \frac{2}{B(a,b)}(2\theta-1)^{a-1}(2-2\theta)^{b-1}\).
Writing \((2-2\theta)^{b-1}\) as \(2^{b-1}(1-\theta)^{b-1}\) and
multiplying that \(2^{b-1}\) by the leading 2 gives \(2^{b}\), so the
density on \((0.5,1)\) is \[
f(\theta\mid a,b) = \frac{2^{b}}{B(a,b)}(2\theta-1)^{a-1}(1-\theta)^{b-1}, \qquad 0.5 < \theta < 1.
\] Conversely, if we write \(\theta = 0.5X\), then \(X = 2\theta\), the
factor is again 2, and \(2^{a-1}\) comes out of \((2\theta)^{a-1}\),
giving a distribution on \((0,0.5)\) with density function \[
f(\theta\mid a,b) = \frac{2^{a}}{B(a,b)}\theta^{a-1}(1-2\theta)^{b-1}, \qquad 0 < \theta < 0.5.
\] If \(a>b\), then the distribution is concentrated on the right of the
range; if \(a<b\), then it is concentrated on the left; and if \(a=b\)
it is symmetric about the middle of the range, with a shape that depends
on the size of \(a\): concentrated in the middle when \(a=b\) is above
one, uniform when \(a=b=1\), and rising toward both ends of the range
when \(a=b\) is below one, as Beta\((1/2, 1/2)\) does.

For example, a researcher who thinks that under each theory it is still
likely to find evidence supporting the other one could rescale a
Beta\((1/2,1)\) distribution for \(H_1\) and Beta\((1,1/2)\) for
\(H_R\), as shown in Figure~\ref{fig-beta1}: both densities pile up next
to one half. Conversely, a researcher who thinks that in a world where
one theory is right there should be almost no evidence supporting the
other one could rescale a Beta\((1,1/2)\) distribution for \(H_1\) and
Beta\((1/2,1)\) for \(H_R\), as shown in Figure~\ref{fig-beta2}: both
densities pile up at the far ends.

\begin{figure}

\centering{

\pandocbounded{\includegraphics[keepaspectratio]{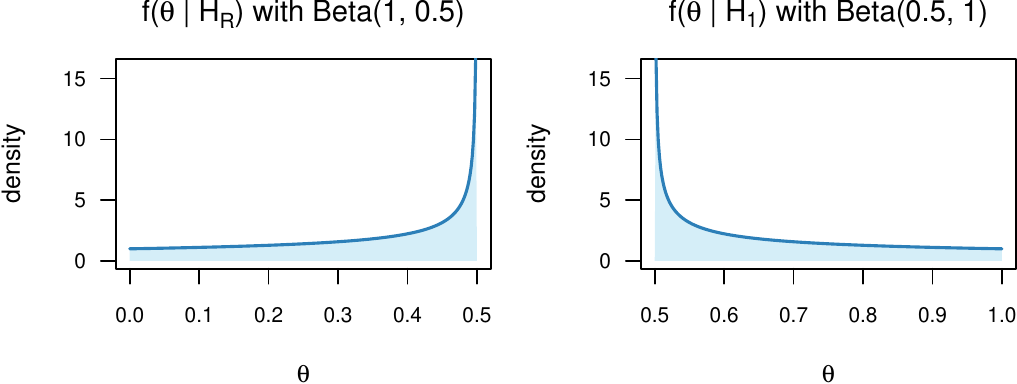}}

}

\caption{\label{fig-beta1}Rescaled Beta(1, 1/2) density for \(\theta\)
under \(H_R\) (left) and rescaled Beta(1/2, 1) density under \(H_1\)
(right): a researcher who expects evidence for both theories under
either.}

\end{figure}%

\begin{figure}

\centering{

\pandocbounded{\includegraphics[keepaspectratio]{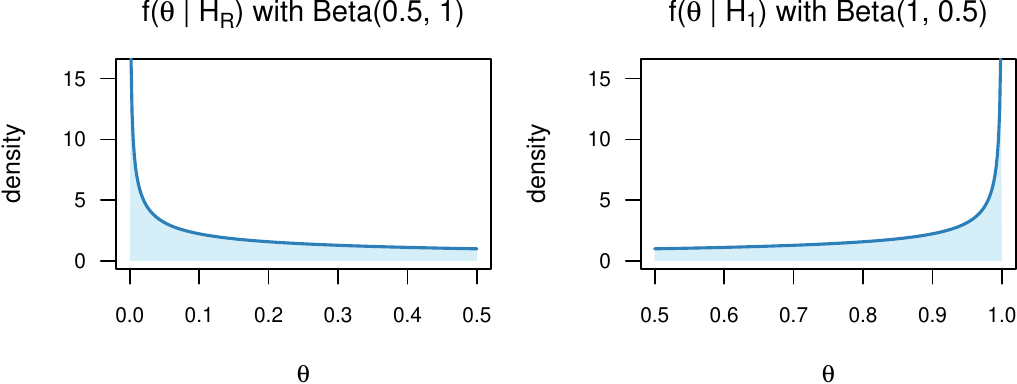}}

}

\caption{\label{fig-beta2}Rescaled Beta(1/2, 1) density for \(\theta\)
under \(H_R\) (left) and rescaled Beta(1, 1/2) density under \(H_1\)
(right): a researcher who expects almost no evidence for the other
theory under either.}

\end{figure}%

Each pair is a claim about the world, so a researcher who states one has
said something a reader can dispute, and the Bayes factor moves with the
choice. The researcher in country \countrylabel made twelve
observations, nine of them favoring elite choice. Weights that expect
evidence for both theories in a world where either one holds
(Figure~\ref{fig-beta1}) give 7.35. Weights that expect almost no
evidence for the other theory (Figure~\ref{fig-beta2}) give 28.08. The
uniform-weights Bayes factor, 20.67, falls between them. Those two
values fall on opposite sides of the threshold of 20, so a researcher
who expects evidence for both theories would not set weak-institutions
theory aside on these counts and one who expects almost none for the
other theory would.

Weights that expect almost no evidence for the other theory do not raise
the Bayes factor by making nine of twelve fit elite-choice theory
better. On the working theory's side they do the opposite, and splitting
the ratio into the two averages it divides shows why. Under the uniform
weights the working theory's average is 0.14675 and the rival's is
0.00710. When the researcher expects almost no evidence for the other
theory, the working theory places its weight on values of \(\theta\)
near one, where nine of twelve is unlikely, and its average falls to
0.10630. The rival places its weight on values near zero, where nine of
twelve is more unlikely still, and the rival's average falls to 0.00379.
Both averages fit the counts worse than under the uniform weights. The
ratio rises because the rival's average falls by the larger factor.

\subsection{Why we do not take the worst case on the working theory's
range
too}\label{why-we-do-not-take-the-worst-case-on-the-working-theorys-range-too}

The worst-case Bayes factor of the next section takes the worst case
over the rival's weights: whatever weights the rival's proponent states
over the values of \(\theta\) consistent with the rival's claim, the
Bayes factor against that proponent is at least the reported value. One
might ask for the same treatment of the working theory's side --- report
the worst case over the researcher's weights on \(\theta\) rather than
the value at the uniform weights. We do not, for two reasons. Taking the
worst case on the rival's side alone gives back the worst-case Bayes
factor of the next section. Taking it on the working theory's range as
well gives a number near zero for every study with even one observation
favoring the rival. The next two paragraphs show each.

Taking the worst case on the rival's side alone reproduces the
worst-case Bayes factor. When the counts favor the working theory, the
probability of the counts rises with \(\theta\) across the rival's whole
range, so the average of that probability under the proponent's weights
is largest when all of it sits at \(\theta = 1/2\). Dividing the
researcher's uniform-weights numerator by that largest denominator gives
exactly the worst-case Bayes factor of the next section --- 2.73 at nine
of twelve against the uniform-weights Bayes factor's 20.67 --- so the
paper already reports the one-sided worst case, under its own name.

Taking the worst case on the working theory's range as well is
degenerate. The probability of the counts under weights concentrated at
a single value of \(\theta\) falls to zero as \(\theta\) approaches one,
because \(\theta = 1\) leaves the three rival-favoring observations no
chance of appearing. A two-sided worst case would therefore report a
value near zero for every study that has even one observation favoring
the rival, whatever the counts say, because it would compare the working
theory's claim at the value of \(\theta\) that fits the counts worst
with the rival's claim at the value that fits them best, a comparison
neither theory asserts. That answers no question a researcher asks.

The uniform-weights Bayes factor therefore does a different job. It is
the value the weights give on each theory's range of \(\theta\), for a
reader who adopts them, and what a critic can check is not a bound but
two sensitivity exercises: the choice of distributions within each
range, which Section~\ref{sec-choosing-distributions} has just made, and
the prior-sensitivity tables of Section~\ref{sec-sensitivity-prior}
below, which trace how the reported value moves as the weighting tilts
toward the rival.

\subsection{Prior sensitivity for the uniform-weights Bayes
factor}\label{sec-prior-sens}

The uniform prior \(f(\theta) = 1\) on \(\theta\) is a working
assumption, not a fact about the world. Two questions arise. First, how
does the Bayes factor change under a different prior? Second, what does
varying the prior actually represent? The Beta family answers both
questions cleanly. Any \(\text{Beta}(\alpha, \beta)\) prior with
\(\alpha, \beta \geq 1\) can be read as \(M = \alpha + \beta - 2\)
pseudo-observations, of which \(j = \alpha - 1\) favored \(H_1\).
Varying the prior is therefore equivalent to entering the study with a
set of background cases, and the additive property below shows that
combining the new evidence \(E\) with this background information \(I\)
yields the same posterior the researcher would have obtained from
observing all \(N + M\) items together. This makes prior sensitivity
transparent: the analyst can ask ``how many prior cases of which type
would I need to assume to overturn the conclusion?'' --- a question with
a concrete answer rather than a philosophical one.

We can also think of this section as accommodating genuine prior data. A
researcher who has access to \(M\) previously studied cases, \(j\) of
which favored \(H_1\), can encode that information directly via the
prior \(\text{Beta}(j+1, M-j+1)\). Theorem~\ref{thm-additive} covers
both: it does not ask whether the \(M\) items are real prior cases or a
way of stating a weighting.

We explain the reason for this property below, after an example. Suppose
that we have a set of prior information \(I\) consisting of \(5\) pieces
of information, 3 of which support \(H_1\). Now suppose that the new
evidence \(E\) collected in our study consists of 12 observations, 9 of
which support \(H_1\). Then, as we will show soon, we can compute
\(p(H_1 \mid  E,I)\) and \(p(H_R \mid  E,I)\) with our previous model as
if we had started with no background observations at all and obtained a
single combined set \(E\cup I\) with a total of 17 pieces of evidence (5
in \(I\) and 12 in \(E\)), 12 of which support \(H_1\) (3 in \(I\) and 9
in \(E\)). With these numbers, we obtain:

\[p(H_1 \mid E,I)\approx 0.9519
,\qquad p(H_R \mid E,I)\approx 0.0481,\qquad \frac{p(H_1 \mid E,I)}{p(H_R \mid E,I)}\approx 19.7787.\]

The third number is the posterior odds given the combined evidence, not
a Bayes factor: the background information \(I\) tilts the prior on
\(\theta\) toward the working theory, so the prior odds on the two
hypotheses are no longer one, and the two objects separate. The
prior-sensitivity subsection of the sensitivity analysis below reports
the two objects in separate columns and says which one the decision
threshold governs.

Let us see why the additive property holds. Bayes' theorem allows us to
say that the posterior density function of \(\theta\) given the set of
background information \(I\) and the new set of evidence \(E\) is
\[f(\theta \mid E,I)=\dfrac{f(\theta \mid I)p(E \mid \theta,I)}{p(E \mid I)}.\]

Suppose that the new evidence \(E\) is independent of the background
information \(I\), so the probability of observing a piece of evidence
supporting \(H_1\) only depends on the parameter \(\theta\). Then,
\(p(E \mid \theta,I)=p(E \mid \theta)\) is the same as the probability
we computed previously:
\[p(E \mid \theta)=\binom{N}{k}\theta^k(1-\theta)^{N-k}.\] Assume that
the background information \(I\) can be represented as a set of \(M\)
observations, \(j\) of them supporting \(H_1\). Then we can calculate
\(f(\theta \mid  I)\) with the same model of the previous subsection,
obtaining
\[f(\theta \mid  I) = (M+1)\binom{M}{j}\theta^j(1-\theta)^{M-j},\quad 0\leq \theta\leq 1.\]
With these assumptions, we have the following theorem:

\begin{theorem}[]\protect\hypertarget{thm-additive}{}\label{thm-additive}

Let \(E\) be the set of observed evidence, with \(N\) pieces of
evidence, \(k\) of them supporting \(H_1\). Assume that each observation
is independent and supports \(H_1\) with probability \(\theta\). If the
prior density function is
\[f(\theta \mid  I) = (M+1)\binom{M}{j}\theta^j(1-\theta)^{M-j},\quad 0\leq \theta\leq 1,\]
then the posterior density function of \(\theta\) given the set of
evidence \(E\) is
\[f(\theta \mid E,I)=((N+M)+1)\binom{N+M}{k+j}\theta^{k+j}(1-\theta)^{(N+M)-(k+j)},\quad 0\leq\theta\leq 1.\]

\end{theorem}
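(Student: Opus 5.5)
The plan is to mirror the proof of Theorem~\ref{thm-posterior}: apply Bayes' theorem with the stated prior and show that the posterior is proportional to a Beta kernel. Then fix the normalizing constant by the Beta-function identity already used there.

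First, by Bayes' theorem and the assumed conditional independence of \(E\) from \(I\) given \(\theta\), so that \(p(E\mid\theta,I)=p(E\mid\theta)\),
\[
f(\theta\mid E,I)=\frac{f(\theta\mid I)\,p(E\mid\theta)}{p(E\mid I)} .
\]
Substituting the prior \((M+1)\binom{M}{j}\theta^{j}(1-\theta)^{M-j}\) and the binomial likelihood \(\binom{N}{k}\theta^{k}(1-\theta)^{N-k}\), every factor that does not involve \(\theta\) goes into one constant. The powers combine as
\[
f(\theta\mid E,I)=C\,\theta^{k+j}(1-\theta)^{(N+M)-(k+j)},\qquad 0\le\theta\le 1,
\]
for some \(C>0\) that does not depend on \(\theta\). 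Here \(p(E\mid I)=\int_0^1 f(\theta\mid I)p(E\mid\theta)\,d\theta\) is positive and finite, so the division is legitimate.

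Second, the constant is fixed by requiring the density to integrate to one. We have \(C=1/B(k+j+1,(N+M)-(k+j)+1)\). As in the proof of Theorem~\ref{thm-posterior}, for whole-number arguments this Beta function equals \((k+j)!\,((N+M)-(k+j))!/(N+M+1)!\). Hence \(C=(N+M+1)\binom{N+M}{k+j}\), which is the stated formula. Equivalently, the result is Theorem~\ref{thm-posterior} applied to the pooled sample of \(N+M\) observations with \(k+j\) favoring \(H_1\). Indeed \(f(\theta\mid I)\) is itself the uniform-prior posterior from the \(M\) background items, so sequential updating equals pooled updating.

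The only step needing care is the first: the conditional-independence assumption that licenses \(p(E\mid\theta,I)=p(E\mid\theta)\). It is stated as a hypothesis just before the theorem, so I would cite it explicitly there. I would also note that the constant must be determined by normalization rather than computed from \(p(E\mid I)\) directly, which avoids evaluating that integral.
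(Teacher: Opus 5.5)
Your proposal is correct and follows essentially the same route as the paper's proof. Both use Bayes' theorem with \(p(E\mid\theta,I)=p(E\mid\theta)\), collect the \(\theta\)-dependence into the Beta\((k+j+1,(N+M)-(k+j)+1)\) kernel, and fix the constant \((N+M+1)\binom{N+M}{k+j}\) by normalization rather than by evaluating \(p(E\mid I)\); your explicit Beta-function arithmetic and pooled-sample remark only spell out what the paper states more briefly.
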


\begin{proof}
By Bayes' theorem, we have \begin{equation}
\label{eq:bayes}
f(\theta~|~E,I)=\dfrac{f(\theta \mid I)p(E \mid \theta,I)}{p(E \mid I)}
\end{equation} We had
\[f(\theta \mid  I) = (M+1)\binom{M}{j}\theta^j(1-\theta)^{M-j},\quad 0\leq \theta\leq 1\]
and \[p(E \mid \theta)=\binom{N}{k}\theta^k(1-\theta)^{N-k}.\] Hence,
the numerator in \eqref{eq:bayes} is \begin{align*}
f(\theta \mid I)p(E \mid \theta,I) &= (M+1)\binom{M}{j}\theta^j(1-\theta)^{M-j}\cdot \binom{N}{k}\theta^k(1-\theta)^{N-k}\\
&= C \cdot \theta^{k+j}(1-\theta)^{(N+M)-(k+j)},
\end{align*}

where \(C\) does not depend on \(\theta\). As the denominator
\(p(E \mid I)\) doesn't depend on \(\theta\) either, then
\[f(\theta~|~E,I)=C_2\, \theta^{k+j}(1-\theta)^{(N+M)-(k+j)}.\] The
factor \(\theta^{k+j}(1-\theta)^{(N+M)-(k+j)}\) corresponds to a
Beta\((k+j+1,(N+M)-(k+j)+1)\) distribution. Since
\(\int_{0}^1 f(\theta~|~E,I) d\theta=1\), then \(C_2\) is unique and
corresponds to the normalizing constant of the
Beta\((k+j+1,(N+M)-(k+j)+1)\) distribution, which is
\(((N+M)+1)\binom{N+M}{k+j}\), so
\[f(\theta \mid E,I)=((N+M)+1)\binom{N+M}{k+j}\theta^{k+j}(1-\theta)^{(N+M)-(k+j)},\quad 0\leq\theta\leq 1.\]
\end{proof}

The posterior density function in Theorem~\ref{thm-additive} has the
same form as the posterior density function in
Theorem~\ref{thm-posterior}, replacing \(N\) with \(N+M\) and \(k\) with
\(k+j\), which correspond, respectively, to the total number of
observations and the number of observations supporting \(H_1\) in
\(E\cup I\).

\section{The Worst-Case Bayes Factor}\label{sec-bounded}

The uniform-weights Bayes factor of the previous section spreads the
rival's claim evenly over every value of \(\theta\) consistent with it.
A critic who holds the rival theory may mean something narrower:
whichever value within that claim makes the counts most probable. This
section develops the second Bayes factor the paper computes from the
same model, the worst-case Bayes factor, which evaluates the rival's
claim at that value.

The name says how the rival's claim is treated: at the value of
\(\theta\) worst for the researcher, which is the rival's best case. Two
things follow, and we prove both. It is a lower bound on the Bayes
factor whenever the count supporting the working theory is at least the
count supporting the rival: for such counts, whatever weights a
proponent of the rival theory holds over the values of \(\theta\)
consistent with the rival's claim, the Bayes factor against that
proponent is at least the reported value (Proposition~\ref{prp-bound}).
And it comes with a bound on how often a researcher who decides with it
makes a mistake: a researcher who treats 20 as the value at which to set
the rival aside errs at most one time in twenty when the rival is right,
and the proof holds whether the observations are independent draws or
draws without replacement from a finite collection of any size, so the
researcher is never asked how large the evidence base is
(Theorem~\ref{thm-error-rate}). A third result explains what the
worst-case Bayes factor cannot do: whatever weights are stated before
coding, the worst-case Bayes factor at nine supporting observations of
twelve is at most 4.81 (Proposition~\ref{prp-ceiling}). For counts that
close, one option, and not the only one, is to report a separation
instead, from the separated Bayes factor defined near the end of the
section (Section~\ref{sec-separated}).

\subsection{Setup and notation}\label{sec-bounded-setup}

Our data consist of counts of two types of observations: \(k\)
observations supporting the working theory \(H_1\) and \(r\)
observations supporting the rival theory \(H_R\), with \(N = k + r\)
observations examined in total. We ignore neutral
observations.\footnote{Setting ambiguous observations aside is a choice
  the researcher makes after seeing the documents, so it is a choice a
  reader should be able to inspect. The justification we have in mind is
  this: an observation that is genuinely orthogonal to both theories has
  the same probability under \(H_1\) and under \(H_R\), so it
  contributes a factor of \(1\) to the Bayes factor and can be set aside
  without affecting the conclusion. The harder case is when ``neutral''
  really means ``the coder could not decide.'' That is a stronger claim
  about the coder's information, not about the evidence, and
  Section~\ref{sec-coding-rules} below states the rule we follow for it.}

The parameter \(\theta\) is the same object as in the previous section:
the share of the evidence the researcher could have examined that
supports the working theory. Every claim in this section is a claim
about \(\theta\). When the evidence is a finite collection of \(M\)
documents, \(K_1\) of which support the working theory,
\(\theta = K_1/M\); the researcher knows neither \(M\) nor \(K_1\),
nothing below asks for either, and the error-rate theorem below holds
for every \(M\).

\subsection{Draws without replacement}\label{sec-bounded-pmf}

The model treats each observation as an independent draw. A researcher
who reads documents from a finite collection draws without replacement
instead, and we cover that case in the error-rate theorem below, so we
record here what it changes. If the collection holds \(M\) documents,
\(K_1\) of them supporting the working theory, then drawing \(N\)
documents without replacement gives the probability that exactly \(k\)
of them support the working theory as the \textbf{hypergeometric}
probability

\begin{equation}\protect\phantomsection\label{eq-hypergeometric-supp}{
p(k \mid K_1, M, N) = \frac{\binom{K_1}{k}\binom{M - K_1}{N - k}}{\binom{M}{N}}.
}\end{equation}

The numerator counts the ways to choose \(k\) of the \(K_1\) supporting
documents and \(N - k\) of the \(M - K_1\) opposing ones; the
denominator counts all the ways to choose \(N\) documents from \(M\).

When the collection is large relative to the sample --- when \(N/M\) is
small --- drawing without replacement barely changes the composition of
what remains, and the hypergeometric probability approaches the binomial
probability at the same value of \(\theta\):
\(p(k \mid K_1, M, N) \to \binom{N}{k}\theta^k(1-\theta)^{N-k}\) as
\(M\) grows with \(\theta = K_1/M\) held fixed. The researcher never
evaluates the hypergeometric probability. Both Bayes factors are
computed from binomial probabilities. Its one job in this paper is in
the proof of Theorem~\ref{thm-error-rate}, and there we need only a
direction: the average of the worst-case Bayes factor under draws
without replacement is no larger than its average under independent
draws. Hoeffding (1963) proved that for a quantity that grows by
increasing steps as the count rises, which Section~\ref{sec-error-rate}
shows the worst-case Bayes factor does, the average under draws without
replacement is no larger than under independent draws, and that is what
makes the error rate hold for a finite collection of any size.

\subsection{What the rival claims, and what the researcher
states}\label{sec-share-claim}

Before coding, the researcher writes down two things.

The first is the rival's claim, stated as a claim about \(\theta\): at
most half of the evidence supports the working theory, that is,
\(\theta \le 1/2\). A rival proponent can make this claim without
knowing how much evidence there is, and that matters in practice: none
of the six studies we replicate reports a number of documents, and this
construction never asks for one. One half is not chosen after seeing the
counts; it is the largest value of \(\theta\) consistent with the
rival's claim, fixed in advance, and the error rate proved below holds
at every value consistent with the claim, not only at one half.

The second is a set of weights over the values of \(\theta\) above one
half, the ones consistent with the working theory. The default is to
favor no value above one half over any other, which gives a uniform
prior on those values. Stating weights over \(\theta\) is not an ad hoc
device. Suppose that which documents happened to be drawn carries no
information: every subset of \(N\) documents from the collection has the
same probability of being the sample, whatever the collection's
composition. Diaconis and Freedman (1980) show that a probability model
with that property is settled once we say how probable each possible
composition of the collection is --- how many of the documents support
the working theory, out of how many there are --- and nothing else about
such a model is left to specify. So a researcher whose theory says only
that more than half of the evidence supports it has one thing left to
state: how much weight to put on each value of \(\theta\) above one
half. Every composition of a collection of a given size fixes a value of
\(\theta\), the number of supporting documents divided by that size, and
within a collection of that size the composition and \(\theta\)
determine each other. What the researcher states is weights over
\(\theta\), and Theorem~\ref{thm-error-rate} below holds at every
collection size, so no one ever asks for one.

After coding, the researcher divides two numbers, both built from the
binomial probability of the counts --- in the running example, nine
observations supporting the working theory out of twelve examined. The
numerator is that probability averaged over the values of \(\theta\)
consistent with the working theory, under the weights stated before
coding, the same numerator the uniform-weights Bayes factor uses. The
denominator is that same probability at one value of \(\theta\), exactly
one half.

The division is by the probability at \(\theta = 1/2\) because that
value is the rival's best case. Every value of \(\theta\) from zero up
to one half is consistent with the rival's claim. Among those, the
largest one makes the counts most probable, because the more of the
evidence supports the working theory, the easier it is to draw nine
supporting observations out of twelve, and it keeps getting easier until
\(\theta\) reaches nine twelfths --- far past anything consistent with
that claim. An evenly split evidence base is the closest thing to the
observed counts that is still consistent with the rival's claim. So the
researcher grants the rival the benefit of the doubt, evaluating the
rival's claim at the value of \(\theta\), among those consistent with
it, that fits the data best.

Under the uniform weights the worst-case Bayes factor has a closed form
in binomial coefficients, so it can be computed by hand:\footnote{The
  closed form comes from an identity between the Beta integral and the
  binomial distribution: for whole numbers \(k\) and \(N\), the area
  under \(\theta^{k}(1-\theta)^{N-k}\) between one half and one equals
  the probability that \(N + 1\) fair coin flips produce at most \(k\)
  heads, divided by \((N+1)\binom{N}{k}\). The chunk below checks the
  closed form against the integral it replaces.}

\begin{equation}\protect\phantomsection\label{eq-bounded-cf}{
\Lambda(k) \;=\; \frac{\text{average probability of the counts over } \theta > 1/2}{\text{probability of the counts at } \theta = 1/2} \;=\; \frac{\sum_{j=0}^{k}\binom{N+1}{j}}{(N+1)\binom{N}{k}}.
}\end{equation}

At nine of twelve this is \(7814 / (13 \times 220) = 7814/2860 = 2.73\).

\begin{Shaded}
\begin{Highlighting}[]
\DocumentationTok{\#\# The closed form at nine of twelve, and the definition it stands for:}
\DocumentationTok{\#\# the average of the binomial probability of the counts over values of theta in}
\DocumentationTok{\#\# (1/2, 1] under uniform weights, divided by the probability at one}
\DocumentationTok{\#\# half. The average over (1/2, 1] is the integral times 2, because the}
\DocumentationTok{\#\# uniform density on an interval of width 1/2 is 2.}
\NormalTok{k }\OtherTok{\textless{}{-}} \DecValTok{9}\NormalTok{; N }\OtherTok{\textless{}{-}} \DecValTok{12}
\NormalTok{bounded }\OtherTok{\textless{}{-}} \FunctionTok{sum}\NormalTok{(}\FunctionTok{choose}\NormalTok{(N }\SpecialCharTok{+} \DecValTok{1}\NormalTok{, }\DecValTok{0}\SpecialCharTok{:}\NormalTok{k)) }\SpecialCharTok{/}\NormalTok{ ((N }\SpecialCharTok{+} \DecValTok{1}\NormalTok{) }\SpecialCharTok{*} \FunctionTok{choose}\NormalTok{(N, k))}
\NormalTok{by\_def }\OtherTok{\textless{}{-}} \DecValTok{2} \SpecialCharTok{*} \FunctionTok{integrate}\NormalTok{(}\ControlFlowTok{function}\NormalTok{(t) }\FunctionTok{dbinom}\NormalTok{(k, N, t), }\FloatTok{0.5}\NormalTok{, }\DecValTok{1}\NormalTok{)}\SpecialCharTok{$}\NormalTok{value }\SpecialCharTok{/}
  \FunctionTok{dbinom}\NormalTok{(k, N, }\FloatTok{0.5}\NormalTok{)}
\FunctionTok{stopifnot}\NormalTok{(}\FunctionTok{all.equal}\NormalTok{(bounded, by\_def),}
          \FunctionTok{sum}\NormalTok{(}\FunctionTok{choose}\NormalTok{(}\DecValTok{13}\NormalTok{, }\DecValTok{0}\SpecialCharTok{:}\DecValTok{9}\NormalTok{)) }\SpecialCharTok{==} \DecValTok{7814}\NormalTok{,}
          \DecValTok{13} \SpecialCharTok{*} \FunctionTok{choose}\NormalTok{(}\DecValTok{12}\NormalTok{, }\DecValTok{9}\NormalTok{) }\SpecialCharTok{==} \DecValTok{2860}\NormalTok{,}
          \FunctionTok{all.equal}\NormalTok{(}\FunctionTok{round}\NormalTok{(bounded, }\DecValTok{2}\NormalTok{), }\FloatTok{2.73}\NormalTok{))}
\FunctionTok{c}\NormalTok{(}\AttributeTok{bounded\_bayes\_factor =} \FunctionTok{round}\NormalTok{(bounded, }\DecValTok{4}\NormalTok{))}
\end{Highlighting}
\end{Shaded}

\begin{verbatim}
bounded_bayes_factor 
              2.7322 
\end{verbatim}

The researcher may state any other weights instead, so long as they are
stated before coding and none sits at or below one half; weights meeting
both conditions we call \emph{admissible} below. Whatever admissible
weights are picked, the error rate proved below is the same; only the
reported value changes. Section~\ref{sec-weight-sensitivity} varies the
weights and shows what they do to the number.

The worst-case Bayes factor and the uniform-weights Bayes factor share
their numerator: the probability of the counts averaged over the values
of \(\theta\) consistent with the working theory, which is 0.14675 at
nine of twelve under uniform weights. They differ only in the divisor.
The uniform-weights Bayes factor divides by 0.00710, the same
probability averaged over every value of \(\theta\) consistent with the
rival's claim. The worst-case Bayes factor divides by 0.05371, the
probability at the single value of \(\theta\) that suits the rival best.
The larger denominator gives the smaller Bayes factor --- 2.73 against
20.67 --- and it is the smaller of the two that carries the guarantees
of the next three subsections.

\begin{Shaded}
\begin{Highlighting}[]
\DocumentationTok{\#\# The three probabilities behind 2.73 and 20.67. All three share the}
\DocumentationTok{\#\# scale 13 * 2\^{}12 = 53248, which cancels in the ratios.}
\NormalTok{num  }\OtherTok{\textless{}{-}} \DecValTok{2} \SpecialCharTok{*} \FunctionTok{integrate}\NormalTok{(}\ControlFlowTok{function}\NormalTok{(t) }\FunctionTok{dbinom}\NormalTok{(k, N, t), }\FloatTok{0.5}\NormalTok{, }\DecValTok{1}\NormalTok{)}\SpecialCharTok{$}\NormalTok{value}
\NormalTok{den\_binom   }\OtherTok{\textless{}{-}} \DecValTok{2} \SpecialCharTok{*} \FunctionTok{integrate}\NormalTok{(}\ControlFlowTok{function}\NormalTok{(t) }\FunctionTok{dbinom}\NormalTok{(k, N, t), }\DecValTok{0}\NormalTok{, }\FloatTok{0.5}\NormalTok{)}\SpecialCharTok{$}\NormalTok{value}
\NormalTok{den\_bounded }\OtherTok{\textless{}{-}} \FunctionTok{dbinom}\NormalTok{(k, N, }\FloatTok{0.5}\NormalTok{)}
\FunctionTok{stopifnot}\NormalTok{(}\FunctionTok{all.equal}\NormalTok{(}\FunctionTok{round}\NormalTok{(num, }\DecValTok{5}\NormalTok{), }\FloatTok{0.14675}\NormalTok{),}
          \FunctionTok{all.equal}\NormalTok{(}\FunctionTok{round}\NormalTok{(den\_binom, }\DecValTok{5}\NormalTok{), }\FloatTok{0.00710}\NormalTok{),}
          \FunctionTok{all.equal}\NormalTok{(}\FunctionTok{round}\NormalTok{(den\_bounded, }\DecValTok{5}\NormalTok{), }\FloatTok{0.05371}\NormalTok{),}
          \FunctionTok{all.equal}\NormalTok{(}\FunctionTok{round}\NormalTok{(num }\SpecialCharTok{/}\NormalTok{ den\_binom, }\DecValTok{2}\NormalTok{), }\FloatTok{20.67}\NormalTok{),}
          \FunctionTok{all.equal}\NormalTok{(}\FunctionTok{round}\NormalTok{(num }\SpecialCharTok{/}\NormalTok{ den\_bounded, }\DecValTok{2}\NormalTok{), }\FloatTok{2.73}\NormalTok{))}
\FunctionTok{round}\NormalTok{(}\FunctionTok{c}\NormalTok{(}\AttributeTok{numerator =}\NormalTok{ num, }\AttributeTok{binomial\_denominator =}\NormalTok{ den\_binom,}
        \AttributeTok{bounded\_denominator =}\NormalTok{ den\_bounded), }\DecValTok{5}\NormalTok{)}
\end{Highlighting}
\end{Shaded}

\begin{verbatim}
           numerator binomial_denominator  bounded_denominator 
             0.14675              0.00710              0.05371 
\end{verbatim}

\subsection{The worst-case Bayes factor is the smallest Bayes factor the
rival's weights can give}\label{sec-bound-reading}

A Bayes factor comparing the two theories would divide the probability
of the counts averaged under the working theory's stated weights by the
probability of the counts averaged under whatever weights the rival's
proponent puts on the values of \(\theta\) consistent with the rival's
claim. We never learn the proponent's weights; critics of a case study
do not state priors. The next proposition says we do not need them.

\begin{proposition}[The worst-case Bayes factor bounds the Bayes factor
from below]\protect\hypertarget{prp-bound}{}\label{prp-bound}

Let \(k \ge r\), let the researcher's weights be any probability
distribution stated before coding over \(\theta\) in \((1/2, 1]\), and
let the rival proponent's weights be any probability distribution over
\(\theta\) in \([0, 1/2]\). Then the Bayes factor --- the researcher's
weighted numerator divided by the proponent's weighted denominator ---
is at least the worst-case Bayes factor, that same numerator divided by
the probability of the counts at \(\theta = 1/2\).

\end{proposition}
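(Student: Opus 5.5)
The plan is to reduce the proposition to a pointwise comparison of the likelihood on the rival's range. Write $L(\theta)=\binom{N}{k}\theta^{k}(1-\theta)^{r}$ for the binomial probability of the counts, $\mu_1$ for the researcher's weights on $(1/2,1]$, and $\mu_R$ for the proponent's weights on $[0,1/2]$. The Bayes factor against that proponent is $\int L\,d\mu_1 \big/ \int L\,d\mu_R$, and the worst-case Bayes factor is $\int L\,d\mu_1 \big/ L(1/2)$. The numerators are identical and nonnegative, so the claim follows once I show that
\[
\int_{[0,1/2]} L(\theta)\,d\mu_R(\theta)\le L(1/2).
\]
If $\int L\,d\mu_R=0$, the Bayes factor is $+\infty$ and the inequality holds trivially. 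Otherwise both denominators are positive, since $L(1/2)=\binom{N}{k}2^{-N}>0$.

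The key step is to show that $L$ is nondecreasing on $[0,1/2]$ when $k\ge r$.

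\emph{Edge cases.} If $k=0$, then $r=0$ and $N=0$, so $L\equiv 1$ and there is nothing to prove. If $r=0$ and $k\ge1$, then $L=\binom{N}{k}\theta^{k}$, which is increasing.

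\emph{Main case, $k,r\ge1$.} On the open interval $(0,1)$ differentiate the logarithm:
\[
\frac{d}{d\theta}\log L=\frac{k}{\theta}-\frac{r}{1-\theta}=\frac{k(1-\theta)-r\theta}{\theta(1-\theta)}=\frac{k-N\theta}{\theta(1-\theta)}.
\]
This is positive for $\theta<k/N$. The condition $k\ge r$ is equivalent to $k/N\ge 1/2$, so the derivative is nonnegative on $(0,1/2)$. Hence $L$ is nondecreasing on $(0,1/2]$, and by continuity at $0$ (where $L(0)=0$) on all of $[0,1/2]$. This is the unimodality fact already described in Section~\ref{sec-lower-bound}: the likelihood rises until $k/N$.

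With monotonicity in hand, $L(\theta)\le L(1/2)$ for every $\theta$ in the support of $\mu_R$. Integrating against the probability measure $\mu_R$ gives $\int L\,d\mu_R\le L(1/2)\,\mu_R([0,1/2])=L(1/2)$. Dividing the common positive numerator by the smaller denominator gives the larger ratio, which is the statement. The same argument goes through unchanged at weighted counts $(W,R)$ with $W\ge R$, as the main text asserts, because only the exponents change.

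I do not expect a serious obstacle; the only care needed is bookkeeping. First, the statement allows arbitrary probability distributions, including point masses and measures charging $\theta=0$, so I will phrase everything with integrals against measures rather than densities. Second, I must handle the degenerate denominators and the boundary exponents $k=0$ or $r=0$ explicitly, since there the log-derivative argument does not apply. Third, the researcher's weights $\mu_1$ play no role beyond being the same on both sides, which is why the bound holds for any admissible weights stated before coding.
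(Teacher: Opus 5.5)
Your proof is correct and follows the paper's route: since $k \ge r$ puts the peak $k/N$ of $\theta^{k}(1-\theta)^{r}$ at or above one half, the likelihood is nondecreasing on $[0,1/2]$, so its average under the proponent's weights is at most its value at $\theta = 1/2$, and dividing the shared numerator by that no-larger denominator gives a ratio at least the worst-case Bayes factor. Your log-derivative, the edge cases $k=0$ and $r=0$, and the phrasing with measures make explicit what the paper states in one line. There is one small slip: a zero denominator gives $+\infty$ only when the numerator is positive. With $\mu_1$ a point mass at $\theta = 1$ and $r \ge 1$ the ratio is $0/0$, a degenerate case the statement tacitly excludes.
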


\begin{proof}
When \(k \ge r\), the probability of the counts
\(\theta^{k}(1-\theta)^{N-k}\), viewed as a function of \(\theta\),
rises on the whole interval \([0, 1/2]\): it increases up to
\(\theta = k/N\), and \(k \ge r\) puts \(k/N\) at or above one half. So
every value of \(\theta\) consistent with the rival's claim gives the
counts probability at most the probability at one half. The proponent's
denominator is a weighted average of numbers each at most that
probability, so it is at most that probability, and dividing by it gives
a ratio at least the worst-case Bayes factor.
\end{proof}

The condition \(k \ge r\) says the count supporting the working theory
is at least the count supporting the rival; all six studies in
Section~\ref{sec-applications} satisfy it at unweighted counts. The
chunk below checks the proposition at four rival priors and confirms
that the minimum sits at the point mass on one half: against a proponent
who spreads those weights evenly across the values of \(\theta\)
consistent with the rival's claim, the Bayes factor at nine of twelve is
20.67; against one who concentrates them at \(\theta = 0.4\) (not the
best choice, only an example) it is 11.78; against one who spreads them
over \(\theta\) between 0.4 and 0.5 it is 4.97; and against the
proponent who does best, with everything on one half, it is exactly
2.73. So the researcher can write: whatever prior the rival's proponent
states, the Bayes factor against that proponent is at least 2.73.

The bound also gives the only statement about posterior odds the
worst-case Bayes factor supports. A reader who starts at even odds
between the theories and accepts the researcher's stated weights holds
posterior odds of at least the reported value against the rival,
whatever prior the rival's proponent states, because posterior odds are
prior odds times the Bayes factor and the Bayes factor is at least the
reported value. No single posterior probability attaches to the
worst-case Bayes factor --- the bound runs in one direction only.

\begin{Shaded}
\begin{Highlighting}[]
\DocumentationTok{\#\# The Bayes factor under four rival priors, each divided into the same}
\DocumentationTok{\#\# stated{-}weights numerator. Every value is at least the worst{-}case Bayes}
\DocumentationTok{\#\# factor, and the minimum is attained at the point mass on one half.}
\NormalTok{denoms }\OtherTok{\textless{}{-}} \FunctionTok{c}\NormalTok{(}
  \AttributeTok{uniform\_on\_0\_half =} \DecValTok{2} \SpecialCharTok{*} \FunctionTok{integrate}\NormalTok{(}\ControlFlowTok{function}\NormalTok{(t) }\FunctionTok{dbinom}\NormalTok{(k, N, t), }\DecValTok{0}\NormalTok{, }\FloatTok{0.5}\NormalTok{)}\SpecialCharTok{$}\NormalTok{value,}
  \AttributeTok{point\_at\_half     =} \FunctionTok{dbinom}\NormalTok{(k, N, }\FloatTok{0.5}\NormalTok{),}
  \AttributeTok{point\_at\_0.4      =} \FunctionTok{dbinom}\NormalTok{(k, N, }\FloatTok{0.4}\NormalTok{),}
  \AttributeTok{uniform\_0.4\_0.5   =} \DecValTok{10} \SpecialCharTok{*} \FunctionTok{integrate}\NormalTok{(}\ControlFlowTok{function}\NormalTok{(t) }\FunctionTok{dbinom}\NormalTok{(k, N, t), }\FloatTok{0.4}\NormalTok{, }\FloatTok{0.5}\NormalTok{)}\SpecialCharTok{$}\NormalTok{value)}
\NormalTok{bfs }\OtherTok{\textless{}{-}}\NormalTok{ num }\SpecialCharTok{/}\NormalTok{ denoms}
\FunctionTok{stopifnot}\NormalTok{(}\FunctionTok{all}\NormalTok{(bfs }\SpecialCharTok{\textgreater{}=}\NormalTok{ bounded }\SpecialCharTok{{-}} \FloatTok{1e{-}12}\NormalTok{),}
          \FunctionTok{all.equal}\NormalTok{(}\FunctionTok{round}\NormalTok{(}\FunctionTok{unname}\NormalTok{(bfs), }\DecValTok{2}\NormalTok{), }\FunctionTok{c}\NormalTok{(}\FloatTok{20.67}\NormalTok{, }\FloatTok{2.73}\NormalTok{, }\FloatTok{11.78}\NormalTok{, }\FloatTok{4.97}\NormalTok{)))}
\FunctionTok{round}\NormalTok{(bfs, }\DecValTok{2}\NormalTok{)}
\end{Highlighting}
\end{Shaded}

\begin{verbatim}
uniform_on_0_half     point_at_half      point_at_0.4   uniform_0.4_0.5 
            20.67              2.73             11.78              4.97 
\end{verbatim}

\subsection{Why we bound the error rate at one in
twenty}\label{sec-why-twenty}

If the rival is correct that no more than half of the evidence supports
the working theory, then we would never want to see a worst-case Bayes
factor of 20 against the rival. Since no procedure can rule out such a
misleading result entirely, we want our procedure to guarantee something
else: that such a misleading result occurs with low and bounded
probability. We bound that probability at one in twenty, and the choice
of that bound --- promised in the main text --- has two grounds, and it
is our choice rather than the mathematics'.

The first ground is continuity with the scale readers already use for
Bayes factors. Kass and Raftery (1995) read a Bayes factor of 20 as
strong evidence, so 20 is the value at which many readers would set a
rival aside anyway; our bound states how often that practice errs. The
second ground is continuity with the error rate social scientists
already accept: a test at the 0.05 level is wrong about a true null at
most one time in twenty, and one in twenty is the same number. These two
conventions --- 20 for strength, one in twenty for error --- agree here
by an arithmetic fact the lemma in the next subsection states: a list of
numbers, none negative, whose average is at most one can reach 20 at
most one time in twenty, and can reach 100 at most one time in a
hundred. So a field that wanted a stricter standard would raise the
threshold, and the bound would tighten with it: a threshold of 100
carries an error rate of at most one in a hundred. The choice of 20, and
with it the bound of one in twenty, is ours, made for continuity with
both conventions; nothing in the mathematics below depends on it.

\subsection{The error rate, proved for independent draws and for a
finite collection of any size}\label{sec-error-rate}

The guarantee, stated precisely, is this: when the rival is correct that
no more than half of the evidence supports the working theory, a
researcher who sets the rival aside whenever the worst-case Bayes factor
is 20 or more makes that mistake at most one time in twenty --- across
hypothetical repetitions of the observation process, a fresh sample of
twelve observations each time, the researcher would compute 20 or more,
and so set the rival aside, in at most one repetition in twenty.

The proof rests on a lemma, an arithmetic fact about lists of numbers.
Suppose a list holds 100 numbers, none of them negative. Adding them up
and dividing by 100 gives their average. If exactly 5 of the numbers
were 20 and the other 95 were 0, the average would be
\(((5 \times 20) + (95 \times 0))/100 = 100/100 = 1\). If a sixth number
were also 20, the average could be no smaller than
\(((6 \times 20) + (94 \times 0))/100 = 1.2\) --- above 1. So in a list
of 100 whose average is 1 or smaller, entries of 20 or more can be at
most 5 --- one in twenty. The 100 keeps the arithmetic concrete; the
same division works for a list of any length.

The list the lemma needs comes from the hypothetical repetitions. With
twelve documents only thirteen values of the worst-case Bayes factor are
possible, one for each count of documents supporting the working theory
that might have been seen: the value at a count of eight is 1.10, at ten
it is 9.44, and at the observed count of nine it is 2.73 (the chunk
below prints all thirteen). Each repetition ends with its own count, and
so adds one of the thirteen values to the list. None of the thirteen is
negative.

The list of worst-case Bayes factors has no negative entries, so the
lemma applies to it: if the list's average is at most one, then entries
of 20 or more are at most one in twenty. The then-half of that statement
is the guarantee itself. So what we must show is that the average is at
most one whenever the rival is correct, and we show it in three steps:
two for the model's independent draws, and a third that extends the
result to draws without replacement from a finite collection.

The first step shows that when \(\theta\) is exactly one half --- the
rival's best case --- and each observation is an independent draw, the
worst-case Bayes factors average exactly one. Here is why. At each
count, the worst-case Bayes factor is the average over the values of
\(\theta\) above one half of the probability of that count at that
value, divided by the probability of that count at one half. Multiplying
each of the thirteen values by the probability of its own count at one
half undoes the division, and what is left, at each value of \(\theta\)
above one half, is the thirteen probabilities of the counts at that
value, and those thirteen probabilities add to one. So at every value of
\(\theta\) above one half the sum is one, and averaging a quantity that
equals one at every such value gives one. The chunk below checks the
equality.

The second step covers the other values of \(\theta\), still with
independent draws. The worst-case Bayes factor is large only when the
sample of twelve holds many observations supporting the working theory:
9.44 at ten of twelve, against 2.73 at nine. When fewer than half of the
evidence supports the working theory, such samples are drawn less often.
The large values then enter the list less often, and the average falls
below one.

The third step covers a researcher who draws without replacement from a
finite collection. Hoeffding (1963) proved that moving from independent
draws to draws without replacement can only lower the average of a
quantity that grows by increasing steps as the count rises. The
worst-case Bayes factor grows that way: from 1.10 at a count of eight to
2.73 at nine is a step of 1.63, and from 2.73 to 9.44 at ten is a step
of 6.71. So the average is at most one whenever the rival is correct,
under independent draws and under draws without replacement from a
collection of any size. By the lemma, the researcher computes 20 or more
in at most one repetition in twenty --- the guarantee.

\begin{Shaded}
\begin{Highlighting}[]
\DocumentationTok{\#\# The thirteen possible values, the three steps, and the guarantee.}
\NormalTok{Lambda }\OtherTok{\textless{}{-}} \ControlFlowTok{function}\NormalTok{(j, N) }\FunctionTok{sum}\NormalTok{(}\FunctionTok{choose}\NormalTok{(N }\SpecialCharTok{+} \DecValTok{1}\NormalTok{, }\DecValTok{0}\SpecialCharTok{:}\NormalTok{j)) }\SpecialCharTok{/}\NormalTok{ ((N }\SpecialCharTok{+} \DecValTok{1}\NormalTok{) }\SpecialCharTok{*} \FunctionTok{choose}\NormalTok{(N, j))}
\NormalTok{v }\OtherTok{\textless{}{-}} \FunctionTok{sapply}\NormalTok{(}\DecValTok{0}\SpecialCharTok{:}\DecValTok{12}\NormalTok{, Lambda, }\AttributeTok{N =} \DecValTok{12}\NormalTok{)}
\FunctionTok{round}\NormalTok{(v, }\DecValTok{2}\NormalTok{)}
\end{Highlighting}
\end{Shaded}

\begin{verbatim}
 [1]   0.08   0.09   0.11   0.13   0.17   0.23   0.34   0.56   1.10   2.73
[11]   9.44  52.42 630.08
\end{verbatim}

\begin{Shaded}
\begin{Highlighting}[]
\FunctionTok{stopifnot}\NormalTok{(}\FunctionTok{all.equal}\NormalTok{(}\FunctionTok{round}\NormalTok{(v[}\DecValTok{9}\SpecialCharTok{:}\DecValTok{11}\NormalTok{], }\DecValTok{2}\NormalTok{), }\FunctionTok{c}\NormalTok{(}\FloatTok{1.10}\NormalTok{, }\FloatTok{2.73}\NormalTok{, }\FloatTok{9.44}\NormalTok{)))}

\DocumentationTok{\#\# Step 1: at theta = 1/2, with independent draws, the average}
\DocumentationTok{\#\# is exactly one.}
\FunctionTok{stopifnot}\NormalTok{(}\FunctionTok{all.equal}\NormalTok{(}\FunctionTok{sum}\NormalTok{(}\FunctionTok{dbinom}\NormalTok{(}\DecValTok{0}\SpecialCharTok{:}\DecValTok{12}\NormalTok{, }\DecValTok{12}\NormalTok{, }\FloatTok{0.5}\NormalTok{) }\SpecialCharTok{*}\NormalTok{ v), }\DecValTok{1}\NormalTok{))}

\DocumentationTok{\#\# Step 2: at every smaller value of theta the average falls below one.}
\ControlFlowTok{for}\NormalTok{ (s }\ControlFlowTok{in} \FunctionTok{c}\NormalTok{(}\FloatTok{0.1}\NormalTok{, }\FloatTok{0.25}\NormalTok{, }\FloatTok{0.4}\NormalTok{, }\FloatTok{0.49}\NormalTok{)) \{}
  \FunctionTok{stopifnot}\NormalTok{(}\FunctionTok{sum}\NormalTok{(}\FunctionTok{dbinom}\NormalTok{(}\DecValTok{0}\SpecialCharTok{:}\DecValTok{12}\NormalTok{, }\DecValTok{12}\NormalTok{, s) }\SpecialCharTok{*}\NormalTok{ v) }\SpecialCharTok{\textless{}} \DecValTok{1}\NormalTok{)}
\NormalTok{\}}

\DocumentationTok{\#\# Step 3: Hoeffding\textquotesingle{}s condition {-}{-}{-} the steps grow {-}{-}{-} and the}
\DocumentationTok{\#\# conclusion under draws without replacement, at every value of theta}
\DocumentationTok{\#\# consistent with the rival\textquotesingle{}s claim, for a spread of collection sizes.}
\FunctionTok{stopifnot}\NormalTok{(}\FunctionTok{all}\NormalTok{(}\FunctionTok{diff}\NormalTok{(}\FunctionTok{diff}\NormalTok{(v)) }\SpecialCharTok{\textgreater{}} \DecValTok{0}\NormalTok{),}
          \FunctionTok{all.equal}\NormalTok{(}\FunctionTok{round}\NormalTok{(}\FunctionTok{diff}\NormalTok{(v)[}\DecValTok{9}\SpecialCharTok{:}\DecValTok{10}\NormalTok{], }\DecValTok{2}\NormalTok{), }\FunctionTok{c}\NormalTok{(}\FloatTok{1.63}\NormalTok{, }\FloatTok{6.71}\NormalTok{)))}
\ControlFlowTok{for}\NormalTok{ (M }\ControlFlowTok{in} \FunctionTok{c}\NormalTok{(}\DecValTok{12}\NormalTok{, }\DecValTok{13}\NormalTok{, }\DecValTok{24}\NormalTok{, }\DecValTok{60}\NormalTok{, }\DecValTok{240}\NormalTok{, }\DecValTok{1200}\NormalTok{)) \{}
  \ControlFlowTok{for}\NormalTok{ (K1 }\ControlFlowTok{in} \FunctionTok{seq\_len}\NormalTok{(}\FunctionTok{floor}\NormalTok{(M }\SpecialCharTok{/} \DecValTok{2}\NormalTok{))) \{}
    \FunctionTok{stopifnot}\NormalTok{(}\FunctionTok{sum}\NormalTok{(}\FunctionTok{dhyper}\NormalTok{(}\DecValTok{0}\SpecialCharTok{:}\DecValTok{12}\NormalTok{, K1, M }\SpecialCharTok{{-}}\NormalTok{ K1, }\DecValTok{12}\NormalTok{) }\SpecialCharTok{*}\NormalTok{ v) }\SpecialCharTok{\textless{}=} \DecValTok{1} \SpecialCharTok{+} \FloatTok{1e{-}12}\NormalTok{)}
\NormalTok{  \}}
\NormalTok{\}}
\FunctionTok{cat}\NormalTok{(}\StringTok{"average at most one at every value of theta and collection size tried}\SpecialCharTok{\textbackslash{}n}\StringTok{"}\NormalTok{)}
\end{Highlighting}
\end{Shaded}

\begin{verbatim}
average at most one at every value of theta and collection size tried
\end{verbatim}

The formal statement covers every sample size, every value of \(\theta\)
the rival's claim might state, and every admissible set of weights at
once, so that the separated Bayes factor, in which each theory names one
value of \(\theta\) (Section~\ref{sec-separated}), inherits it without a
new proof.

\begin{theorem}[The error rate of the worst-case Bayes
factor]\protect\hypertarget{thm-error-rate}{}\label{thm-error-rate}

Fix the number of documents examined, \(N\). Let \(\theta_0\) with
\(0 < \theta_0 < 1\) be a value stated before coding, and let \(\pi\) be
any probability distribution over the values of \(\theta\) greater than
\(\theta_0\), also stated before coding. Write, for a count \(k\) of
documents supporting the working theory,

\[
\Lambda_{\pi, \theta_0}(k) \;=\;
\int \frac{\theta^{\,k}\,(1-\theta)^{\,N-k}}
          {\theta_0^{\,k}\,(1-\theta_0)^{\,N-k}}\, \mathrm{d}\pi(\theta).
\]

Then the average of \(\Lambda_{\pi, \theta_0}(k)\) across repetitions of
the observation process --- its average over the counts \(k\) the
process can produce, weighted by how often it produces each --- is at
most one whenever the share of the evidence supporting the working
theory is at most \(\theta_0\), both for \(N\) independent draws and for
draws of \(N\) documents without replacement from any finite collection
of \(M \ge N\) documents, and therefore, for every threshold \(t > 0\),
the researcher computes \(\Lambda_{\pi, \theta_0} \ge t\) with
probability at most \(1/t\).

\end{theorem}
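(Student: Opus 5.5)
The plan is to show that the expected value of \(\Lambda_{\pi,\theta_0}(K)\) is at most one under every data-generating distribution the rival's claim allows, and then apply Markov's inequality. The last step is immediate: \(\Lambda_{\pi,\theta_0}\ge 0\), so \(P(\Lambda\ge t)\le E[\Lambda]/t\le 1/t\). This is the lemma about lists of nonnegative numbers from the text. Everything rests on the bound on the expectation, which I would prove in three steps.

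\emph{Step 1: independent draws at \(\theta_0\).} Write \(b_s(k)=\binom{N}{k}s^k(1-s)^{N-k}\). Then
\(\Lambda_{\pi,\theta_0}(k)=\int b_\theta(k)/b_{\theta_0}(k)\,d\pi(\theta)\), because the binomial coefficients cancel. By Tonelli,
\[
\sum_{k=0}^N b_{\theta_0}(k)\Lambda_{\pi,\theta_0}(k)=\int\sum_k b_\theta(k)\,d\pi(\theta)=1 .
\]
This is the fact that a likelihood ratio averages to one under the null. Note that \(0<\theta_0<1\) keeps every \(b_{\theta_0}(k)\) positive.

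\emph{Step 2: independent draws at \(s\le\theta_0\).} I want \(\sum_k b_s(k)\Lambda(k)\le 1\). Again swap the sum and the integral. For fixed \(\theta>\theta_0\),
\[
\sum_k b_s(k)\frac{b_\theta(k)}{b_{\theta_0}(k)}=\Bigl(\frac{s\theta}{\theta_0}+\frac{(1-s)(1-\theta)}{1-\theta_0}\Bigr)^{N},
\]
since each term factorizes over the \(N\) draws. The bracket is affine in \(s\). At \(s=\theta_0\) it equals one. Its slope is \(\theta/\theta_0-(1-\theta)/(1-\theta_0)>0\) when \(\theta>\theta_0\). So for \(s\le\theta_0\) the bracket lies in \([0,1]\), and its \(N\)th power is at most one. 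Integrating against \(\pi\) gives the bound. This closed form avoids any appeal to monotone likelihood ratio or stochastic ordering.

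\emph{Step 3: draws without replacement.} Let \(K_1\le\theta_0 M\) and let \(K\) be hypergeometric with parameters \((M,K_1,N)\). Hoeffding (1963, Theorem 4) says that for continuous convex \(g\), \(E\,g(K_{\text{hyper}})\le E\,g(K_{\text{binom}})\) when the binomial has success probability \(s=K_1/M\). So it suffices that \(k\mapsto\Lambda(k)\) is convex on the integers, that is, extends to a convex function on \([0,N]\). Write
\[
\Lambda(k)=\int \rho_\theta^{\,k}\Bigl(\frac{1-\theta}{1-\theta_0}\Bigr)^{N}d\pi(\theta),\qquad \rho_\theta=\frac{\theta(1-\theta_0)}{\theta_0(1-\theta)} .
\]
This is a nonnegative mixture of the exponentials \(x\mapsto\rho_\theta^{x}\), each convex in \(x\). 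So \(\Lambda\) extends to a convex function on the reals. It is also increasing, since \(\rho_\theta>1\), though convexity is all Hoeffding needs. Apply Step 2 at \(s=K_1/M\le\theta_0\) to get \(E\Lambda(K_{\text{hyper}})\le E\Lambda(K_{\text{binom}})\le 1\). Markov's inequality then finishes the proof as above.

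I expect the main obstacle to be Step 3, and specifically checking that Hoeffding's comparison applies as stated. Two things need care. First, it requires convexity of \(g\), not just increasing increments, and the mixture-of-exponentials representation supplies that cleanly. Second, the binomial in the comparison must have \(s=K_1/M\) exactly, so the rival's claim must be read as "the collection's share is at most \(\theta_0\)". Measurability and finiteness of the integrals under an arbitrary \(\pi\) are routine, because every integrand is bounded by \(\max(1/\theta_0,1/(1-\theta_0))^N\).
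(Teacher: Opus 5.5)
Your proof is correct and follows the paper's route. Your binomial bracket is the paper's base $\mu + s(\lambda-\mu)$, and you use the same Hoeffding convex-function comparison for draws without replacement and the same Markov step. The one cosmetic difference is that you apply Hoeffding once to the mixture $\Lambda$, whereas the paper applies it to each $x \mapsto \beta^{x}$ and then integrates over $\pi$.

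One small repair is needed in Step 3. Your exponential-mixture representation is undefined at $\theta = 1$, where $\rho_1 = \infty$ and $1-\theta = 0$, and $\pi$ is allowed to put mass there; the paper gives that endpoint its own line. Add that this component equals $\theta_0^{-N}\mathbf{1}[k=N]$, which still has nonnegative second differences on $\{0,\dots,N\}$, so $\Lambda$ remains convex on the integer grid. Alternatively, bound its expectation directly by $(s/\theta_0)^N \le 1$, using $P(K=N) \le s^N$ without replacement.
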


\begin{proof}
Fix a value \(\theta > \theta_0\) in the support of \(\pi\), and write
\(\lambda = \theta/\theta_0\) and \(\mu = (1-\theta)/(1-\theta_0)\), so
that the integrand at \(\theta\) equals \(\mu^{N}\beta^{k}\) with
\(\beta = \lambda/\mu > 1\).

\emph{Step 1 (independent draws, and without replacement is at most with
replacement).} Under the model the count \(X\) of supporting
observations is binomial with success probability \(s\), the share the
evidence really has (we keep \(s\) and \(\theta\) apart from here on:
\(\theta\) ranges over the values the weights \(\pi\) cover, and \(s\)
is the one share the evidence really has), and each of the \(N\)
independent draws contributes a factor of \(1 - s + s\beta\) to the
expectation, so \(\mathrm{E}[\beta^{X}] = (1 - s + s\beta)^{N}\)
exactly. Under draws without replacement, \(X\) is the sum of \(N\)
draws without replacement from a population of \(M\) values, \(K_1\)
ones and \(M - K_1\) zeros, with \(s = K_1/M\). The map
\(x \mapsto \beta^{x}\) is continuous and convex, and Hoeffding (1963,
sec. 6, Theorem 4) states that for any continuous convex function, the
expectation of the function of such a sum is at most its expectation
under draws with replacement from the same population, which is the
binomial case just computed. So in both cases
\(\mathrm{E}[\beta^{X}] \le (1 - s + s\beta)^{N}\).

\emph{Step 2 (the binomial bound is at most one exactly on the rival's
claim).} Multiplying by \(\mu^{N}\) gives
\(\mathrm{E}[\mu^{N}\beta^{X}] \le (\mu + s(\lambda - \mu))^{N}\). Since
\(\theta > \theta_0\) makes \(\lambda > 1 > \mu\), the base is at most
one exactly when \(s \le (1-\mu)/(\lambda-\mu)\), and that quantity
simplifies to \(\theta_0\): \[
\frac{1-\mu}{\lambda-\mu}
= \frac{(\theta - \theta_0)/(1-\theta_0)}{(\theta - \theta_0)/\bigl(\theta_0(1-\theta_0)\bigr)}
= \theta_0 .
\] So the expectation of the integrand at \(\theta\) is at most one
whenever \(s \le \theta_0\), which is the rival's claim itself.

\emph{Step 3 (the endpoint, and the average over \(\pi\)).} Steps 1 and
2 cover every \(\theta\) in \((\theta_0, 1)\). At the endpoint
\(\theta = 1\) the integrand is \(\theta_0^{-N}\) at \(k = N\) and zero
elsewhere, and \(\beta\) is not finite, so that case needs its own line:
the probability of a unanimous sample is \(s^{N}\) under independent
draws and \(\prod_{i=0}^{N-1}(K_1 - i)/(M - i) \le s^{N}\) without
replacement, since each factor is at most \(K_1/M\), so the expectation
there is at most \(s^{N}\theta_0^{-N} \le 1\) on the rival's claim. The
integrand is nonnegative and the sum over the \(N + 1\) counts is
finite, so the integral over \(\theta\) and the average over counts may
be exchanged, giving an average of at most
\(\int 1 \, \mathrm{d}\pi = 1\). The probability statement follows by
the lemma stated earlier in this section, applied to the list of
repetitions of the observation process, one entry per repetition: a list
of nonnegative numbers whose average is at most one has entries of \(t\)
or more in at most a \(1/t\) fraction of its places.
\end{proof}

The paper's default worst-case Bayes factor is the case
\(\theta_0 = 1/2\) with \(\pi\) uniform on the values of \(\theta\)
above one half, which is Equation~\ref{eq-bounded-cf}. A statistic that
is never negative and whose average is at most one whenever the rival's
claim holds is called an \emph{e-value} in the statistics literature
(Vovk and Wang 2021), so Theorem~\ref{thm-error-rate} says the
worst-case Bayes factor is an e-value against the rival's claim, under
independent draws and under draws without replacement from a collection
of any size. The theorem's scope carries one caution. It holds for any
stated weights that put no weight at or below \(\theta_0\), the value
the rival states. That restriction on where the researcher's weights may
sit is what makes the theorem true: weights placed at or below
\(\theta_0\) break it (a point mass at \(\theta = 0.3\) against
\(\theta_0 = 1/2\), when no evidence at all supports the working theory,
gives an average of \((0.7/0.5)^{N}\), far above one). The theorem
bounds an average across repetitions of the observation process, which
is what the one-in-twenty frequency statement needs.

\subsection{The largest value the worst-case Bayes factor can
reach}\label{sec-ceiling}

At nine of twelve the researcher reports 2.73, far below 20. That is not
a weakness of the construction. At the rival's best case, nine
observations of twelve are worth only so much, and the next proposition
says exactly how much.

\begin{proposition}[The ceiling on the worst-case Bayes factor against
an even split]\protect\hypertarget{prp-ceiling}{}\label{prp-ceiling}

The worst-case Bayes factor is an average of the ratios
\(\theta^{k}(1-\theta)^{r} / (1/2)^{N}\) over the values of \(\theta\)
above one half, and

\[
\sup_{\theta}\;
\frac{\theta^{k}(1-\theta)^{r}}{(1/2)^{N}}
\;=\; 2^{N}\,\frac{k^{k}\,r^{r}}{N^{N}},
\]

attained at \(\theta = k/N\), with the convention \(0^0 = 1\). An
average cannot exceed its largest term, so no weights stated before
coding --- indeed, no weights at all --- can produce a worst-case Bayes
factor above this value.

\end{proposition}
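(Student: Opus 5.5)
The plan is to prove the three claims of Proposition~\ref{prp-ceiling} in order: that the worst-case Bayes factor is an average of the stated ratios, that the supremum of those ratios has the closed form given, and that an average cannot exceed that supremum.

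\emph{First claim.} I would start from the definition used in Proposition~\ref{prp-bound} and Theorem~\ref{thm-error-rate} with \(\theta_0 = 1/2\). The worst-case Bayes factor is \(\int \binom{N}{k}\theta^{k}(1-\theta)^{r}\,\mathrm{d}\pi(\theta)\) divided by \(\binom{N}{k}(1/2)^{N}\). The binomial coefficient cancels. The denominator is a constant in \(\theta\), so it moves inside the integral. What remains is \(\int g(\theta)\,\mathrm{d}\pi(\theta)\), where \(g(\theta) = \theta^{k}(1-\theta)^{r}/(1/2)^{N}\). Since \(\pi\) is a probability distribution, this is the average of \(g\) under the stated weights, which is the first claim.

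\emph{Second claim.} I would maximize \(g\) over \([0,1]\) by working with \(\log g = k\log\theta + r\log(1-\theta) + N\log 2\) on \((0,1)\) when \(k,r \ge 1\).
\begin{itemize}
\item The derivative is \(k/\theta - r/(1-\theta)\). It is positive for \(\theta < k/N\) and negative for \(\theta > k/N\), and \(g\) vanishes at both endpoints. This is the same monotonicity already used in the proof of Proposition~\ref{prp-bound}.
\item So the maximum is at \(\theta = k/N\). Substituting gives \(g(k/N) = 2^{N} k^{k} r^{r}/N^{N}\), using \(1 - k/N = r/N\) and \(k + r = N\).
\end{itemize}
The degenerate cases are handled separately, with the convention \(0^0 = 1\).
\begin{itemize}
\item If \(r = 0\), then \(g = 2^{N}\theta^{N}\) is increasing, with supremum \(2^{N}\) at \(\theta = 1\). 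This matches the formula, since \(N^{N}\cdot 0^{0}/N^{N} = 1\).
\item If \(k = 0\), the supremum is \(2^{N}\) at \(\theta = 0\), symmetrically.
\end{itemize}

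\emph{Third claim.} Write \(S\) for this supremum. Since \(g \le S\) pointwise, monotonicity of the integral gives \(\int g\,\mathrm{d}\pi \le S\int \mathrm{d}\pi = S\) for any probability measure \(\pi\), whatever its support. This is why the claim holds for any weights at all, not only admissible ones.

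\emph{Main obstacle and a refinement.} I expect no substantive obstacle. The delicate point is being careful about the supremum over all \(\theta\) versus over \(\theta > 1/2\). For the default situation \(k \ge r\), the maximizer \(k/N\) lies in \([1/2, 1]\). The bound is then also the supremum over admissible weights: point masses near \(k/N\) approach it, or attain it exactly when \(k/N > 1/2\). I would add this refinement to justify the paper's reading that 4.81 is attainable at nine of twelve. As a check, \(2^{12}\cdot 9^{9}3^{3}/12^{12} \approx 4.81\), which agrees with the main text.
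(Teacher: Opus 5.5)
Your proposal is correct and follows the paper's proof. Both rest on the unimodality of \(\theta^{k}(1-\theta)^{r}\), which peaks at \(\theta = k/N\), and on the fact that a weighted average cannot exceed the supremum of what it averages; you simply fill in what the paper leaves implicit: the derivative of the log-likelihood, the degenerate cases \(k=0\) and \(r=0\) under \(0^0=1\), and attainment by a point mass at \(k/N\) when \(k/N > 1/2\).
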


\begin{proof}
The supremum is the maximized binomial likelihood divided by its value
at one half, and \(\theta^{k}(1-\theta)^{r}\) is maximized at
\(\theta = k/N\): the function rises up to \(k/N\) and falls past it. A
weighted average of the ratios cannot exceed the largest ratio.
\end{proof}

At nine of twelve the ceiling is
\(2^{12} \cdot 9^{9} \cdot 3^{3} / 12^{12} = 4.81\). The weights most
favorable to the working theory put everything on one value of
\(\theta\), nine twelfths, the value that matches the counts exactly ---
and even there the probability of the counts is only 4.81 times its
value at one half. A researcher reaches 4.81 only if all the weight sits
on \(\theta = 3/4\). Such weights may be stated before coding, and if
the counts then come out nine and three the reported value is 4.81. What
the timing rule forbids is choosing that value after seeing the counts.
The point of 4.81 is what it says about nine of twelve. Favoring the
rival is not what makes the worst-case Bayes factor small in this
example. Nine observations of twelve are simply not 20-to-1 evidence
against a rival who claims that no more than half of the evidence
supports the working theory. The ceiling limits the value that can be
computed, not the error rate, which stays at one in twenty.

Two consequences follow, both computed in the chunk below. First, with
twelve observations the worst-case Bayes factor first reaches 20 at a
count of eleven: the ceiling at ten of twelve is 18.4, below 20, so no
admissible weights reach 20 there, and at eleven of twelve the
worst-case Bayes factor under the uniform weights is 52.4. Second, the
worst-case Bayes factor can fall below one even when the counts favor
the working theory: at seven of twelve it is 0.56. Seven of twelve is
close to the six of twelve an even split predicts, and a sample close to
the rival's prediction is evidence for the rival, not against it, once
the working theory's weights spread over the values of \(\theta\) above
one half, every one of which predicts more than six supporting
observations out of twelve.

\begin{Shaded}
\begin{Highlighting}[]
\DocumentationTok{\#\# The ceiling in closed form, against a grid maximization; the counts}
\DocumentationTok{\#\# at which twelve documents can and cannot reach 20; the value below}
\DocumentationTok{\#\# one at seven of twelve.}
\NormalTok{ceiling\_lr }\OtherTok{\textless{}{-}} \ControlFlowTok{function}\NormalTok{(j, N) \{}
\NormalTok{  rr }\OtherTok{\textless{}{-}}\NormalTok{ N }\SpecialCharTok{{-}}\NormalTok{ j}
  \ControlFlowTok{if}\NormalTok{ (rr }\SpecialCharTok{==} \DecValTok{0}\NormalTok{) }\DecValTok{2}\SpecialCharTok{\^{}}\NormalTok{N }\ControlFlowTok{else} \DecValTok{2}\SpecialCharTok{\^{}}\NormalTok{N }\SpecialCharTok{*}\NormalTok{ j}\SpecialCharTok{\^{}}\NormalTok{j }\SpecialCharTok{*}\NormalTok{ rr}\SpecialCharTok{\^{}}\NormalTok{rr }\SpecialCharTok{/}\NormalTok{ N}\SpecialCharTok{\^{}}\NormalTok{N}
\NormalTok{\}}
\NormalTok{grid\_max }\OtherTok{\textless{}{-}} \FunctionTok{max}\NormalTok{(}\FunctionTok{dbinom}\NormalTok{(}\DecValTok{9}\NormalTok{, }\DecValTok{12}\NormalTok{, }\FunctionTok{seq}\NormalTok{(}\FloatTok{1e{-}6}\NormalTok{, }\DecValTok{1} \SpecialCharTok{{-}} \FloatTok{1e{-}6}\NormalTok{, }\AttributeTok{length.out =} \FloatTok{2e5}\NormalTok{))) }\SpecialCharTok{/}
  \FunctionTok{dbinom}\NormalTok{(}\DecValTok{9}\NormalTok{, }\DecValTok{12}\NormalTok{, }\FloatTok{0.5}\NormalTok{)}
\FunctionTok{stopifnot}\NormalTok{(}\FunctionTok{all.equal}\NormalTok{(}\FunctionTok{round}\NormalTok{(}\FunctionTok{ceiling\_lr}\NormalTok{(}\DecValTok{9}\NormalTok{, }\DecValTok{12}\NormalTok{), }\DecValTok{2}\NormalTok{), }\FloatTok{4.81}\NormalTok{),}
          \FunctionTok{abs}\NormalTok{(}\FunctionTok{ceiling\_lr}\NormalTok{(}\DecValTok{9}\NormalTok{, }\DecValTok{12}\NormalTok{) }\SpecialCharTok{{-}}\NormalTok{ grid\_max) }\SpecialCharTok{\textless{}} \FloatTok{1e{-}3}\NormalTok{,}
          \FunctionTok{ceiling\_lr}\NormalTok{(}\DecValTok{10}\NormalTok{, }\DecValTok{12}\NormalTok{) }\SpecialCharTok{\textless{}} \DecValTok{20}\NormalTok{,}
          \FunctionTok{all.equal}\NormalTok{(}\FunctionTok{round}\NormalTok{(}\FunctionTok{ceiling\_lr}\NormalTok{(}\DecValTok{10}\NormalTok{, }\DecValTok{12}\NormalTok{), }\DecValTok{1}\NormalTok{), }\FloatTok{18.4}\NormalTok{),}
          \FunctionTok{ceiling\_lr}\NormalTok{(}\DecValTok{11}\NormalTok{, }\DecValTok{12}\NormalTok{) }\SpecialCharTok{\textgreater{}} \DecValTok{20}\NormalTok{,}
          \FunctionTok{min}\NormalTok{(}\FunctionTok{which}\NormalTok{(v }\SpecialCharTok{\textgreater{}=} \DecValTok{20}\NormalTok{)) }\SpecialCharTok{{-}} \DecValTok{1} \SpecialCharTok{==} \DecValTok{11}\NormalTok{,}
          \FunctionTok{all.equal}\NormalTok{(}\FunctionTok{round}\NormalTok{(v[}\DecValTok{12}\NormalTok{], }\DecValTok{1}\NormalTok{), }\FloatTok{52.4}\NormalTok{),}
          \FunctionTok{all.equal}\NormalTok{(}\FunctionTok{round}\NormalTok{(v[}\DecValTok{8}\NormalTok{], }\DecValTok{2}\NormalTok{), }\FloatTok{0.56}\NormalTok{))}
\FunctionTok{round}\NormalTok{(}\FunctionTok{c}\NormalTok{(}\AttributeTok{ceiling\_at\_9 =} \FunctionTok{ceiling\_lr}\NormalTok{(}\DecValTok{9}\NormalTok{, }\DecValTok{12}\NormalTok{), }\AttributeTok{ceiling\_at\_10 =} \FunctionTok{ceiling\_lr}\NormalTok{(}\DecValTok{10}\NormalTok{, }\DecValTok{12}\NormalTok{),}
        \AttributeTok{bounded\_at\_11 =}\NormalTok{ v[}\DecValTok{12}\NormalTok{], }\AttributeTok{bounded\_at\_7 =}\NormalTok{ v[}\DecValTok{8}\NormalTok{]), }\DecValTok{2}\NormalTok{)}
\end{Highlighting}
\end{Shaded}

\begin{verbatim}
 ceiling_at_9 ceiling_at_10 bounded_at_11  bounded_at_7 
         4.81         18.38         52.42          0.56 
\end{verbatim}

\subsection{The separated Bayes factor}\label{sec-separated}

This section develops an option, not a recommendation. A researcher
whose worst-case Bayes factor is below the threshold is under no
obligation to report a separation, and we do not propose it as the
routine thing to do next. It is a sensitivity analysis run in the
direction opposite to the ones in the main paper's sensitivity analysis
section: those ask how much would have to change before a strong result
stopped being strong, and this one asks what would have to be true of
the two theories' claims for weak counts to count as strong evidence. A
researcher who has no view about how lopsided the two claims are has no
use for it.

If the researcher wants to make a decision between the two theories
using a threshold of 20, having observed nine documents supporting the
working theory and three supporting the rival, the question to ask is
about stronger claims. Under Section~\ref{sec-share-claim} the working
theory claims more than half of the evidence and the rival claims at
most half, and those two claims can differ by as little as one document
in a finite collection. Twelve observations cannot tell two nearly
identical claims apart. A working theory might instead claim that at
least 60 percent of the evidence supports it, and a rival might claim at
most 40 percent. Would the nine-to-three counts be stronger evidence
between those claims? They would, and this subsection computes how much
stronger, and finds the pair of claims at which the value reaches 20.

In the separated Bayes factor, the working theory claims that at least a
share \(1/2 + g\) of the evidence supports it, and the rival claims that
at most \(1/2 - g\) does --- each claim a distance \(g\) from an even
split. The researcher's weights over the values of \(\theta\) above one
half play no part in this version: each theory now names a value of
\(\theta\) directly, so the researcher evaluates each theory at the
value its claim names --- the rival's at \(1/2 - g\), the best case
consistent with that claim, and the working theory's at \(1/2 + g\), the
smallest value it names, so nothing about the working theory's claim
beyond its distance from the rival's enters the ratio. The researcher
computes the probability of the counts at the first value and divides by
the probability at the second, and everything cancels except the margin
\(d = k - r\):

\begin{equation}\protect\phantomsection\label{eq-separated}{
\Lambda_{g} \;=\; \frac{(1/2 + g)^{k}\,(1/2 - g)^{r}}{(1/2 - g)^{k}\,(1/2 + g)^{r}} \;=\; \left(\frac{1 + 2g}{1 - 2g}\right)^{\!k - r}.
}\end{equation}

\begin{corollary}[The error rate of the separated
version]\protect\hypertarget{cor-separated}{}\label{cor-separated}

Under the sampling of Theorem~\ref{thm-error-rate} --- independent
draws, or draws without replacement from any finite collection of
\(M \ge N\) documents --- whenever the share of the evidence supporting
the working theory is at most \(1/2 - g\), the average of \(\Lambda_g\)
is at most one, so the researcher computes \(\Lambda_g \ge t\) with
probability at most \(1/t\).

\end{corollary}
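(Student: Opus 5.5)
The plan is to obtain the corollary as a special case of Theorem~\ref{thm-error-rate}, with $\theta_0 = 1/2 - g$ and $\pi$ the point mass at $\theta = 1/2 + g$. Both values are fixed before coding once $g$ is fixed, and we assume $0 < g < 1/2$. Then $0<\theta_0<1$, and $\pi$ puts all its weight on a value strictly greater than $\theta_0$, so $\pi$ is admissible in the theorem's sense.

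With this choice, the integral defining $\Lambda_{\pi,\theta_0}(k)$ collapses to its integrand at the single point $1/2+g$:
\[
\Lambda_{\pi,\theta_0}(k)=\frac{(1/2+g)^{k}(1/2-g)^{N-k}}{(1/2-g)^{k}(1/2+g)^{N-k}}.
\]
Writing $r = N-k$, this is exactly the expression for $\Lambda_g$ in Equation~\ref{eq-separated}. The identity holds as a function of the count $k$, for every count the process can produce, and not only at the observed one. Consequently $\Lambda_g$ and $\Lambda_{\pi,\theta_0}$ are the same random variable under either sampling scheme.

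The theorem's hypothesis is that the share of the evidence supporting the working theory is at most $\theta_0 = 1/2 - g$. This is exactly the corollary's hypothesis. The theorem then gives two conclusions:
\begin{itemize}
\item the average of $\Lambda_g$ is at most one, both for $N$ independent draws and for draws without replacement from any collection of $M\ge N$ documents;
\item by the averaging lemma, $\Lambda_g \ge t$ occurs with probability at most $1/t$ for every $t>0$.
\end{itemize}

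The only step needing care is checking that the point mass meets the theorem's conditions. Two things could go wrong:
\begin{itemize}
\item If $g = 0$, the point mass would sit at $\theta_0$ itself, which is not strictly above it. In that case $\Lambda_g \equiv 1$ and the claim is trivial.
\item If $g = 1/2$, the point would sit at $\theta = 1$, which Step~3 of the theorem's proof handles separately. But then $\theta_0 = 0$, which the theorem excludes. We would simply restrict to $0<g<1/2$, which is the only range where $\Lambda_g$ is finite anyway.
\end{itemize}

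For a self-contained check independent of the theorem, Step~2 of its proof applies directly. Take $\lambda = (1/2+g)/(1/2-g)$ and $\mu = (1/2-g)/(1/2+g)$. For independent draws with true share $s$, this gives $\mathrm{E}[\Lambda_g] = (\mu + s(\lambda-\mu))^N \le 1$ exactly when $s \le 1/2-g$. Hoeffding's convexity comparison then extends the bound to draws without replacement.
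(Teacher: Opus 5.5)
Your proposal is correct and is the paper's proof: the paper's entire argument is the one-line specialization of Theorem~\ref{thm-error-rate} to \(\theta_0 = 1/2 - g\) and \(\pi\) the point mass at \(1/2 + g\). Your check that the integrand equals \(\Lambda_g\) at every count, your restriction to \(0 < g < 1/2\), and your direct Step~2 computation fill in details the paper leaves implicit. One small correction: \(\Lambda_g\) is also finite for \(-1/2 < g < 0\), so finiteness is not why negative \(g\) is excluded; the real reason is that the point mass would then sit below \(\theta_0\), where the theorem does not apply.
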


\begin{proof}
This is Theorem~\ref{thm-error-rate} with \(\theta_0 = 1/2 - g\) and
\(\pi\) the point mass at \(1/2 + g\).
\end{proof}

A researcher who does use this version does not choose \(g\), but solves
for the separation at which the version equals the threshold:

\begin{equation}\protect\phantomsection\label{eq-gap}{
g^{*}(t) \;=\; \frac{t^{1/d} - 1}{2\,\bigl(t^{1/d} + 1\bigr)}, \qquad d = k - r,
}\end{equation}

rounded up, never down, since rounding down leaves the value at the
printed separation just under the threshold. At nine of twelve,
\(g^{*}(20) = 0.123\): for the twelve observations to be 20-to-1
evidence for the working theory, the working theory must claim that at
least 62.3 percent of the evidence supports it and the rival at most
37.7 percent. A reader who thinks the two theories' claims are closer
than that gets a smaller number, and Equation~\ref{eq-separated} says
which one. This is the same reporting move the paper's sensitivity
analysis makes with the observation-bias factor \(\omega\)
(Section~\ref{sec-sens-bias} below): instead of asserting a value for a
quantity no one knows, the researcher reports the value at which the
conclusion would change and hands the substantive question --- how
lopsided does the working theory say the evidence would be? --- to the
readers who know the case.

Two things a reader might ask of these numbers hold for the separated
version and fail for the worst-case Bayes factor under the uniform
weights, and the chunk below checks both. First, a reader wants a value
above one exactly when the counts favor the working theory. The
separated version does that, because the base \((1+2g)/(1-2g)\) exceeds
one and the exponent is the margin. Second, a reader wants the two
theories treated alike, so that reversing every coding --- reading each
supporting document as opposing and each opposing document as supporting
--- sends the value to exactly its reciprocal. The separated version
does that too, because swapping \(k\) and \(r\) flips the sign of the
margin. The worst-case Bayes factor under the uniform weights satisfies
neither exactly: at seven of twelve it reports 0.56
(Section~\ref{sec-ceiling}), and reversing nine-and-three's codings
gives 0.132 rather than \(1/2.73 = 0.366\).

\begin{Shaded}
\begin{Highlighting}[]
\DocumentationTok{\#\# The separated version at the running example: the solved separation,}
\DocumentationTok{\#\# rounded up so the printed value reaches the threshold; the exact}
\DocumentationTok{\#\# reciprocity under reversed codings; and the uniform{-}weights worst{-}case value\textquotesingle{}s}
\DocumentationTok{\#\# failure of the same.}
\NormalTok{sep }\OtherTok{\textless{}{-}} \ControlFlowTok{function}\NormalTok{(kk, rr, g) ((}\DecValTok{1} \SpecialCharTok{+} \DecValTok{2} \SpecialCharTok{*}\NormalTok{ g) }\SpecialCharTok{/}\NormalTok{ (}\DecValTok{1} \SpecialCharTok{{-}} \DecValTok{2} \SpecialCharTok{*}\NormalTok{ g))}\SpecialCharTok{\^{}}\NormalTok{(kk }\SpecialCharTok{{-}}\NormalTok{ rr)}
\NormalTok{g\_exact }\OtherTok{\textless{}{-}}\NormalTok{ (}\DecValTok{20}\SpecialCharTok{\^{}}\NormalTok{(}\DecValTok{1} \SpecialCharTok{/} \DecValTok{6}\NormalTok{) }\SpecialCharTok{{-}} \DecValTok{1}\NormalTok{) }\SpecialCharTok{/}\NormalTok{ (}\DecValTok{2} \SpecialCharTok{*}\NormalTok{ (}\DecValTok{20}\SpecialCharTok{\^{}}\NormalTok{(}\DecValTok{1} \SpecialCharTok{/} \DecValTok{6}\NormalTok{) }\SpecialCharTok{+} \DecValTok{1}\NormalTok{))}
\NormalTok{g\_star }\OtherTok{\textless{}{-}} \FunctionTok{ceiling}\NormalTok{(g\_exact }\SpecialCharTok{*} \DecValTok{1000}\NormalTok{) }\SpecialCharTok{/} \DecValTok{1000}
\FunctionTok{stopifnot}\NormalTok{(}\FunctionTok{all.equal}\NormalTok{(}\FunctionTok{sep}\NormalTok{(}\DecValTok{9}\NormalTok{, }\DecValTok{3}\NormalTok{, g\_exact), }\DecValTok{20}\NormalTok{),}
\NormalTok{          g\_star }\SpecialCharTok{==} \FloatTok{0.123}\NormalTok{,}
          \FunctionTok{sep}\NormalTok{(}\DecValTok{9}\NormalTok{, }\DecValTok{3}\NormalTok{, g\_star) }\SpecialCharTok{\textgreater{}=} \DecValTok{20}\NormalTok{, }\FunctionTok{sep}\NormalTok{(}\DecValTok{9}\NormalTok{, }\DecValTok{3}\NormalTok{, g\_star }\SpecialCharTok{{-}} \FloatTok{0.001}\NormalTok{) }\SpecialCharTok{\textless{}} \DecValTok{20}\NormalTok{,}
          \FunctionTok{all.equal}\NormalTok{(}\FloatTok{0.5} \SpecialCharTok{+}\NormalTok{ g\_star, }\FloatTok{0.623}\NormalTok{),}
          \DocumentationTok{\#\# above one exactly when the counts favor the working theory}
          \FunctionTok{sep}\NormalTok{(}\DecValTok{7}\NormalTok{, }\DecValTok{5}\NormalTok{, }\FloatTok{0.2}\NormalTok{) }\SpecialCharTok{\textgreater{}} \DecValTok{1}\NormalTok{, }\FunctionTok{all.equal}\NormalTok{(}\FunctionTok{sep}\NormalTok{(}\DecValTok{6}\NormalTok{, }\DecValTok{6}\NormalTok{, }\FloatTok{0.2}\NormalTok{), }\DecValTok{1}\NormalTok{), }\FunctionTok{sep}\NormalTok{(}\DecValTok{5}\NormalTok{, }\DecValTok{7}\NormalTok{, }\FloatTok{0.2}\NormalTok{) }\SpecialCharTok{\textless{}} \DecValTok{1}\NormalTok{,}
          \DocumentationTok{\#\# reversed codings give the exact reciprocal}
          \FunctionTok{all.equal}\NormalTok{(}\FunctionTok{sep}\NormalTok{(}\DecValTok{3}\NormalTok{, }\DecValTok{9}\NormalTok{, g\_star), }\DecValTok{1} \SpecialCharTok{/} \FunctionTok{sep}\NormalTok{(}\DecValTok{9}\NormalTok{, }\DecValTok{3}\NormalTok{, g\_star)),}
          \DocumentationTok{\#\# which the worst{-}case Bayes factor under the uniform weights does not}
          \FunctionTok{all.equal}\NormalTok{(}\FunctionTok{round}\NormalTok{(}\FunctionTok{Lambda}\NormalTok{(}\DecValTok{3}\NormalTok{, }\DecValTok{12}\NormalTok{), }\DecValTok{3}\NormalTok{), }\FloatTok{0.132}\NormalTok{),}
          \FunctionTok{abs}\NormalTok{(}\FunctionTok{Lambda}\NormalTok{(}\DecValTok{3}\NormalTok{, }\DecValTok{12}\NormalTok{) }\SpecialCharTok{{-}} \DecValTok{1} \SpecialCharTok{/} \FunctionTok{Lambda}\NormalTok{(}\DecValTok{9}\NormalTok{, }\DecValTok{12}\NormalTok{)) }\SpecialCharTok{\textgreater{}} \FloatTok{0.1}\NormalTok{)}
\FunctionTok{c}\NormalTok{(}\AttributeTok{g\_star =}\NormalTok{ g\_star, }\AttributeTok{working\_share =} \FloatTok{0.5} \SpecialCharTok{+}\NormalTok{ g\_star, }\AttributeTok{rival\_share =} \FloatTok{0.5} \SpecialCharTok{{-}}\NormalTok{ g\_star)}
\end{Highlighting}
\end{Shaded}

\begin{verbatim}
       g_star working_share   rival_share 
        0.123         0.623         0.377 
\end{verbatim}

\subsection{Why both theories must describe one evidence
base}\label{sec-shared-archive}

The paper computes both Bayes factors from one model, in which the two
theories are claims about one quantity, the share of one evidence base
that supports the working theory. A reader may ask why we do not let
each proponent describe the collection that proponent's own theory
implies, with a prior over both the number of supporting and the number
of opposing documents. The answer is that independent descriptions
introduce a disagreement no theory asserts: a disagreement about how
many documents exist. This is the reason the paper has one model rather
than two.

\begin{proposition}[Independent priors on the two counts undo the shared
evidence
base]\protect\hypertarget{prp-shared-archive}{}\label{prp-shared-archive}

Let the working theory carry a prior over pairs (number of supporting
documents, number of opposing documents) and the rival another, each
stated before coding and each giving both counts at least one. If the
two priors imply different distributions for the collection's total
size, then the ratio of the two averaged probabilities of the counts
compares two collections of different sizes as well as two values of
\(\theta\), and priors obeying every stated constraint can put that
ratio anywhere in \((0, \infty)\) at the same counts.

\end{proposition}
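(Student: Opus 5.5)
We prove the proposition by construction. We fix the observed counts \((k, r)\) and exhibit, for any target value \(c \in (0,\infty)\), two admissible priors whose averaged probabilities of the counts have ratio exactly \(c\). Under each theory, the probability of the counts given a collection with \(K_1\) supporting and \(K_0\) opposing documents is the hypergeometric probability of Equation~\ref{eq-hypergeometric-supp}, with \(M = K_1 + K_0\). The ratio of the two averaged probabilities is therefore a ratio of two mixtures of hypergeometric terms. Everything turns on how those terms behave when \(M\) is allowed to vary.

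First, we record the key scaling fact: at fixed counts and fixed share, the probability depends on the size of the collection. The cleanest lever is the case \(M < N\), where the probability is zero. That case is excluded if the priors are required to cover collections of at least \(N\) documents, so we also use a second lever. Take collections with \(K_1 = 1\) or \(K_0 = 1\), which obey the constraint that both counts are at least one. As \(M\) grows with \(K_1\) fixed at a small value, the probability of drawing \(k\) supporting documents tends to zero whenever \(k > K_1\). Near-zero probability can also be produced by taking \(K_1 < k\) outright, with the opposing count large. So for each theory we can pick compositions inside its own range of \(\theta\) (\(\theta = K_1/M > 1/2\) for the working theory, \(\le 1/2\) for the rival) whose hypergeometric probability of the counts is an arbitrarily small positive number. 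We can also pick one whose probability is bounded below by a fixed positive constant, for instance a collection of size exactly \(N\) composed to match the counts, shifted by one document if needed to land in the right range.

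Second, we build the priors as two-point mixtures. Give the working theory a prior putting mass \(1-\epsilon\) on a "good" composition with probability \(a > 0\) and mass \(\epsilon\) on a "bad" one with probability near zero. Give the rival a point mass on a composition with probability \(b > 0\). Sliding the working theory's good composition toward a larger collection drives its probability continuously, in the limit, from \(a\) toward \(0\), which pushes the ratio toward \(0\). Doing the same on the rival's side pushes the ratio toward \(\infty\). Mixing two compositions with weights \(p\) and \(1-p\) makes the averaged probability a linear, hence continuous, function of \(p\). By the intermediate value theorem, every value between the two extremes is then attained exactly, which gives the whole of \((0,\infty)\). Finally, we check that the two priors we used imply different distributions of the total size \(M\), as the hypothesis requires. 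This is automatic here, because the extreme values arise precisely by moving one side's mass to very different sizes.

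The main obstacle is the edge cases in the first step. When \(k = N\) or \(r = 0\), we must show that compositions with both counts at least one still produce arbitrarily small probabilities. Taking \(K_1 = k\) exactly with \(K_0\) huge gives a probability of order \(M^{-k}\), which handles it. We also need each side's compositions to stay within its own range of \(\theta\) while doing this: for the working theory we take \(K_1\) large and let \(M\) grow through the opposing count only up to \(K_0 < K_1\). In that regime the probability does not vanish, so the working-theory side can only be made small through a different route, choosing \(K_1\) large so that drawing \(r\) opposing documents becomes unlikely. This works whenever \(r \ge 1\). If \(r = 0\), we instead drive the rival's probability up, or the working theory's down via huge \(K_1\) alongside the smallest \(K_0\), and the two-sided limits are assembled from whichever side can move. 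I would verify the limits explicitly from the hypergeometric formula, for example as \(\binom{K_1}{k}\binom{K_0}{r}/\binom{M}{N} \to 0\), case by case.
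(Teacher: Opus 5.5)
Your main construction is the paper's. For counts with $k \ge 1$ and $r \ge 1$, the working theory fixes its opposing count at $r$ and lets the supporting count grow, which gives a probability of order $M^{-r}$. The rival fixes its supporting count at $k$ and lets the opposing count grow, which gives order $M^{-k}$. A mixture, continuous in its weight, then fills in every value in between. The paper carries this out at $(9,3)$, mixing on the rival's side only.

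The step that fails is your treatment of $r = 0$.
\begin{itemize}
\item Letting $K_1$ grow beside the smallest $K_0$ sends the working theory's probability of a unanimous sample toward $1$, not $0$.
\item Driving the rival's probability up is capped. Under the rival's claim each factor $(K_1-i)/(M-i)$ is at most $K_1/M \le 1/2$, so the rival gives the counts at most $2^{-N}$.
\item Every working-range composition that gives a unanimous sample positive probability gives it at least some $c_N > 0$. Large collections give nearly $\theta^N \ge 2^{-N}$, and only finitely many small compositions remain. So no prior supported on such compositions pushes the ratio below $2^N c_N$.
\end{itemize}
The ratio can approach $0$ at $r = 0$ only if the working theory puts almost all its mass on a composition with fewer than $N$ supporting documents, such as $(K_1,K_0) = (N-1,1)$, where the probability is exactly zero. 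That composition lies in the working range only for $N \ge 3$. At $(1,0)$ and $(2,0)$ the ratio always exceeds $1$, so under the $\theta$-range constraint you impose, the claim is false there. Either add this device and exclude $N \le 2$, or restrict to $r \ge 1$, which is all the paper's proof covers.

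Several smaller statements also need repair.
\begin{itemize}
\item When $K_1 < k$ the probability is exactly zero, not ``tending to zero.'' A point mass there gives a ratio of $0$ or an undefined one, outside $(0,\infty)$. It is usable only as one component of a mixture, which is the role it plays in the repair above.
\item A collection ``of size exactly $N$, shifted by one document'' cannot reach the rival's range when $k \ge r+2$. A positive-probability rival composition needs $k \le K_1 \le M/2$. At $(9,3)$, for example, nine supporting of thirteen has $\theta = 9/13$. Use a large evenly split collection instead, as the paper does with 500 of 1000; it gives nearly $\binom{N}{k}2^{-N}$.
\item There is no continuity in the discrete size $M$. The continuity you need is only in the mixing weight. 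To connect a configuration below your target to one above it, mix both priors with the same weight, or, as the paper does, hold the working theory fixed and mix only the rival's.
\item Different size distributions are not automatic: the paper's own first pair, 997 of 1000 against 500 of 1000, shares a size. You can always choose the two sizes to differ.
\end{itemize}
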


\begin{proof}
Two point masses suffice, and each proponent can describe a collection
whose value of \(\theta\) is consistent with that proponent's own
theory. Take a rival proponent who describes a collection of a thousand
documents evenly split, five hundred supporting and five hundred
opposing, so \(\theta = 1/2\), which is consistent with the rival's
claim of at most half. The nine of twelve has probability 0.053 under
that description, close to the fair-coin value, so the denominator is
near its largest. Take a working-theory proponent who describes a
thousand documents of which 997 support the working theory, a value of
\(\theta\) consistent with that claim of more than half. The nine of
twelve is then nearly impossible, because a draw of twelve from that
collection almost always returns twelve supporting documents, and the
numerator falls to about one in a million. The ratio falls to 0.000025.
Reversing the two roles drives it arbitrarily high: a working theory
describing nineteen documents, fifteen of them supporting, against a
rival describing five thousand of which nine support, gives a ratio
above \(10^{24}\). Neither extreme is a limit. A working-theory
proponent who describes the three opposing documents inside a collection
of \(M\) documents that otherwise all support the working theory makes
the nine of twelve rarer as \(M\) grows, because a draw of twelve then
almost never catches all three opposing documents: the probability is
\(1.3 \times 10^{-6}\) at \(M = 1{,}000\) and \(1.3 \times 10^{-12}\) at
\(M = 100{,}000\), and it falls toward zero. A rival proponent who
describes the nine supporting documents inside \(M\) documents that
otherwise all oppose the working theory drives the denominator toward
zero the same way, and the ratio upward without bound. Mixing two
descriptions of the rival's collection, with weight \(\lambda\) on one
and \(1 - \lambda\) on the other, moves the ratio continuously between
the values the two descriptions give as \(\lambda\) runs from zero to
one, so every value in between is reached as well, and together the two
limits and the mixing reach every value in \((0, \infty)\). Every one of
these four descriptions respects its own theory's claim about
\(\theta\), and every one is stated before coding, so neither the timing
rule nor the claims about \(\theta\) block the construction. Only a
shared evidence base does.
\end{proof}

\begin{Shaded}
\begin{Highlighting}[]
\DocumentationTok{\#\# Proposition by construction at nine of three: the same counts, both}
\DocumentationTok{\#\# priors stated in advance and each respecting its theory\textquotesingle{}s claim about theta}
\DocumentationTok{\#\# claim, and the ratio driven below 1/100 and above 10,000 by size}
\DocumentationTok{\#\# disagreements alone.}
\NormalTok{p\_hyp }\OtherTok{\textless{}{-}} \ControlFlowTok{function}\NormalTok{(kk, K1, M, N) \{}
  \FunctionTok{choose}\NormalTok{(K1, kk) }\SpecialCharTok{*} \FunctionTok{choose}\NormalTok{(M }\SpecialCharTok{{-}}\NormalTok{ K1, N }\SpecialCharTok{{-}}\NormalTok{ kk) }\SpecialCharTok{/} \FunctionTok{choose}\NormalTok{(M, N)}
\NormalTok{\}}
\NormalTok{den\_even   }\OtherTok{\textless{}{-}} \FunctionTok{p\_hyp}\NormalTok{(}\DecValTok{9}\NormalTok{, }\DecValTok{500}\NormalTok{, }\DecValTok{1000}\NormalTok{, }\DecValTok{12}\NormalTok{)     }\CommentTok{\# rival: 1000 documents, evenly split}
\NormalTok{num\_lopsided }\OtherTok{\textless{}{-}} \FunctionTok{p\_hyp}\NormalTok{(}\DecValTok{9}\NormalTok{, }\DecValTok{997}\NormalTok{, }\DecValTok{1000}\NormalTok{, }\DecValTok{12}\NormalTok{)   }\CommentTok{\# working theory: 997 of 1000 support}
\NormalTok{ratio\_low  }\OtherTok{\textless{}{-}}\NormalTok{ num\_lopsided }\SpecialCharTok{/}\NormalTok{ den\_even}
\NormalTok{ratio\_high }\OtherTok{\textless{}{-}} \FunctionTok{p\_hyp}\NormalTok{(}\DecValTok{9}\NormalTok{, }\DecValTok{15}\NormalTok{, }\DecValTok{19}\NormalTok{, }\DecValTok{12}\NormalTok{) }\SpecialCharTok{/} \FunctionTok{p\_hyp}\NormalTok{(}\DecValTok{9}\NormalTok{, }\DecValTok{9}\NormalTok{, }\DecValTok{5000}\NormalTok{, }\DecValTok{12}\NormalTok{)}
\FunctionTok{stopifnot}\NormalTok{(}\FunctionTok{all.equal}\NormalTok{(}\FunctionTok{round}\NormalTok{(den\_even, }\DecValTok{3}\NormalTok{), }\FloatTok{0.053}\NormalTok{),}
\NormalTok{          num\_lopsided }\SpecialCharTok{\textless{}} \FloatTok{2e{-}6}\NormalTok{,}
          \FunctionTok{all.equal}\NormalTok{(}\FunctionTok{signif}\NormalTok{(ratio\_low, }\DecValTok{2}\NormalTok{), }\FloatTok{2.5e{-}5}\NormalTok{),}
\NormalTok{          ratio\_low }\SpecialCharTok{\textless{}} \FloatTok{1e{-}2}\NormalTok{, ratio\_high }\SpecialCharTok{\textgreater{}} \FloatTok{1e24}\NormalTok{,}
          \DocumentationTok{\#\# the limiting families the proof names: the three opposing}
          \DocumentationTok{\#\# documents inside M otherwise supporting, and the nine}
          \DocumentationTok{\#\# supporting inside M otherwise opposing, both falling}
          \DocumentationTok{\#\# toward zero as M grows}
          \FunctionTok{all.equal}\NormalTok{(}\FunctionTok{signif}\NormalTok{(}\FunctionTok{p\_hyp}\NormalTok{(}\DecValTok{9}\NormalTok{, }\DecValTok{997}\NormalTok{, }\DecValTok{1000}\NormalTok{, }\DecValTok{12}\NormalTok{), }\DecValTok{2}\NormalTok{), }\FloatTok{1.3e{-}6}\NormalTok{),}
          \FunctionTok{all.equal}\NormalTok{(}\FunctionTok{signif}\NormalTok{(}\FunctionTok{p\_hyp}\NormalTok{(}\DecValTok{9}\NormalTok{, }\DecValTok{99997}\NormalTok{, }\DecValTok{100000}\NormalTok{, }\DecValTok{12}\NormalTok{), }\DecValTok{2}\NormalTok{), }\FloatTok{1.3e{-}12}\NormalTok{),}
          \FunctionTok{p\_hyp}\NormalTok{(}\DecValTok{9}\NormalTok{, }\DecValTok{9}\NormalTok{, }\DecValTok{100000}\NormalTok{, }\DecValTok{12}\NormalTok{) }\SpecialCharTok{\textless{}} \FunctionTok{p\_hyp}\NormalTok{(}\DecValTok{9}\NormalTok{, }\DecValTok{9}\NormalTok{, }\DecValTok{1000}\NormalTok{, }\DecValTok{12}\NormalTok{),}
          \FunctionTok{p\_hyp}\NormalTok{(}\DecValTok{9}\NormalTok{, }\DecValTok{9}\NormalTok{, }\DecValTok{1000}\NormalTok{, }\DecValTok{12}\NormalTok{) }\SpecialCharTok{\textless{}} \FloatTok{1e{-}19}\NormalTok{)}
\FunctionTok{c}\NormalTok{(}\AttributeTok{ratio\_low =}\NormalTok{ ratio\_low, }\AttributeTok{ratio\_high =}\NormalTok{ ratio\_high)}
\end{Highlighting}
\end{Shaded}

\begin{verbatim}
   ratio_low   ratio_high 
2.495165e-05 9.650562e+24 
\end{verbatim}

No theory of the collapse of democracy in country \countrylabel claims
anything about how many documents exist. Stating each theory as a claim
about \(\theta\) (Section~\ref{sec-share-claim}) is how the model
enforces that: both theories describe the share of whatever evidence
exists --- the rival says at most half supports the working theory, the
researcher's weights say more than half does --- and neither says a word
about its size. Theorem~\ref{thm-error-rate} then makes the size
question moot for the error rate, which holds at every size at once.

\subsection{Varying the stated weights}\label{sec-weight-sensitivity}

Once the counts are agreed and the rival's claim is fixed at one half,
the worst-case Bayes factor has exactly one input left for a critic to
dispute: the weights the researcher states over the values of \(\theta\)
consistent with the working theory. The uniform-weights Bayes factor has
the same input and one more, the weights on the values of \(\theta\)
consistent with the rival's claim, and the prior-sensitivity exercise
below varies both together. Here the exercise is simpler to state,
because Theorem~\ref{thm-error-rate} holds for every admissible set of
weights: varying the weights moves the reported value and leaves the
error rate at one in twenty untouched.

The chunk below computes the worst-case Bayes factor at nine of twelve
under three weightings. The uniform weights give 2.73. Weights that put
two thirds of the total on values of \(\theta\) between one half and two
thirds --- a researcher whose theory predicts a modest majority --- give
2.55. And the largest value any weighting can give is the ceiling of
Proposition~\ref{prp-ceiling}, 4.81, reached only by weights that sit
entirely on \(\theta = 3/4\), weights a researcher may state before
coding, though that would mean predicting that exact value in advance.
Those three numbers do not mark the ends of the range. Weights
concentrated just above one half give 1.12, and weights concentrated at
\(\theta = 0.99\) --- a theory predicting that almost every document
supports it, which fits nine of twelve badly --- give 0.004. So
admissible weights can produce anything from near zero up to 4.81, and
no choice among them turns nine of twelve into a Bayes factor near 20.

\begin{Shaded}
\begin{Highlighting}[]
\DocumentationTok{\#\# Three weightings at nine of twelve. The alternative puts density 4}
\DocumentationTok{\#\# on (1/2, 2/3] and density 1 on (2/3, 1], which integrates to one and}
\DocumentationTok{\#\# places two thirds of the weight below two thirds.}
\NormalTok{lik }\OtherTok{\textless{}{-}} \ControlFlowTok{function}\NormalTok{(t) }\FunctionTok{dbinom}\NormalTok{(}\DecValTok{9}\NormalTok{, }\DecValTok{12}\NormalTok{, t)}
\NormalTok{alt }\OtherTok{\textless{}{-}}\NormalTok{ (}\DecValTok{4} \SpecialCharTok{*} \FunctionTok{integrate}\NormalTok{(lik, }\DecValTok{1}\SpecialCharTok{/}\DecValTok{2}\NormalTok{, }\DecValTok{2}\SpecialCharTok{/}\DecValTok{3}\NormalTok{)}\SpecialCharTok{$}\NormalTok{value }\SpecialCharTok{+}
        \DecValTok{1} \SpecialCharTok{*} \FunctionTok{integrate}\NormalTok{(lik, }\DecValTok{2}\SpecialCharTok{/}\DecValTok{3}\NormalTok{, }\DecValTok{1}\NormalTok{)}\SpecialCharTok{$}\NormalTok{value) }\SpecialCharTok{/} \FunctionTok{dbinom}\NormalTok{(}\DecValTok{9}\NormalTok{, }\DecValTok{12}\NormalTok{, }\FloatTok{0.5}\NormalTok{)}
\FunctionTok{stopifnot}\NormalTok{(}\FunctionTok{all.equal}\NormalTok{(}\FunctionTok{round}\NormalTok{(alt, }\DecValTok{2}\NormalTok{), }\FloatTok{2.55}\NormalTok{),}
          \FunctionTok{all.equal}\NormalTok{(}\FunctionTok{round}\NormalTok{(}\FunctionTok{Lambda}\NormalTok{(}\DecValTok{9}\NormalTok{, }\DecValTok{12}\NormalTok{), }\DecValTok{2}\NormalTok{), }\FloatTok{2.73}\NormalTok{),}
\NormalTok{          alt }\SpecialCharTok{\textless{}} \FunctionTok{Lambda}\NormalTok{(}\DecValTok{9}\NormalTok{, }\DecValTok{12}\NormalTok{),}
          \FunctionTok{Lambda}\NormalTok{(}\DecValTok{9}\NormalTok{, }\DecValTok{12}\NormalTok{) }\SpecialCharTok{\textless{}} \FunctionTok{ceiling\_lr}\NormalTok{(}\DecValTok{9}\NormalTok{, }\DecValTok{12}\NormalTok{),}
          \DocumentationTok{\#\# the two point masses the prose names: just above one half,}
          \DocumentationTok{\#\# and at 0.99}
          \FunctionTok{all.equal}\NormalTok{(}\FunctionTok{round}\NormalTok{(}\FunctionTok{lik}\NormalTok{(}\FloatTok{0.51}\NormalTok{) }\SpecialCharTok{/} \FunctionTok{dbinom}\NormalTok{(}\DecValTok{9}\NormalTok{, }\DecValTok{12}\NormalTok{, }\FloatTok{0.5}\NormalTok{), }\DecValTok{2}\NormalTok{), }\FloatTok{1.12}\NormalTok{),}
          \FunctionTok{all.equal}\NormalTok{(}\FunctionTok{round}\NormalTok{(}\FunctionTok{lik}\NormalTok{(}\FloatTok{0.99}\NormalTok{) }\SpecialCharTok{/} \FunctionTok{dbinom}\NormalTok{(}\DecValTok{9}\NormalTok{, }\DecValTok{12}\NormalTok{, }\FloatTok{0.5}\NormalTok{), }\DecValTok{3}\NormalTok{), }\FloatTok{0.004}\NormalTok{))}
\FunctionTok{round}\NormalTok{(}\FunctionTok{c}\NormalTok{(}\AttributeTok{alternative =}\NormalTok{ alt, }\AttributeTok{uniform =} \FunctionTok{Lambda}\NormalTok{(}\DecValTok{9}\NormalTok{, }\DecValTok{12}\NormalTok{),}
        \AttributeTok{ceiling =} \FunctionTok{ceiling\_lr}\NormalTok{(}\DecValTok{9}\NormalTok{, }\DecValTok{12}\NormalTok{)), }\DecValTok{2}\NormalTok{)}
\end{Highlighting}
\end{Shaded}

\begin{verbatim}
alternative     uniform     ceiling 
       2.55        2.73        4.81 
\end{verbatim}

This subsection is also our answer, for the worst-case Bayes factor, to
a question a reader of the previous section may carry over: whether the
uniform weighting is a substantive choice. It is, for both Bayes
factors. On the rival's side the worst-case construction removes the
question --- Proposition~\ref{prp-bound} needs no prior from the rival's
proponent --- and on the working theory's side the choice remains
substantive, its numerical consequences are the values just computed,
and one question about it stays open. A researcher who states weights of
a given shape has said something about the world: the theory predicts
that the share of the evidence supporting it lies where those weights
are heaviest. We have not said how to argue for one shape rather than
another from what is known about the case, and we do not say it here.

\subsection{Two coding rules}\label{sec-coding-rules}

The applications in Section~\ref{sec-applications} turned up two
questions our coding instructions did not settle. We state the rule we
follow for each, with the reason, and for the first we report what the
counts would be under the alternative, from the applications' audit
table (\texttt{replications/audit\_quantitative\_rows.csv}).

\subsubsection{Rule 1: results of a statistical model enter the counts
as
observations}\label{rule-1-results-of-a-statistical-model-enter-the-counts-as-observations}

Several observations our coding credited to Hammoud-Gallego and Freier
are results of their quantitative analysis --- coefficients and
significance tests from their regressions --- as a reader who knows that
study pointed out to us, and two of Mor's are results of that study's
constituency-level statistical analysis. We count them. An observation,
in our coding, is a finding the study reports that both coders
classified as supporting one theory or the other, and a regression
result the study reports is such a finding: we count it once, at weight
one, as we count an interview or a document. The alternative is to
exclude statistical-model outputs on the ground that a coefficient
summarizes many draws the model has already pooled, so that counting it
once understates what a sound estimate carries and overstates what a
fragile one does. The audit table reports the counts under that
alternative. Hammoud-Gallego and Freier's counts would fall from
eleven-and-two to five-and-one and their uniform-weights Bayes factor
from 153.6 to 15.0. Mor's would fall from eight-and-two to six-and-two
and from 29.6 to 10.1. Both would be below the threshold. For a reader
who holds that view, those are the two studies' Bayes factors from their
process-tracing evidence alone. The table also reports every study's
counts with all numeric comparisons excluded, whether or not a model
produced them.

\subsubsection{Rule 2: observations coded ambiguous stay out of the
counts}\label{rule-2-observations-coded-ambiguous-stay-out-of-the-counts}

Setting aside ambiguous observations follows from stating each theory as
a claim about \(\theta\). The rival's claim concerns the share of the
evidence the researcher could have examined that supports the working
theory, counting only documents that bear on the dispute between the two
theories; an observation neither coder can read as supporting either
theory is a document the dispute does not classify, and including it in
\(N\) would dilute both theories' claims equally while telling neither
theory's story. The rule sets aside two kinds of observation, and each
application names both: observations neither coder could read as
supporting either theory, and observations only one coder read that way.
The sensitivity tables say how many of the agreed observations a peer
would have to re-code before each conclusion changed; the set-aside
observations are where a disagreement would start, and each application
says so.

\section{Probative weight}\label{sec-weights}

Both Bayes factors as developed so far count each observation as a
single piece of evidence. Some traditions of process tracing assign
different probative weights to different observations: a ``smoking gun''
finding may be regarded as more decisive than a routine confirmation,
and a ``doubly decisive'' observation may carry the weight of several
confirmations combined (Van Evera 1997; Collier 2011). Fairfield and
Charman (2022, 130--38) formalize this idea through decibel ratings,
which translate into multiplicative factors on the likelihood ratio.

We do not require probative weights for the framework to function. The
default \(w_i = 1\) for every observation records that weights have been
set aside --- the analyst has made no judgment about differential
probative force --- not an affirmative claim that every observation
carries exactly the same force. We have used this default throughout the
preceding sections. The default is a baseline to be examined --- in the
same sense that the uniform weights on \(\theta\) and the
observation-bias factor's \(\omega = 1\) are baselines --- not a
commitment to exchangeable evidence. When a researcher believes one or
more observations carry meaningfully more evidential weight than others,
both Bayes factors can be recomputed at the weighted counts. We develop
the extension here so that the sensitivity analysis can treat weights as
a third question alongside coding errors and observation bias.

\subsection{Weighted observations}\label{weighted-observations}

Let each observation \(i\) carry a positive integer weight
\(w_i \geq 1\), with \(w_i = 1\) as the default. Define the weighted
totals \[
W = \sum_{i\,\in\,H_1} w_i, \qquad R = \sum_{j\,\in\,H_R} w_j,
\] and treat the data as \(W + R\) unit-weight observations: \(W\)
favoring \(H_1\) and \(R\) favoring \(H_R\).

This rule is operationally equivalent to ``an observation of weight
\(w_i\) counts as \(w_i\) identical observations.'' Restricting weights
to positive integers keeps every weighted count a whole number, so both
Bayes factors are computed exactly. A researcher who wants a non-integer
weight, say two and a half routine observations, rounds it to the
nearest whole number. Multiplying every weight by a common factor to
make them whole would not do: it changes the counts, and with them both
Bayes factors. A decibel value is not a weight in this sense --- it
fixes an observation's likelihood ratio directly, where a weight says
only how many routine observations one item counts as --- so we do not
convert decibels into weights.

\subsection{The uniform-weights Bayes factor under
weights}\label{the-uniform-weights-bayes-factor-under-weights}

For the uniform-weights Bayes factor, the weighted totals \(W\) and
\(R\) replace the unit counts \(k\) and \(r\) throughout. The binomial
likelihood treats the weighted data as \(W + R\) unit-weight Bernoulli
trials, and the probabilities \(p(E\mid H_1)\) and \(p(E\mid H_R)\) are
computed as the average in each range of the corresponding binomial
formula. Under the uniform weights within each range, the ratio of the
two averages equals the ratio of the posterior probabilities
\(\Pr(\theta > 1/2 \mid E) / \Pr(\theta \le 1/2 \mid E)\) under the
Beta\((W+1, R+1)\) posterior of Theorem~\ref{thm-posterior}, which is
how the replication code computes it. Written out under the uniform
weights,
\[\text{BF}=\dfrac{\displaystyle\int_{0.5}^1 \theta^{W}(1-\theta)^{R}\,d\theta}{\displaystyle\int_{0}^{0.5} \theta^{W}(1-\theta)^{R}\,d\theta}.\]
Prior sensitivity (the additive property of Theorem~\ref{thm-additive})
extends in the same way, with pseudo-observations counted against the
weighted sample size \(W + R\) rather than the unit-weight sample size
\(N\).

\subsection{The worst-case Bayes factor under
weights}\label{the-worst-case-bayes-factor-under-weights}

We recompute the worst-case Bayes factor at the weighted counts:
\(\Lambda(W)\) with sample size \(W + R\) in place of \(\Lambda(k)\)
with sample size \(N\), and the separated version likewise. Setting
\(w_i = 1\) for all \(i\) recovers the unweighted value.

At the weighted counts both Bayes factors are what the two formulas
give. Proposition~\ref{prp-bound} needs \(W \ge R\) at the weighted
counts, which the applications satisfy, so the weighted worst-case value
is still a lower bound over the rival's weights.
Theorem~\ref{thm-error-rate} is proved for unit weights, and we have not
proved it for weighted counts.

\subsection{A worked example with a smoking
gun}\label{a-worked-example-with-a-smoking-gun}

Returning to the running example with \(k = 9\) pro-\(H_1\) observations
and \(r = 3\) pro-\(H_R\) observations, suppose the researcher
identifies one of the nine pro-\(H_1\) observations --- a secret memo
--- as a smoking gun worth ten unit-weight observations
(\(w_{\text{smoke}} = 10\)). The remaining eight pro-\(H_1\)
observations and the three pro-\(H_R\) observations keep their unit
weights. The weighted totals are \(W = 8 + 10 = 18\) and \(R = 3\).

\begin{Shaded}
\begin{Highlighting}[]
\DocumentationTok{\#\# The weighted worst{-}case Bayes factor at (W, R) = (18, 3): a sensitivity}
\DocumentationTok{\#\# analysis, not an exact report (no proof extends the error rate to}
\DocumentationTok{\#\# weighted draws). Lambda comes from the closed form of the bounded}
\DocumentationTok{\#\# section.}
\NormalTok{W }\OtherTok{\textless{}{-}} \DecValTok{18}\NormalTok{; R }\OtherTok{\textless{}{-}} \DecValTok{3}
\NormalTok{Lambda\_w }\OtherTok{\textless{}{-}} \ControlFlowTok{function}\NormalTok{(j, n) }\FunctionTok{sum}\NormalTok{(}\FunctionTok{choose}\NormalTok{(n }\SpecialCharTok{+} \DecValTok{1}\NormalTok{, }\DecValTok{0}\SpecialCharTok{:}\NormalTok{j)) }\SpecialCharTok{/}\NormalTok{ ((n }\SpecialCharTok{+} \DecValTok{1}\NormalTok{) }\SpecialCharTok{*} \FunctionTok{choose}\NormalTok{(n, j))}
\NormalTok{BF\_w }\OtherTok{\textless{}{-}} \FunctionTok{Lambda\_w}\NormalTok{(W, W }\SpecialCharTok{+}\NormalTok{ R)}
\DocumentationTok{\#\# The uniform{-}weights Bayes factor at the same weighted counts.}
\NormalTok{BF\_w\_binom }\OtherTok{\textless{}{-}}\NormalTok{ (}\DecValTok{1} \SpecialCharTok{{-}} \FunctionTok{pbeta}\NormalTok{(}\FloatTok{0.5}\NormalTok{, W }\SpecialCharTok{+} \DecValTok{1}\NormalTok{, R }\SpecialCharTok{+} \DecValTok{1}\NormalTok{)) }\SpecialCharTok{/} \FunctionTok{pbeta}\NormalTok{(}\FloatTok{0.5}\NormalTok{, W }\SpecialCharTok{+} \DecValTok{1}\NormalTok{, R }\SpecialCharTok{+} \DecValTok{1}\NormalTok{)}
\FunctionTok{stopifnot}\NormalTok{(}\FunctionTok{all.equal}\NormalTok{(}\FunctionTok{round}\NormalTok{(BF\_w, }\DecValTok{1}\NormalTok{), }\FloatTok{143.3}\NormalTok{),}
          \FunctionTok{all.equal}\NormalTok{(}\FunctionTok{round}\NormalTok{(BF\_w\_binom), }\DecValTok{2337}\NormalTok{))}
\FunctionTok{cat}\NormalTok{(}\StringTok{"Weighted Bayes factors at (18, 3): uniform{-}weights"}\NormalTok{,}
    \FunctionTok{round}\NormalTok{(BF\_w\_binom), }\StringTok{"and worst{-}case"}\NormalTok{, }\FunctionTok{round}\NormalTok{(BF\_w, }\DecValTok{1}\NormalTok{), }\StringTok{"}\SpecialCharTok{\textbackslash{}n}\StringTok{"}\NormalTok{)}
\end{Highlighting}
\end{Shaded}

\begin{verbatim}
Weighted Bayes factors at (18, 3): uniform-weights 2337 and worst-case 143.3 
\end{verbatim}

If the smoking-gun memo were really worth ten routine observations, the
uniform-weights Bayes factor would rise from 20.67 to 2,337 and the
worst-case Bayes factor from 2.73 to 143.3, both above the threshold of
20. Whether the memo is worth ten routine observations is the
researcher's judgment, and the weighted value depends on it. What the
number answers is the sensitivity question --- whether a weight on that
one memo that the researcher can give reasons for would put the
worst-case Bayes factor above 20 --- and the answer at these counts is
yes, which makes the memo's weight the question for scholars who know
the case to argue.

\subsection{What this extension does and does not
do}\label{what-this-extension-does-and-does-not-do}

First, it treats weights as a \emph{re-expression of the data}: a
smoking-gun observation of weight 3 is taken to count as three identical
unit-weight observations. It does not introduce a new probabilistic
mechanism. Second, it is silent on how weights are chosen. A researcher
may calibrate weights from prior knowledge (decibel ratings, expert
judgment), or may treat weights as a sensitivity parameter to be varied
across plausible values. We use the second mode in the sensitivity
analysis below.

\section{Sensitivity Analysis}\label{sensitivity-analysis}

In a sensitivity analysis the researcher asks how much an assumption
behind the Bayes factor would have to fail before the conclusion
changed.

We treat five sensitivity questions below, each in its own subsection:
\textbf{coding} (how many observations would need to be re-coded to
change what the researcher reports?), \textbf{observation bias} (how
strongly would the researcher's search have to have favored one theory's
evidence over the other's?), \textbf{probative weight} (how much extra
weight would a smoking-gun observation have to carry?), the
\textbf{weights on \(\theta\)} in the uniform-weights Bayes factor (how
far toward the rival would the weights within each theory's range have
to move?), and \textbf{dependence among observations} (how much would
treating correlated observations as independent overstate the
evidence?). For each question we recompute both Bayes factors and
examine how the conclusion shifts.

Two scoping remarks before the questions. First, the worst-case Bayes
factor has no weights on the rival's side to vary; its one input is the
researcher's stated weights, and the weight-sensitivity analysis for it
is in Section~\ref{sec-weight-sensitivity}: varying the weights moves
the reported value and leaves the error rate at one in twenty untouched.
Second, the observation-bias question has an exact answer for the
uniform-weights Bayes factor, a numerically checked answer for the
separated Bayes factor at bias factors of one or more, and no answer yet
for the worst-case Bayes factor under the uniform weights; the bias
subsection below states each scope where it reports each number.

\subsection{Sensitivity to coding error}\label{sec-sens-coding}

A peer cannot re-read every observation. The published paper typically
shows a quote to illustrate why one document supports \(H_1\), but a
peer cannot inspect every document the way the original researcher did.
Suppose a peer now presses our researcher on how the evidence was coded.
The coding of any single piece of evidence can be disputed but not
settled at a distance; what we can compute is how many observations a
peer would have to re-code for the conclusion to change.

\subsubsection{The uniform-weights Bayes factor's tipping point under
re-coding}\label{the-uniform-weights-bayes-factors-tipping-point-under-re-coding}

We want to determine how many of the \(k\) pro-\(H_1\) observations,
call the number \(x\), a peer would have to re-code as favoring \(H_R\)
before the Bayes factor falls to \(20\). Re-coding does not change how
many observations there are: still \(N = k + r\) of them, now \(k - x\)
favoring \(H_1\) and \(r + x\) favoring \(H_R\). The Bayes factor at
those counts is

\[
\text{BF} = \dfrac{\displaystyle\int_{0.5}^1 \theta^{k-x}(1-\theta)^{r+x}\,d\theta}{\displaystyle\int_{0}^{0.5} \theta^{k-x}(1-\theta)^{r+x}\,d\theta}.
\]

The tipping point occurs when \(\text{BF}=20\), and \(x\) is the value
that solves this equation. Re-codings come in whole observations, so
what we report is the smallest whole number of re-codings at which the
Bayes factor falls below 20. If that number is larger than any careful
reader of these sources would propose, the conclusion does not turn on
the coding of individual observations.

If we apply this sensitivity analysis to our running example, where
\(k = 9\) observations support \(H_1\) and \(r = 3\) support \(H_R\),
the analysis shows that re-coding just 1 of the 9 pro-\(H_1\)
observations takes the uniform-weights Bayes factor below 20 (the
equation above is solved at \(x = 0.026\), less than a single
observation, which is why the whole-number answer is one). The
conclusion tolerates so little recoding because the uniform-weights
Bayes factor sits just above the threshold to begin with.

\subsubsection{Re-coding moves the worst-case Bayes factor's required
separation}\label{re-coding-moves-the-worst-case-bayes-factors-required-separation}

For the worst-case Bayes factor the coding question takes a different
form, because at nine of twelve the worst-case Bayes factor is already
below 20. We take the researcher here to report the worst-case Bayes
factor 2.73 alongside the separation at which the separated version
reaches 20, \(g^{*} = 0.123\) (Section~\ref{sec-separated}), which is
one choice among others. Re-coding cannot overturn a conclusion nobody
drew; what it moves is the separation reported. Re-coding \(x\)
observations from \(H_1\) to \(H_R\) changes the counts to
\((k - x, r + x)\), shrinks the margin, and pushes the required
separation up.

\begin{Shaded}
\begin{Highlighting}[]
\DocumentationTok{\#\# Recodings at the running example: the worst{-}case Bayes factor and the}
\DocumentationTok{\#\# required separation at each coding. At even counts no separation}
\DocumentationTok{\#\# reaches a threshold above one, so the last entry has no separation}
\DocumentationTok{\#\# to report.}
\NormalTok{recode }\OtherTok{\textless{}{-}} \FunctionTok{data.frame}\NormalTok{(}\AttributeTok{x =} \DecValTok{0}\SpecialCharTok{:}\DecValTok{3}\NormalTok{)}
\NormalTok{recode}\SpecialCharTok{$}\NormalTok{y\_W }\OtherTok{\textless{}{-}}\NormalTok{ k }\SpecialCharTok{{-}}\NormalTok{ recode}\SpecialCharTok{$}\NormalTok{x}
\NormalTok{recode}\SpecialCharTok{$}\NormalTok{y\_R }\OtherTok{\textless{}{-}}\NormalTok{ r }\SpecialCharTok{+}\NormalTok{ recode}\SpecialCharTok{$}\NormalTok{x}
\NormalTok{recode}\SpecialCharTok{$}\NormalTok{bounded }\OtherTok{\textless{}{-}} \FunctionTok{mapply}\NormalTok{(bounded\_bf, recode}\SpecialCharTok{$}\NormalTok{y\_W, recode}\SpecialCharTok{$}\NormalTok{y\_R)}
\NormalTok{recode}\SpecialCharTok{$}\NormalTok{g\_star }\OtherTok{\textless{}{-}} \FunctionTok{c}\NormalTok{(}\FunctionTok{mapply}\NormalTok{(separation\_g, recode}\SpecialCharTok{$}\NormalTok{y\_W[}\DecValTok{1}\SpecialCharTok{:}\DecValTok{3}\NormalTok{], recode}\SpecialCharTok{$}\NormalTok{y\_R[}\DecValTok{1}\SpecialCharTok{:}\DecValTok{3}\NormalTok{]),}
                   \ConstantTok{NA}\NormalTok{)}
\FunctionTok{stopifnot}\NormalTok{(}\FunctionTok{all.equal}\NormalTok{(}\FunctionTok{round}\NormalTok{(recode}\SpecialCharTok{$}\NormalTok{bounded, }\DecValTok{2}\NormalTok{), }\FunctionTok{c}\NormalTok{(}\FloatTok{2.73}\NormalTok{, }\FloatTok{1.10}\NormalTok{, }\FloatTok{0.56}\NormalTok{, }\FloatTok{0.34}\NormalTok{)),}
          \FunctionTok{all.equal}\NormalTok{(recode}\SpecialCharTok{$}\NormalTok{g\_star[}\DecValTok{1}\SpecialCharTok{:}\DecValTok{3}\NormalTok{], }\FunctionTok{c}\NormalTok{(}\FloatTok{0.123}\NormalTok{, }\FloatTok{0.179}\NormalTok{, }\FloatTok{0.318}\NormalTok{)))}
\NormalTok{knitr}\SpecialCharTok{::}\FunctionTok{kable}\NormalTok{(}
  \FunctionTok{transform}\NormalTok{(recode, }\AttributeTok{bounded =} \FunctionTok{round}\NormalTok{(bounded, }\DecValTok{2}\NormalTok{)),}
  \AttributeTok{col.names =} \FunctionTok{c}\NormalTok{(}\StringTok{"Recodings $x$"}\NormalTok{, }\StringTok{"$k {-} x$"}\NormalTok{, }\StringTok{"$r + x$"}\NormalTok{,}
                \StringTok{"Worst{-}case Bayes factor"}\NormalTok{, }\StringTok{"Separation $g\^{}\{*\}$"}\NormalTok{),}
  \AttributeTok{align =} \StringTok{"ccccc"}
\NormalTok{)}
\end{Highlighting}
\end{Shaded}

{\def\LTcaptype{none} 
\begin{longtable}[]{@{}
  >{\centering\arraybackslash}p{(\linewidth - 8\tabcolsep) * \real{0.1923}}
  >{\centering\arraybackslash}p{(\linewidth - 8\tabcolsep) * \real{0.1154}}
  >{\centering\arraybackslash}p{(\linewidth - 8\tabcolsep) * \real{0.1154}}
  >{\centering\arraybackslash}p{(\linewidth - 8\tabcolsep) * \real{0.3205}}
  >{\centering\arraybackslash}p{(\linewidth - 8\tabcolsep) * \real{0.2564}}@{}}
\toprule\noalign{}
\begin{minipage}[b]{\linewidth}\centering
Recodings \(x\)
\end{minipage} & \begin{minipage}[b]{\linewidth}\centering
\(k - x\)
\end{minipage} & \begin{minipage}[b]{\linewidth}\centering
\(r + x\)
\end{minipage} & \begin{minipage}[b]{\linewidth}\centering
Worst-case Bayes factor
\end{minipage} & \begin{minipage}[b]{\linewidth}\centering
Separation \(g^{*}\)
\end{minipage} \\
\midrule\noalign{}
\endhead
\bottomrule\noalign{}
\endlastfoot
0 & 9 & 3 & 2.73 & 0.123 \\
1 & 8 & 4 & 1.10 & 0.179 \\
2 & 7 & 5 & 0.56 & 0.318 \\
3 & 6 & 6 & 0.34 & NA \\
\end{longtable}
}

Each recoding narrows the margin by two, and the required separation
grows fast: from 12.3 percentage points on each side of an even split at
the agreed coding, to 17.9 after one recoding, to 31.8 after two --- a
working theory claiming 82 percent of the archive --- and after three
the counts are even and no separation brings the Bayes factor to 20. The
separation column also answers the critic: re-reading one of the nine
supporting observations overturns no conclusion, because the reported
value was 2.73 and nothing was set aside, but it does force the working
theory to claim a visibly more lopsided archive before the twelve
documents count as strong evidence.

\subsection{Sensitivity to observation bias}\label{sec-sens-bias}

A peer cannot easily verify the conditions under which the researcher's
evidence was gathered: the archive may have been curated, the interviews
mediated by gatekeepers, or the researcher may have followed leads
toward one side more than the other. Each is a path to \emph{observation
bias} --- a systematic tendency to find evidence on one side more
readily than the other. We parameterize this bias by a multiplicative
factor \(\omega\) on the odds of observing pro-\(H_1\) versus
pro-\(H_R\) evidence and ask how strong \(\omega\) would have to be to
overturn the conclusion.

\subsubsection{The binomial under a tilted
likelihood}\label{the-binomial-under-a-tilted-likelihood}

Bias enters the model as a multiplicative tilt on the per-observation
odds. Without bias, each observation supports \(H_1\) with probability
\(\theta\) and \(H_R\) with probability \(1 - \theta\). Under bias
factor \(\omega\), pro-\(H_1\) evidence is \(\omega\) times more likely
to be found than it would be otherwise, so the researcher observes
pro-\(H_1\) evidence with a new probability \(q\) that satisfies

\[\dfrac{q}{1-q}=\omega\cdot \dfrac{\theta}{1-\theta}.\]

Solving gives \(q = \omega\theta / (1-\theta+\omega\theta)\). The
likelihood of observing \(k\) pro-\(H_1\) out of \(N\) observations is
then binomial with probability \(q\):

\[p(E \mid \theta,\omega)=\binom{N}{k}q^k(1-q)^{N-k}.\]

The parameters of interest remain \(\theta\) and \(\omega\); \(q\) is
shorthand for the function of both written above. The Bayes factor at a
given \(\omega\) is
\[\text{BF}(\omega)=\dfrac{\displaystyle\int_{0.5}^1q^{k}(1-q)^{r}\,d\theta}{\displaystyle\int_{0}^{0.5}q^{k}(1-q)^{r}\,d\theta},\]
and we solve \(\text{BF}(\omega)=20\) for \(\omega\) numerically. (The
chunk below solves the equivalent equation
\(\Pr(\theta \le 1/2 \mid E, \omega) = 1/21\); under the uniform prior
the two equations have the same root, 1.0098.)

For the running example (\(k = 9\) pro-\(H_1\) out of \(N = 12\)), the
uniform-weights Bayes factor falls to 20 at \(\omega = 1.01\) ---
essentially no bias at all. Because the uniform-weights Bayes factor is
20.67 at these counts, just above the threshold, the smallest tilt
toward finding pro-\(H_1\) evidence is enough to take it below 20, and
with it the conclusion that the counts are 20-to-1 evidence for elite
choice.

\subsubsection{\texorpdfstring{The separated Bayes factor with both
claimed values of \(\theta\)
tilted}{The separated Bayes factor with both claimed values of \textbackslash theta tilted}}\label{the-separated-bayes-factor-with-both-claimed-values-of-theta-tilted}

For the worst-case Bayes factor we ask the bias question of the
separated version (Section~\ref{sec-separated}), because it is the one
whose value we know how to correct for a biased search. The last
paragraph of this subsection says what is missing for the uniform
weights. A search that is \(\omega\) times as likely to surface a
document supporting the working theory as one supporting the rival turns
a share \(\theta\) of the evidence into a share
\(\omega\theta/(\omega\theta + 1 - \theta)\) among the documents the
search finds. We write \(t(\theta)\) for that tilted share. The version
we report here, which we call the corrected version, applies the tilt to
both claimed values of \(\theta\) and then compares them: it divides the
probability of the counts at \(t(1/2 + g)\) by the probability at
\(t(1/2 - g)\),
\[\Lambda_{g,\omega} = \frac{t(1/2+g)^{k}\,\bigl(1-t(1/2+g)\bigr)^{r}}{t(1/2-g)^{k}\,\bigl(1-t(1/2-g)\bigr)^{r}}.\]
At \(\omega = 1\), \(t(\theta) = \theta\) and this is
Equation~\ref{eq-separated}. The correction is necessary, not optional:
left uncorrected, the separated version's guarantee fails under a biased
search, and the chunk below shows the failure directly --- at
\(\omega = 2\) and \(g = 0.2\) the uncorrected version's average, over a
finite collection whose share of supporting documents is consistent with
the rival's claim, reaches about 24 rather than the at-most-one the
guarantee requires.

\begin{Shaded}
\begin{Highlighting}[]
\DocumentationTok{\#\# The uncorrected separated version under a biased search: its worst}
\DocumentationTok{\#\# average over collections whose share of supporting documents is at}
\DocumentationTok{\#\# most 1/2 {-} g, sampling}
\DocumentationTok{\#\# without replacement with Fisher\textquotesingle{}s noncentral distribution at bias}
\DocumentationTok{\#\# omega. The guarantee requires this average to stay at most one;}
\DocumentationTok{\#\# uncorrected, it fails.}
\NormalTok{dfnc }\OtherTok{\textless{}{-}} \ControlFlowTok{function}\NormalTok{(x, K1, M, N, om) \{}
\NormalTok{  supp }\OtherTok{\textless{}{-}} \FunctionTok{max}\NormalTok{(}\DecValTok{0}\NormalTok{, N }\SpecialCharTok{{-}}\NormalTok{ (M }\SpecialCharTok{{-}}\NormalTok{ K1))}\SpecialCharTok{:}\FunctionTok{min}\NormalTok{(N, K1)}
\NormalTok{  w }\OtherTok{\textless{}{-}} \FunctionTok{dhyper}\NormalTok{(supp, K1, M }\SpecialCharTok{{-}}\NormalTok{ K1, N) }\SpecialCharTok{*}\NormalTok{ om}\SpecialCharTok{\^{}}\NormalTok{supp}
\NormalTok{  w }\OtherTok{\textless{}{-}}\NormalTok{ w }\SpecialCharTok{/} \FunctionTok{sum}\NormalTok{(w)}
\NormalTok{  out }\OtherTok{\textless{}{-}} \FunctionTok{numeric}\NormalTok{(}\FunctionTok{length}\NormalTok{(x))}
\NormalTok{  j }\OtherTok{\textless{}{-}} \FunctionTok{match}\NormalTok{(x, supp)}
\NormalTok{  out[}\SpecialCharTok{!}\FunctionTok{is.na}\NormalTok{(j)] }\OtherTok{\textless{}{-}}\NormalTok{ w[j[}\SpecialCharTok{!}\FunctionTok{is.na}\NormalTok{(j)]]}
\NormalTok{  out}
\NormalTok{\}}
\NormalTok{g0 }\OtherTok{\textless{}{-}} \FloatTok{0.2}
\NormalTok{v\_unc }\OtherTok{\textless{}{-}} \FunctionTok{sapply}\NormalTok{(}\DecValTok{0}\SpecialCharTok{:}\NormalTok{hyp\_n, }\ControlFlowTok{function}\NormalTok{(kk)}
  \FunctionTok{separated\_bf}\NormalTok{(kk, hyp\_n }\SpecialCharTok{{-}}\NormalTok{ kk, g0))}
\NormalTok{worst\_unc }\OtherTok{\textless{}{-}} \FunctionTok{max}\NormalTok{(}\FunctionTok{sapply}\NormalTok{(}\FunctionTok{c}\NormalTok{(}\DecValTok{41}\NormalTok{, }\DecValTok{101}\NormalTok{, }\DecValTok{401}\NormalTok{, }\DecValTok{2001}\NormalTok{), }\ControlFlowTok{function}\NormalTok{(M) \{}
\NormalTok{  Kmax }\OtherTok{\textless{}{-}} \FunctionTok{floor}\NormalTok{(M }\SpecialCharTok{*}\NormalTok{ (}\FloatTok{0.5} \SpecialCharTok{{-}}\NormalTok{ g0))}
  \FunctionTok{max}\NormalTok{(}\FunctionTok{sapply}\NormalTok{(}\FunctionTok{seq\_len}\NormalTok{(Kmax), }\ControlFlowTok{function}\NormalTok{(K1)}
    \FunctionTok{sum}\NormalTok{(}\FunctionTok{dfnc}\NormalTok{(}\DecValTok{0}\SpecialCharTok{:}\NormalTok{hyp\_n, K1, M, hyp\_n, }\AttributeTok{om =} \DecValTok{2}\NormalTok{) }\SpecialCharTok{*}\NormalTok{ v\_unc)))}
\NormalTok{\}))}
\FunctionTok{stopifnot}\NormalTok{(worst\_unc }\SpecialCharTok{\textgreater{}} \DecValTok{20}\NormalTok{, }\FunctionTok{abs}\NormalTok{(worst\_unc }\SpecialCharTok{{-}} \FloatTok{23.97}\NormalTok{) }\SpecialCharTok{\textless{}} \FloatTok{0.05}\NormalTok{)}
\FunctionTok{cat}\NormalTok{(}\StringTok{"worst average of the uncorrected version at omega = 2, g = 0.2:"}\NormalTok{,}
    \FunctionTok{round}\NormalTok{(worst\_unc, }\DecValTok{2}\NormalTok{), }\StringTok{"}\SpecialCharTok{\textbackslash{}n}\StringTok{"}\NormalTok{)}
\end{Highlighting}
\end{Shaded}

\begin{verbatim}
worst average of the uncorrected version at omega = 2, g = 0.2: 23.97 
\end{verbatim}

Two scope statements govern the table below, and both come from the same
fact. The proof of Theorem~\ref{thm-error-rate} uses Hoeffding's theorem
on sampling without replacement (1963, sec. 6, Theorem 4), which
compares draws without replacement with independent draws from the same
collection. A biased search is neither of those two, so the proof does
not cover it. First, at every bias factor \(\omega \ge 1\) we have tried
--- across collection sizes, separations, and values of \(\theta\)
consistent with the rival's claim --- the corrected version's average
stays at most one, so the one-in-twenty error rate holds numerically. We
have not proved this, and the table's caption says so. Second, below one
the guarantee fails, and one configuration settles it: an archive of 13
documents of which 5 support the working theory, so \(\theta = 0.385\),
which is consistent with the rival's claim at a separation of 0.10, with
12 of the 13 drawn. At \(\omega = 0.25\) the average of the corrected
version there is 4.84, and at \(\omega = 0.5\) it is 1.53, both above
one, while at \(\omega = 1\) the largest average over the same
collection sizes and separations is 0.993 and at \(\omega = 0.8\) it is
0.997, both below one.\footnote{The chunk below computes all four. An
  average above one is a counterexample and needs no proof beyond the
  arithmetic, which is a finite sum of Fisher noncentral hypergeometric
  probabilities. Where between 0.8 and 0.5 the failure begins we have
  not determined.} No column below one appears, which removes nothing a
researcher needs, because a search biased \emph{against} the
researcher's own theory makes the value understate that case, and the
sensitivity question is how much bias \emph{toward} the working theory
the conclusion can absorb.

\begin{Shaded}
\begin{Highlighting}[]
\DocumentationTok{\#\# The corrected version over a grid of separations and bias factors.}
\DocumentationTok{\#\# separated\_bf\_bias tilts both claimed values of theta by omega and}
\DocumentationTok{\#\# compares them;}
\DocumentationTok{\#\# tests/test\_bf\_bounded.R checks the guarantee numerically on a grid}
\DocumentationTok{\#\# of null compositions and archive sizes.}
\NormalTok{gams }\OtherTok{\textless{}{-}} \FunctionTok{c}\NormalTok{(}\FloatTok{0.10}\NormalTok{, }\FloatTok{0.123}\NormalTok{, }\FloatTok{0.15}\NormalTok{, }\FloatTok{0.20}\NormalTok{)}
\NormalTok{oms  }\OtherTok{\textless{}{-}} \FunctionTok{c}\NormalTok{(}\DecValTok{1}\NormalTok{, }\FloatTok{1.25}\NormalTok{, }\FloatTok{1.5}\NormalTok{, }\DecValTok{2}\NormalTok{, }\DecValTok{3}\NormalTok{)}
\NormalTok{tab }\OtherTok{\textless{}{-}} \FunctionTok{outer}\NormalTok{(oms, gams,}
             \ControlFlowTok{function}\NormalTok{(o, g) }\FunctionTok{mapply}\NormalTok{(}\ControlFlowTok{function}\NormalTok{(oo, gg)}
               \FunctionTok{separated\_bf\_bias}\NormalTok{(k, r, gg, oo), o, g))}
\FunctionTok{dimnames}\NormalTok{(tab) }\OtherTok{\textless{}{-}} \FunctionTok{list}\NormalTok{(}\FunctionTok{paste0}\NormalTok{(}\StringTok{"$}\SpecialCharTok{\textbackslash{}\textbackslash{}}\StringTok{omega$ = "}\NormalTok{, oms),}
                      \FunctionTok{paste0}\NormalTok{(}\StringTok{"$g$ = "}\NormalTok{, gams))}
\FunctionTok{stopifnot}\NormalTok{(}\FunctionTok{all.equal}\NormalTok{(}\FunctionTok{round}\NormalTok{(}\FunctionTok{unname}\NormalTok{(tab[}\DecValTok{1}\NormalTok{, }\DecValTok{2}\NormalTok{]), }\DecValTok{2}\NormalTok{), }\FloatTok{20.36}\NormalTok{),}
          \FunctionTok{all.equal}\NormalTok{(}\FunctionTok{round}\NormalTok{(}\FunctionTok{unname}\NormalTok{(tab[}\DecValTok{3}\NormalTok{, }\DecValTok{2}\NormalTok{]), }\DecValTok{2}\NormalTok{), }\FloatTok{6.25}\NormalTok{),}
          \FunctionTok{all}\NormalTok{(}\FunctionTok{diff}\NormalTok{(tab[, }\DecValTok{2}\NormalTok{]) }\SpecialCharTok{\textless{}} \DecValTok{0}\NormalTok{),   }\CommentTok{\# more assumed bias, smaller value}
          \FunctionTok{all}\NormalTok{(}\FunctionTok{diff}\NormalTok{(tab[}\DecValTok{1}\NormalTok{, ]) }\SpecialCharTok{\textgreater{}} \DecValTok{0}\NormalTok{))   }\CommentTok{\# wider separation, larger value}
\NormalTok{knitr}\SpecialCharTok{::}\FunctionTok{kable}\NormalTok{(}\FunctionTok{round}\NormalTok{(tab, }\DecValTok{2}\NormalTok{), }\AttributeTok{align =} \StringTok{"rrrr"}\NormalTok{)}
\end{Highlighting}
\end{Shaded}

\begin{longtable}[]{@{}lrrrr@{}}
\caption{The separated Bayes factor at the running example (nine of
twelve), as the assumed search bias omega and the separation g between
the two theories' claims vary. Error rate numerically checked at every
omega at or above one; the proof under bias is still
owed.}\tabularnewline
\toprule\noalign{}
& \(g\) = 0.1 & \(g\) = 0.123 & \(g\) = 0.15 & \(g\) = 0.2 \\
\midrule\noalign{}
\endfirsthead
\toprule\noalign{}
& \(g\) = 0.1 & \(g\) = 0.123 & \(g\) = 0.15 & \(g\) = 0.2 \\
\midrule\noalign{}
\endhead
\bottomrule\noalign{}
\endlastfoot
\(\omega\) = 1 & 11.39 & 20.36 & 41.03 & 161.38 \\
\(\omega\) = 1.25 & 6.68 & 10.57 & 18.43 & 55.50 \\
\(\omega\) = 1.5 & 4.36 & 6.25 & 9.70 & 23.56 \\
\(\omega\) = 2 & 2.29 & 2.83 & 3.69 & 6.45 \\
\(\omega\) = 3 & 1.03 & 1.05 & 1.09 & 1.24 \\
\end{longtable}

In the table's column for the separation the researcher reports,
\(g^{*} = 0.123\), the twelve documents support the working theory by a
factor of 20.4 if the search was unbiased, 6.3 if supporting documents
were half again as easy to find, and 2.8 if they were twice as easy to
find. What is published carries both numbers a reader needs --- the
separation at which unbiased evidence reaches 20, and how fast assumed
bias reduces that support.

One gap remains, stated openly: the correction above is worked out only
for the separated version. The worst-case Bayes factor under the uniform
weights, whose numerator averages the probability of the counts over the
values of \(\theta\) above one half (Equation~\ref{eq-bounded-cf}), has
no observation-bias analysis yet, because nothing yet says how to tilt
that averaged numerator while preserving its error rate, and we do not
report one.

\subsection{Sensitivity to probative weight}\label{sec-sens-weights}

Process-tracing scholars often distinguish between observations that are
necessary to entertain \(H_1\) (hoop tests) and decisive observations
(smoking guns), which carry different probative value. Fairfield and
Charman (2022, 130--38) formalize this idea through decibel ratings,
which translate into multiplicative factors on the likelihood ratio. The
framework so far has set \(w_i = 1\) for every observation: each piece
of evidence carries equal probative force. This is a baseline
assumption, not a claim about how evidence actually behaves, and we now
ask what changes if a researcher believes one or two observations carry
meaningfully more weight than the rest.

The mechanism is simple. As this supplement (Section~\ref{sec-weights})
shows, an observation of integer weight \(w_i\) is treated as \(w_i\)
identical unit-weight observations. The weighted totals are \[
W = \sum_{i \in H_1} w_i, \qquad R = \sum_{j \in H_R} w_j,
\] and each Bayes factor is recomputed at \((W, R)\) in place of
\((k, r)\). Both are the Bayes factors at the weighted counts
(Section~\ref{sec-weights}).

\subsubsection{The weight one counter-observation would
need}\label{the-weight-one-counter-observation-would-need}

The memo of Section~\ref{sec-weights}, at weight 10, gives \(W = 18\)
and \(R = 3\) and moves both Bayes factors above 20. The sensitivity
question runs in the other direction. Suppose a peer reviewer challenges
the analysis from the other side, identifying one of the three
pro-\(H_R\) observations --- say a memo from a centrist legislator
warning about authoritarian risks --- as a smoking gun for the rival
theory. How much weight would that reviewer need to assign to this
single observation to change the researcher's conclusion?

Under the uniform-weights Bayes factor, the conclusion holds only until
that one pro-\(H_R\) observation is given a weight of 2: worth just two
routine confirmations, it takes the Bayes factor below 20. For the
worst-case Bayes factor the same challenge moves the required
separation: at counter-weights of 1, 2, 4, and 6 the separation at which
the separated version reaches 20 grows from 0.123 to 0.146, 0.231, and
0.453, so that at a counter-weight of 6 the working theory would have to
claim 95.3 percent of the evidence (one half plus 0.453), and at weight
7 the weighted counts are even and no separation brings the Bayes factor
to 20. A reviewer who can persuasively argue that one pro-\(H_R\)
observation is worth several routine ones forces the working theory to
claim an ever more lopsided evidence base, and whether elite-choice
theory says the archive is 95 percent one-sided is a question for
historians of country \countrylabel, not for the table.

\subsection{\texorpdfstring{Sensitivity to the weights on \(\theta\)
(uniform-weights Bayes factor
only)}{Sensitivity to the weights on \textbackslash theta (uniform-weights Bayes factor only)}}\label{sec-sensitivity-prior}

The uniform-weights Bayes factor depends on the weights placed on
\(\theta\) within each theory's range. So far we have used the uniform
on each range, which assigns equal density to every value of \(\theta\)
between 0.5 and 1 under \(H_1\) and between 0 and 0.5 under \(H_R\). A
reviewer might reasonably ask how the conclusion changes under different
weights. The question is asked with a rescaled Beta pair, in the way
Section~\ref{sec-choosing-distributions} describes, and not with a
single density over the whole interval from zero to one. A single
density sets two things at once: how much weight each theory puts on the
values of \(\theta\) inside its own range, and how probable the two
theories are relative to each other before any evidence arrives. The
second is the reader's prior odds, which the paper leaves to the reader,
and a ratio of posterior masses computed under such a density is a
posterior odds, not a Bayes factor. Under a rescaled pair each half
carries a density of its own that integrates to one over that half, so
the pair says how each theory spreads its weight and says nothing about
the prior odds on the theories, which stay whatever the reader holds.

{

\begin{longtable}[]{@{}
  >{\raggedright\arraybackslash}p{(\linewidth - 6\tabcolsep) * \real{0.4242}}
  >{\centering\arraybackslash}p{(\linewidth - 6\tabcolsep) * \real{0.2348}}
  >{\centering\arraybackslash}p{(\linewidth - 6\tabcolsep) * \real{0.2348}}
  >{\centering\arraybackslash}p{(\linewidth - 6\tabcolsep) * \real{0.1061}}@{}}

\caption{\label{tbl-sens-distributions}The uniform-weights Bayes factor
for \(k = 9\) pro-\(H_1\) out of \(N = 12\) under five pairs of rescaled
Beta distributions, ordered by the Bayes factor. Each pair leaves the
prior odds on the two theories untouched. Two of the five fall below the
threshold of 20.}

\tabularnewline

\toprule\noalign{}
\begin{minipage}[b]{\linewidth}\raggedright
What the researcher expects to find
\end{minipage} & \begin{minipage}[b]{\linewidth}\centering
Weights for \(\theta \mid H_1\)
\end{minipage} & \begin{minipage}[b]{\linewidth}\centering
Weights for \(\theta \mid H_R\)
\end{minipage} & \begin{minipage}[b]{\linewidth}\centering
Bayes factor
\end{minipage} \\
\midrule\noalign{}
\endhead
\bottomrule\noalign{}
\endlastfoot
Evidence for both theories under either & Beta(1/2, 1) & Beta(1, 1/2) &
7.35 \\
More weight at both ends of each half & Beta(1/2, 1/2) & Beta(1/2, 1/2)
& 9.53 \\
No value of \(\theta\) favored over another (the default) & Beta(1, 1) &
Beta(1, 1) & 20.67 \\
Almost no evidence for the other theory & Beta(1, 1/2) & Beta(1/2, 1) &
28.08 \\
Each theory's \(\theta\) near the middle of its half & Beta(10, 10) &
Beta(10, 10) & 252.92 \\

\end{longtable}

}

The five values span 7.35 to 252.92, and two of the five leave the
counts short of the threshold of 20. The two pairs that state an
expectation about finding evidence for the other theory are the ones a
researcher is most likely to argue about, because each states an
ordinary belief about the case. Expecting to find some evidence for weak
institutions even in a world where elites drove the collapse pulls both
theories' weights toward one half, and the Bayes factor falls to 7.35.
Held more strongly, that belief drives the Bayes factor toward one: if
both theories placed all their weight at \(\theta = 1/2\), they would
make nine of twelve exactly as probable as each other and the evidence
would not tell them apart at all. Expecting almost no evidence for the
other theory pushes the weights to the far ends instead, and the Bayes
factor rises to 28.08. The last row is the strongest of the five and the
least modest: it has each theory predicting a value of \(\theta\) near
the middle of its own half, three quarters of the evidence for the
theory that holds, which makes nine of twelve much more probable under
elite choice than under weak institutions. What the table hands a reader
is the question behind the number: before seeing these twelve
observations, how lopsided did the researcher expect a world governed by
either theory to look?

\subsection{Detailed sensitivity visualizations (uniform-weights Bayes
factor)}\label{detailed-sensitivity-visualizations-uniform-weights-bayes-factor}

The subsections above report tipping-point values one question at a
time. For the uniform-weights Bayes factor the two questions a reader
most often wants to see together are coding error and observation bias,
and the figures below show the full structure of their joint effect. The
worst-case Bayes factor's two-dimensional summary is the
separation-by-bias table of Section~\ref{sec-sens-bias}, which crosses
the separation between the theories' claims with the assumed search
bias; the recoding table of Section~\ref{sec-sens-coding} answers the
coding-error question for it, and Section~\ref{sec-weight-sensitivity}
answers the question about the stated weights; those tables together are
the sensitivity analysis for the worst-case Bayes factor.

\subsubsection{Sensitivity heatmap}\label{sensitivity-heatmap}

Figure~\ref{fig-sensitivity-heatmap} displays the uniform-weights Bayes
factor as a heatmap across combinations of coding errors (\(x\)) and
observation bias (\(\omega\)). Blue cells indicate combinations where
the conclusion holds (BF \(\geq\) 20); red cells indicate combinations
where it is overturned (BF \(<\) 20). The thick black boundary traces
the decision threshold.

\begin{figure}

\centering{

\pandocbounded{\includegraphics[keepaspectratio]{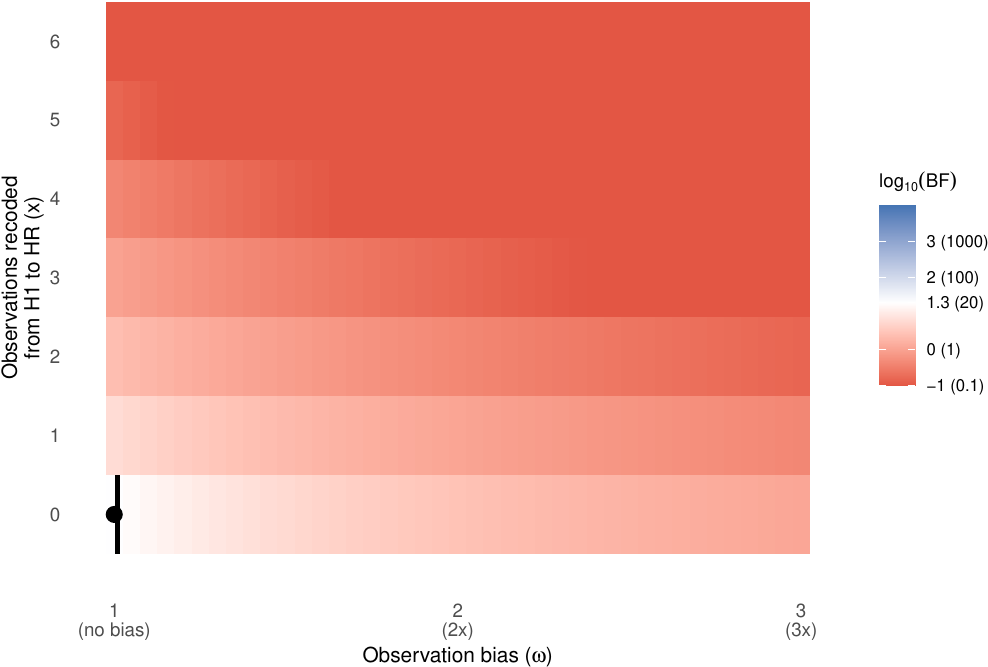}}

}

\caption{\label{fig-sensitivity-heatmap}Heatmap of the uniform-weights
Bayes factor across combinations of coding errors and observation bias.
Blue indicates BF above 20 (conclusion sustained); red indicates BF
below 20 (conclusion overturned). The thick black boundary traces the BF
= 20 decision threshold.}

\end{figure}%

\subsubsection{Sensitivity curves by coding
error}\label{sensitivity-curves-by-coding-error}

Figure~\ref{fig-sens-faceted-binom} shows how the uniform-weights Bayes
factor declines as observation bias (\(\omega\)) increases, with each
panel fixing a different number of coding errors (\(x\)). The horizontal
dashed line marks the decision threshold (BF = 20). The orange point in
each panel marks the observation bias at which the Bayes factor drops
below the threshold.

\begin{figure}

\centering{

\pandocbounded{\includegraphics[keepaspectratio]{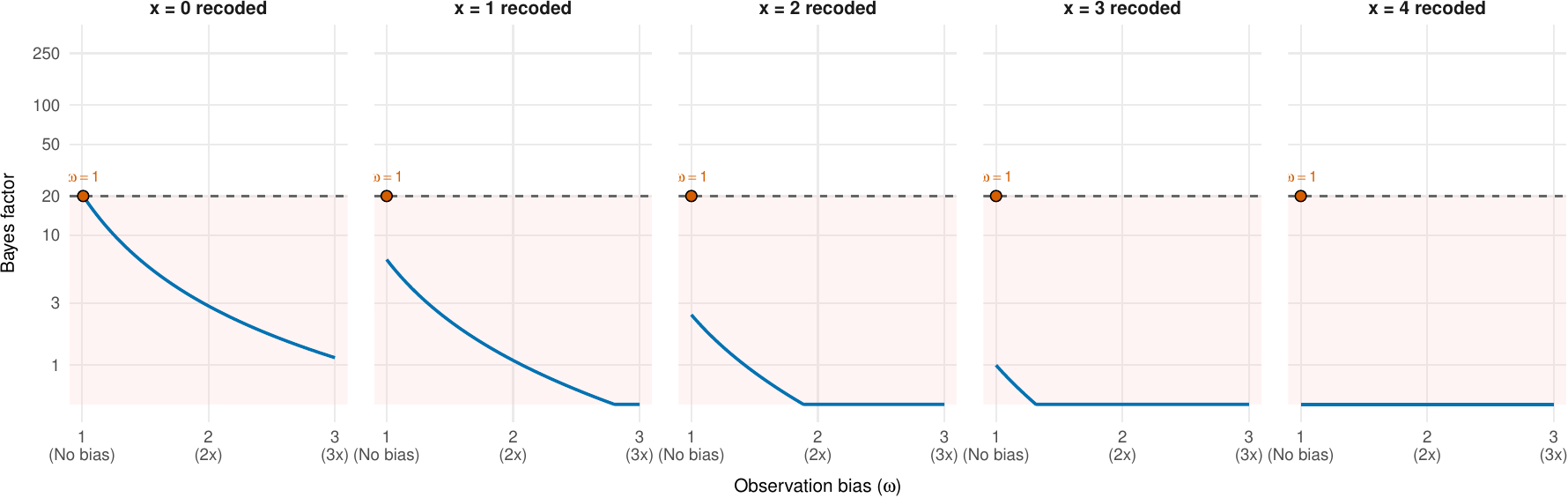}}

}

\caption{\label{fig-sens-faceted-binom}Uniform-weights Bayes factor: how
it declines with observation bias, for each level of coding error. The
dashed line marks BF = 20. Orange points indicate where the conclusion
is overturned.}

\end{figure}%

\subsection{Joint sensitivity to coding error and observation bias
(uniform-weights Bayes
factor)}\label{joint-sensitivity-to-coding-error-and-observation-bias-uniform-weights-bayes-factor}

Table~\ref{tbl-sens-binom} summarizes the uniform-weights Bayes factor
across combinations of coding errors and observation bias. Each cell
shows the Bayes factor that would result if \(x\) observations were
re-coded and observation bias were \(\omega\).

\begin{table}

\caption{\label{tbl-sens-binom}Uniform-weights Bayes factor under joint
coding errors and observation bias.}

\centering{

\begin{tabular}[t]{>{\raggedright\arraybackslash}p{1.2cm}>{\centering\arraybackslash}p{1.6cm}>{\centering\arraybackslash}p{1.6cm}>{\centering\arraybackslash}p{1.6cm}>{\centering\arraybackslash}p{1.6cm}>{\centering\arraybackslash}p{1.6cm}}
\toprule
\multicolumn{1}{c}{\makecell[c]{Coding\\errors}} & \multicolumn{5}{c}{Observation bias} \\
\cmidrule(l{3pt}r{3pt}){1-1} \cmidrule(l{3pt}r{3pt}){2-6}
  & No bias & 1.5x more likely & 2x more likely & 2.5x more likely & 3x more likely\\
\midrule
0 & \cellcolor[HTML]{4fa3d1}{\textcolor{white}{\textbf{21}}} & \cellcolor[HTML]{f2c94c}{\textcolor[HTML]{5c4700}{6}} & \cellcolor[HTML]{b04a2f}{\textcolor{white}{3}} & \cellcolor[HTML]{b04a2f}{\textcolor{white}{2}} & \cellcolor[HTML]{b04a2f}{\textcolor{white}{1}}\\
1 & \cellcolor[HTML]{f2c94c}{\textcolor[HTML]{5c4700}{6}} & \cellcolor[HTML]{b04a2f}{\textcolor{white}{2}} & \cellcolor[HTML]{b04a2f}{\textcolor{white}{1}} & \cellcolor[HTML]{b04a2f}{\textcolor{white}{< 1}} & \cellcolor[HTML]{b04a2f}{\textcolor{white}{< 1}}\\
2 & \cellcolor[HTML]{b04a2f}{\textcolor{white}{2}} & \cellcolor[HTML]{b04a2f}{\textcolor{white}{< 1}} & \cellcolor[HTML]{b04a2f}{\textcolor{white}{< 1}} & \cellcolor[HTML]{b04a2f}{\textcolor{white}{< 1}} & \cellcolor[HTML]{b04a2f}{\textcolor{white}{< 1}}\\
3 & \cellcolor[HTML]{b04a2f}{\textcolor{white}{< 1}} & \cellcolor[HTML]{b04a2f}{\textcolor{white}{< 1}} & \cellcolor[HTML]{b04a2f}{\textcolor{white}{< 1}} & \cellcolor[HTML]{b04a2f}{\textcolor{white}{< 1}} & \cellcolor[HTML]{b04a2f}{\textcolor{white}{< 1}}\\
4 & \cellcolor[HTML]{b04a2f}{\textcolor{white}{< 1}} & \cellcolor[HTML]{b04a2f}{\textcolor{white}{< 1}} & \cellcolor[HTML]{b04a2f}{\textcolor{white}{< 1}} & \cellcolor[HTML]{b04a2f}{\textcolor{white}{< 1}} & \cellcolor[HTML]{b04a2f}{\textcolor{white}{< 1}}\\
\addlinespace
5 & \cellcolor[HTML]{b04a2f}{\textcolor{white}{< 1}} & \cellcolor[HTML]{b04a2f}{\textcolor{white}{< 1}} & \cellcolor[HTML]{b04a2f}{\textcolor{white}{< 1}} & \cellcolor[HTML]{b04a2f}{\textcolor{white}{< 1}} & \cellcolor[HTML]{b04a2f}{\textcolor{white}{< 1}}\\
\bottomrule
\end{tabular}

}

\end{table}%

This table and the worst-case Bayes factor's tables in
Section~\ref{sec-sens-coding}, Section~\ref{sec-sens-bias}, and
Section~\ref{sec-weight-sensitivity} answer matching questions. Under
the uniform-weights Bayes factor, a single re-coded observation, or the
smallest observation bias, takes the value below 20 at the running
example's counts, because the unbiased value of 20.67 sits just above
the threshold. The worst-case Bayes factor at the same counts was never
above 20. Its tables answer the matching questions in its own terms ---
how far the required separation grows as codings flip, and how fast
assumed search bias reduces the separated value.

\subsection{Sensitivity to dependence among
observations}\label{sec-sens-dependence}

The model treats each observation as an independent draw, and the
worst-case Bayes factor's error rate also covers draws without
replacement from one finite collection. Neither covers the further
dependence qualitative evidence often carries: several observations from
the same interview, the same event reported in different sources,
documents quoting one another. Treating such observations as separate
draws overstates how much evidence the researcher holds. This subsection
reports what that overstatement does to the uniform-weights Bayes
factor, where an explicit dependence model can be computed exactly, and
states openly what remains open for the worst-case Bayes factor.

\subsubsection{The uniform-weights Bayes factor under exchangeable
dependence}\label{the-uniform-weights-bayes-factor-under-exchangeable-dependence}

We model dependence the way survey samplers model clustered responses:
given \(\theta\), the count of supporting observations follows a
beta-binomial distribution in which each observation supports the
working theory with a probability whose mean is \(\theta\), and each
pair of observations correlates at \(\tau\), the within-study
correlation, with \(\tau = 0\) recovering independence. The Bayes factor
under this model integrates the beta-binomial likelihood over each
hypothesis's region of \(\theta\), exactly as the independent-draws
version integrates the binomial likelihood, and this exact computation
is the reported analysis.

\begin{Shaded}
\begin{Highlighting}[]
\DocumentationTok{\#\# The exact beta{-}binomial Bayes factor at the running example as the}
\DocumentationTok{\#\# within{-}study correlation tau grows, and the count{-}rescaling}
\DocumentationTok{\#\# approximation BF(k/D, r/D) with design effect D = 1 + (N {-} 1) tau,}
\DocumentationTok{\#\# whose error against the exact value the prose reports. The}
\DocumentationTok{\#\# dbetabinom parameterization sets mean theta and pairwise}
\DocumentationTok{\#\# correlation tau.}
\NormalTok{dbetabinom }\OtherTok{\textless{}{-}} \ControlFlowTok{function}\NormalTok{(x, n, a, b) \{}
  \FunctionTok{choose}\NormalTok{(n, x) }\SpecialCharTok{*} \FunctionTok{beta}\NormalTok{(x }\SpecialCharTok{+}\NormalTok{ a, n }\SpecialCharTok{{-}}\NormalTok{ x }\SpecialCharTok{+}\NormalTok{ b) }\SpecialCharTok{/} \FunctionTok{beta}\NormalTok{(a, b)}
\NormalTok{\}}
\NormalTok{bf\_bb }\OtherTok{\textless{}{-}} \ControlFlowTok{function}\NormalTok{(kk, n, tau) \{}
\NormalTok{  f }\OtherTok{\textless{}{-}} \ControlFlowTok{function}\NormalTok{(t) \{}
    \FunctionTok{dbetabinom}\NormalTok{(kk, n, t }\SpecialCharTok{*}\NormalTok{ (}\DecValTok{1} \SpecialCharTok{{-}}\NormalTok{ tau) }\SpecialCharTok{/}\NormalTok{ tau, (}\DecValTok{1} \SpecialCharTok{{-}}\NormalTok{ t) }\SpecialCharTok{*}\NormalTok{ (}\DecValTok{1} \SpecialCharTok{{-}}\NormalTok{ tau) }\SpecialCharTok{/}\NormalTok{ tau)}
\NormalTok{  \}}
  \FunctionTok{integrate}\NormalTok{(}\ControlFlowTok{function}\NormalTok{(t) }\FunctionTok{vapply}\NormalTok{(t, f, }\DecValTok{0}\NormalTok{) }\SpecialCharTok{*} \DecValTok{2}\NormalTok{, }\FloatTok{0.5}\NormalTok{, }\DecValTok{1}\NormalTok{)}\SpecialCharTok{$}\NormalTok{value }\SpecialCharTok{/}
    \FunctionTok{integrate}\NormalTok{(}\ControlFlowTok{function}\NormalTok{(t) }\FunctionTok{vapply}\NormalTok{(t, f, }\DecValTok{0}\NormalTok{) }\SpecialCharTok{*} \DecValTok{2}\NormalTok{, }\DecValTok{0}\NormalTok{, }\FloatTok{0.5}\NormalTok{)}\SpecialCharTok{$}\NormalTok{value}
\NormalTok{\}}
\DocumentationTok{\#\# The count{-}rescaling approximation. Named bf\_count\_rescaled, not}
\DocumentationTok{\#\# bf\_rescaled, which in tests/bf\_rescaled.R is the uniform{-}weights}
\DocumentationTok{\#\# Bayes factor under rescaled Beta weights, a different quantity.}
\NormalTok{bf\_count\_rescaled }\OtherTok{\textless{}{-}} \ControlFlowTok{function}\NormalTok{(kk, rr, tau) \{}
\NormalTok{  D }\OtherTok{\textless{}{-}} \DecValTok{1} \SpecialCharTok{+}\NormalTok{ (kk }\SpecialCharTok{+}\NormalTok{ rr }\SpecialCharTok{{-}} \DecValTok{1}\NormalTok{) }\SpecialCharTok{*}\NormalTok{ tau}
\NormalTok{  po }\OtherTok{\textless{}{-}} \DecValTok{1} \SpecialCharTok{{-}} \FunctionTok{pbeta}\NormalTok{(}\FloatTok{0.5}\NormalTok{, kk }\SpecialCharTok{/}\NormalTok{ D }\SpecialCharTok{+} \DecValTok{1}\NormalTok{, rr }\SpecialCharTok{/}\NormalTok{ D }\SpecialCharTok{+} \DecValTok{1}\NormalTok{)}
\NormalTok{  po }\SpecialCharTok{/}\NormalTok{ (}\DecValTok{1} \SpecialCharTok{{-}}\NormalTok{ po)}
\NormalTok{\}}
\NormalTok{taus }\OtherTok{\textless{}{-}} \FunctionTok{c}\NormalTok{(}\FloatTok{0.02}\NormalTok{, }\FloatTok{0.05}\NormalTok{, }\FloatTok{0.10}\NormalTok{, }\FloatTok{0.20}\NormalTok{, }\FloatTok{0.30}\NormalTok{)}
\NormalTok{bb   }\OtherTok{\textless{}{-}} \FunctionTok{vapply}\NormalTok{(taus, }\ControlFlowTok{function}\NormalTok{(tt) }\FunctionTok{bf\_bb}\NormalTok{(}\DecValTok{9}\NormalTok{, }\DecValTok{12}\NormalTok{, tt), }\DecValTok{0}\NormalTok{)}
\NormalTok{resc }\OtherTok{\textless{}{-}} \FunctionTok{vapply}\NormalTok{(taus, }\ControlFlowTok{function}\NormalTok{(tt) }\FunctionTok{bf\_count\_rescaled}\NormalTok{(}\DecValTok{9}\NormalTok{, }\DecValTok{3}\NormalTok{, tt), }\DecValTok{0}\NormalTok{)}
\NormalTok{dep\_tab }\OtherTok{\textless{}{-}} \FunctionTok{data.frame}\NormalTok{(}\AttributeTok{tau =}\NormalTok{ taus, }\AttributeTok{exact\_bb =} \FunctionTok{round}\NormalTok{(bb, }\DecValTok{2}\NormalTok{),}
                      \AttributeTok{rescaled =} \FunctionTok{round}\NormalTok{(resc, }\DecValTok{2}\NormalTok{),}
                      \AttributeTok{resc\_error =} \FunctionTok{round}\NormalTok{(resc }\SpecialCharTok{/}\NormalTok{ bb, }\DecValTok{3}\NormalTok{))}
\DocumentationTok{\#\# The count{-}rescaling approximation always overstates the exact value}
\DocumentationTok{\#\# here, by a factor growing to 1.43 at tau = 0.30; and the exact value}
\DocumentationTok{\#\# falls below the threshold of 20 already at tau = 0.02.}
\FunctionTok{stopifnot}\NormalTok{(}\FunctionTok{all}\NormalTok{(dep\_tab}\SpecialCharTok{$}\NormalTok{resc\_error }\SpecialCharTok{\textgreater{}} \DecValTok{1}\NormalTok{),}
          \FunctionTok{abs}\NormalTok{(dep\_tab}\SpecialCharTok{$}\NormalTok{resc\_error[}\DecValTok{5}\NormalTok{] }\SpecialCharTok{{-}} \FloatTok{1.428}\NormalTok{) }\SpecialCharTok{\textless{}} \FloatTok{0.001}\NormalTok{,}
          \FunctionTok{all}\NormalTok{(bb }\SpecialCharTok{\textless{}} \DecValTok{20}\NormalTok{), }\FunctionTok{abs}\NormalTok{(bb[}\DecValTok{1}\NormalTok{] }\SpecialCharTok{{-}} \FloatTok{13.78}\NormalTok{) }\SpecialCharTok{\textless{}} \FloatTok{0.01}\NormalTok{)}
\NormalTok{knitr}\SpecialCharTok{::}\FunctionTok{kable}\NormalTok{(dep\_tab,}
             \AttributeTok{col.names =} \FunctionTok{c}\NormalTok{(}\StringTok{"Correlation $}\SpecialCharTok{\textbackslash{}\textbackslash{}}\StringTok{tau$"}\NormalTok{,}
                           \StringTok{"Exact beta{-}binomial BF"}\NormalTok{,}
                           \StringTok{"Count{-}rescaling approximation"}\NormalTok{,}
                           \StringTok{"Approximation / exact"}\NormalTok{),}
             \AttributeTok{align =} \StringTok{"cccc"}\NormalTok{)}
\end{Highlighting}
\end{Shaded}

{\def\LTcaptype{none} 
\begin{longtable}[]{@{}
  >{\centering\arraybackslash}p{(\linewidth - 6\tabcolsep) * \real{0.2041}}
  >{\centering\arraybackslash}p{(\linewidth - 6\tabcolsep) * \real{0.2449}}
  >{\centering\arraybackslash}p{(\linewidth - 6\tabcolsep) * \real{0.3163}}
  >{\centering\arraybackslash}p{(\linewidth - 6\tabcolsep) * \real{0.2347}}@{}}
\toprule\noalign{}
\begin{minipage}[b]{\linewidth}\centering
Correlation \(\tau\)
\end{minipage} & \begin{minipage}[b]{\linewidth}\centering
Exact beta-binomial BF
\end{minipage} & \begin{minipage}[b]{\linewidth}\centering
Count-rescaling approximation
\end{minipage} & \begin{minipage}[b]{\linewidth}\centering
Approximation / exact
\end{minipage} \\
\midrule\noalign{}
\endhead
\bottomrule\noalign{}
\endlastfoot
0.02 & 13.78 & 14.20 & 1.030 \\
0.05 & 8.93 & 9.64 & 1.080 \\
0.10 & 5.51 & 6.41 & 1.164 \\
0.20 & 3.10 & 4.07 & 1.313 \\
0.30 & 2.21 & 3.15 & 1.428 \\
\end{longtable}
}

The independent-draws Bayes factor at the running example is 20.67; the
exact beta-binomial value is 13.78 at a correlation of only
\(\tau = 0.02\) and 2.21 at \(\tau = 0.30\), so even mild clustering
takes the conclusion below the threshold of 20. A researcher who
suspects the observations cluster --- and who can state a correlation
--- reports the exact column. A reader can also invert the table and ask
how much correlation the conclusion survives, and at these counts the
answer is: almost none. The table's third column is the shortcut a
reader may have met elsewhere --- divide the counts by the design effect
\(D = 1 + (N-1)\tau\) and recompute --- shown with its error: the
shortcut overstates the exact value by a factor that grows to 1.43 at
\(\tau = 0.30\), so we use it only as an interpretation aid, never as
the reported analysis.

\subsubsection{Dependence and the frequency of conclusive-looking
counts}\label{dependence-and-the-frequency-of-conclusive-looking-counts}

Dependence also changes how often an evenly split evidence base produces
counts that look conclusive. With twelve independent observations and an
evenly split evidence base, the count reaches nine or more --- the
running example's coding --- 7.3 percent of the time. At within-study
correlation 0.10 that frequency is 17.3 percent, and at 0.30 it is 29.3
percent: correlated observations pile up on one side more easily, so
lopsided counts alone mean less.

\begin{Shaded}
\begin{Highlighting}[]
\DocumentationTok{\#\# The chance of nine or more supporting observations out of twelve}
\DocumentationTok{\#\# when the evidence base is evenly split, under independence and}
\DocumentationTok{\#\# under exchangeable dependence (mean{-}1/2 beta{-}binomial).}
\NormalTok{cross }\OtherTok{\textless{}{-}} \FunctionTok{vapply}\NormalTok{(}\FunctionTok{c}\NormalTok{(}\DecValTok{0}\NormalTok{, }\FloatTok{0.10}\NormalTok{, }\FloatTok{0.30}\NormalTok{), }\ControlFlowTok{function}\NormalTok{(tt) \{}
  \ControlFlowTok{if}\NormalTok{ (tt }\SpecialCharTok{==} \DecValTok{0}\NormalTok{) }\FunctionTok{sum}\NormalTok{(}\FunctionTok{dbinom}\NormalTok{(}\DecValTok{9}\SpecialCharTok{:}\DecValTok{12}\NormalTok{, }\DecValTok{12}\NormalTok{, }\FloatTok{0.5}\NormalTok{)) }\ControlFlowTok{else}\NormalTok{ \{}
\NormalTok{    a }\OtherTok{\textless{}{-}}\NormalTok{ (}\DecValTok{1} \SpecialCharTok{{-}}\NormalTok{ tt) }\SpecialCharTok{/}\NormalTok{ (}\DecValTok{2} \SpecialCharTok{*}\NormalTok{ tt)}
    \FunctionTok{sum}\NormalTok{(}\FunctionTok{vapply}\NormalTok{(}\DecValTok{9}\SpecialCharTok{:}\DecValTok{12}\NormalTok{, dbetabinom, }\DecValTok{0}\NormalTok{, }\AttributeTok{n =} \DecValTok{12}\NormalTok{, }\AttributeTok{a =}\NormalTok{ a, }\AttributeTok{b =}\NormalTok{ a))}
\NormalTok{  \}}
\NormalTok{\}, }\DecValTok{0}\NormalTok{)}
\FunctionTok{stopifnot}\NormalTok{(}\FunctionTok{abs}\NormalTok{(cross[}\DecValTok{1}\NormalTok{] }\SpecialCharTok{{-}} \DecValTok{299} \SpecialCharTok{/} \DecValTok{4096}\NormalTok{) }\SpecialCharTok{\textless{}} \FloatTok{1e{-}9}\NormalTok{,}
          \FunctionTok{all.equal}\NormalTok{(}\FunctionTok{round}\NormalTok{(cross, }\DecValTok{3}\NormalTok{), }\FunctionTok{c}\NormalTok{(}\FloatTok{0.073}\NormalTok{, }\FloatTok{0.173}\NormalTok{, }\FloatTok{0.293}\NormalTok{)))}
\FunctionTok{round}\NormalTok{(cross, }\DecValTok{3}\NormalTok{)}
\end{Highlighting}
\end{Shaded}

\begin{verbatim}
[1] 0.073 0.173 0.293
\end{verbatim}

A reader wanting a single summary of how much dependence discounts a
sample can use the design effect \(D\) of Kish (1965) --- the factor by
which clustering inflates the variance of a mean, so that \(N\)
correlated observations carry the information of roughly \(N/D\)
independent ones --- but the reported analysis above is the exact
beta-binomial computation, and the design-effect rescaling is an
interpretation aid only.

\subsubsection{What remains open for the worst-case Bayes
factor}\label{what-remains-open-for-the-worst-case-bayes-factor}

Theorem~\ref{thm-error-rate} covers one form of dependence exactly ---
the negative dependence of drawing without replacement from a finite
collection --- and that coverage is a proof, not an approximation.
Positive dependence of the clustered kind above is another matter:
nothing yet says what a cluster of correlated documents does to the
worst-case Bayes factor's error rate, and we do not report a
dependence-adjusted worst-case value. A model that builds dependence
into the observation process directly is the subject of a separate
paper.

\section{Applications}\label{sec-applications}

The subsections that follow apply the framework to six recent
process-tracing studies. We begin with how we selected them, then take
each paper in turn, and close with what the set as a whole teaches us
about the framework.

For each study we report four numbers: the uniform-weights Bayes factor,
the Jeffreys-weights Bayes factor, the worst-case Bayes factor under the
uniform weights, and the separation at which the separated version
reaches 20. Table~\ref{tbl-app-all} collects them. The worst-case Bayes
factor is above 20 for two studies --- Steinsson and Winward --- and for
the other four we report a separation beside both Bayes factors, one of
the things Section~\ref{sec-separated} makes available for counts that
cannot reach the threshold at the rival's best case. Nothing obliges a
researcher to report one. We report both Bayes factors for every study,
and for each we say what carries the conclusion. For Mor and Andersen it
is the uniform weights on \(\theta\): the Jeffreys-weights Bayes factor,
which replaces the uniform weights on both theories' ranges with
Beta\((1/2, 1/2)\) rescaled onto each half
(Section~\ref{sec-choosing-distributions}), is 13.7 and 9.5, below 20,
and the worst-case Bayes factor is far below it. For Hammoud-Gallego and
Freier it is the rival's whole range: 153.6 under the uniform weights
and 60.3 under the Jeffreys weights, both above 20, against 14.9 at the
rival's best case. For Steinsson and Winward it is how lopsided the
observations are: every Bayes factor in the table is above 20.

{

\begin{longtable}[]{@{}
  >{\raggedright\arraybackslash}p{(\linewidth - 10\tabcolsep) * \real{0.2712}}
  >{\centering\arraybackslash}p{(\linewidth - 10\tabcolsep) * \real{0.0847}}
  >{\centering\arraybackslash}p{(\linewidth - 10\tabcolsep) * \real{0.1695}}
  >{\centering\arraybackslash}p{(\linewidth - 10\tabcolsep) * \real{0.1780}}
  >{\centering\arraybackslash}p{(\linewidth - 10\tabcolsep) * \real{0.1271}}
  >{\centering\arraybackslash}p{(\linewidth - 10\tabcolsep) * \real{0.1695}}@{}}

\caption{\label{tbl-app-all}The six studies under both Bayes factors and
the separation, with the Jeffreys-weights Bayes factor beside the
uniform-weights one. Counts are the two-coder consensus codings, except
that we re-coded one of Hammoud-Gallego and Freier's observations from
the article (Section~\ref{sec-app-hammoud-gallego-freier}); the
worst-case Bayes factor takes the uniform weights; the separation is the
g at which the separated version reaches 20, rounded up so the printed
value reaches it.}

\tabularnewline

\toprule\noalign{}
\begin{minipage}[b]{\linewidth}\raggedright
Paper
\end{minipage} & \begin{minipage}[b]{\linewidth}\centering
\((k, r)\)
\end{minipage} & \begin{minipage}[b]{\linewidth}\centering
Uniform-weights BF
\end{minipage} & \begin{minipage}[b]{\linewidth}\centering
Jeffreys-weights BF
\end{minipage} & \begin{minipage}[b]{\linewidth}\centering
Worst-case BF
\end{minipage} & \begin{minipage}[b]{\linewidth}\centering
Separation \(g^{*}\)
\end{minipage} \\
\midrule\noalign{}
\endhead
\bottomrule\noalign{}
\endlastfoot
Steinsson 2024 & (12, 0) & 8191.0 & 6219.2 & 630.08 & 0.063 \\
Winward 2021 & (14, 3) & 264.3 & 91.6 & 21.34 & 0.068 \\
Hammoud-Gallego and Freier 2023 & (11, 2) & 153.6 & 60.3 & 14.91 &
0.083 \\
Mor 2022 & (8, 2) & 29.6 & 13.7 & 4.00 & 0.123 \\
Andersen 2024 & (9, 3) & 20.7 & 9.5 & 2.73 & 0.123 \\
Pavone and Stiansen 2022 & (7, 4) & 4.2 & 2.7 & 0.83 & 0.231 \\

\end{longtable}

}

In the per-study sensitivity tables below, the worst-case column reads
``na'' on the observation-bias axis, for a reason stated once here: no
bias correction exists yet for the worst-case Bayes factor under the
uniform weights, and the bias analysis for the worst-case Bayes factor
is the separated version's table in Section~\ref{sec-sens-bias}. On the
axis for the weights on \(\theta\), each pair is a set of weights the
researcher might state within each theory's range
(Section~\ref{sec-choosing-distributions}): the uniform-weights Bayes
factor takes the pair on both theories' ranges, and the worst-case Bayes
factor takes it on the working theory's range only, because the rival's
claim is evaluated at one half whatever weights the rival's proponent
states (Section~\ref{sec-weight-sensitivity}). Weighted values on the
smoking-gun axis are the two Bayes factors at the weighted counts
(Section~\ref{sec-weights}).

\subsection{How we selected the
corpus}\label{how-we-selected-the-corpus}

We queried OpenAlex for articles published from 2020 onward in six
political science journals (the \emph{American Political Science
Review}, the \emph{American Journal of Political Science}, the
\emph{Journal of Politics}, \emph{Comparative Political Studies},
\emph{World Politics}, and \emph{Perspectives on Politics}). We applied
two complementary filters to each journal: a keyword filter matching any
of the standard process-tracing vocabulary (\texttt{process\ tracing},
\texttt{hoop\ test}, \texttt{smoking\ gun}, \texttt{doubly\ decisive},
\texttt{straw\ in\ the\ wind}, \texttt{causal-process\ observation},
\texttt{set-theoretic}) in the title or abstract, and a citation filter
capturing every article that cites any of seven anchor texts on
process-tracing methodology.\footnote{The seven anchor texts are:
  Bennett and Checkel (eds., 2015, \emph{Process Tracing: From Metaphor
  to Analytic Tool}); Beach and Pedersen (2018, \emph{Process-Tracing
  Methods}; 2016, \emph{Causal Case Study Methods}); Collier (2011,
  \emph{Understanding Process Tracing}); Mahoney (2012, \emph{Logic of
  Process-Tracing Tests}); Fairfield and Charman (2022, \emph{Social
  Inquiry and Bayesian Inference}); Humphreys and Jacobs (2023,
  \emph{Integrated Inferences}).} The query returned 50 candidate papers
(search code at \texttt{process\_tracing\_corpus/build\_corpus.py}; raw
output at \texttt{process\_tracing\_corpus/papers.csv}). We reviewed the
50 candidates by hand against three further criteria --- discrete
observations that can be counted, a named working theory and at least
one rival, and substantive diversity --- and shortlisted ten papers
(\texttt{process\_tracing\_corpus/shortlist.md}). Three of the ten fell
outside the scope of this paper, and we set them aside: Coppock and Kaur
(2022) is a meta-analysis of truth-commission cases rather than a
process-tracing study; Slaven et al. (2020) tests four theories against
one another with none of them the working theory, so its interviews are
coded for which of the four each supports rather than for or against a
single theory, which is what our counts need; Goertz and Haggard (2023)
introduces Large-N Qualitative Analysis as a methodology rather than
presenting an empirical process trace. The \emph{Journal of Politics}
returned zero candidate papers under either filter; we note this absence
without further interpretation. We applied our framework to six of the
seven remaining papers --- the six the main paper summarizes, developed
in detail in the subsections below. We set the seventh aside: Leipziger,
Rorbaek, and Skaaning's (2025) fifteen-case comparison is the one
large-N qualitative design in the set, and how to count and weight cases
in large-N qualitative analysis raises questions we do not take up here.

\subsection{Mor (2022): voter-driven coalition formation in
nineteenth-century Prussia}\label{sec-app-mor}

Mor (2022) asks how the ethnic dimension of party competition emerged in
nineteenth-century Prussia. Mor's working theory (\(H_1\)) traces
ethnic-party formation to a voter-driven sequence: government policies
aggrieve a religious or ethnic group, voters then coordinate on ethnic
identity, and political entrepreneurs respond tactically. The
conventional rival (\(H_R\)) traces it to elite initiatives: clerical
authority, mass organizations, and entrepreneurial coalition-building
drive the change from the top down. The design is a four-case sequence
within Prussia (1848--49, 1852, 1855--66, 1867--71), chosen to hold the
entrepreneur infrastructure approximately constant while policy
aggrievement varies.

We coded 24 observations from Mor's article into the coding table at
\texttt{replications/mor\_2022/cases.csv}, one observation per line with
a page citation and verbatim quote. Both coders agreed that 8
observations favor \(H_1\) and that 2 favor \(H_R\); we set aside the
remaining 14 as ambiguous. The agreed pro-\(H_1\) observations include
the finding from Mor's first case that massive entrepreneurial effort in
1848--49 failed to consolidate a Catholic bloc under neutral policy
(p.140); the immediate emergence of a Catholic coalition after the 1852
Raumer Decrees, despite weaker entrepreneurial capacity than in 1848
(p.141); the dealignment of 1855--66 as policy returned to neutral
(p.142); the November 1867 Reichstag election, in which
Catholic-majority constituencies elected anti-Bismarck liberals despite
a Church ban on clerical political activity (p.144); and the
constituency-level statistical analysis showing that proxies for Church
organizational capacity do not predict the change in liberal vote share
(p.152). The two agreed pro-\(H_R\) observations are the
entrepreneur-driven sub-coalition in Rhineland-Westphalia in 1848
(p.140) and the 1870 relaunch of the Zentrum, in which previous leaders
codified an opposition program and built coordination across
associations (p.144). Many of the 14 ambiguous observations concern two
episodes that fit both the voter theory and the entrepreneur rival: the
ballot-tearing of February 1867 and the sequencing claim about 1870.

At \((k, r) = (8, 2)\) the uniform-weights Bayes factor is 29.57, above
the threshold of 20, and the worst-case Bayes factor is 4.00, far below
it. The conclusion depends on which rival Mor is answering: against a
rival who means every value of \(\theta\) at or below one half equally,
eight-and-two is strong evidence; against a rival granted its best case,
it is not. One thing Mor could report with the proved error rate is a
separation: the eight-and-two counts support the voter-driven theory by
a factor of 20 or more provided the two theories' claims about the
archive differ by at least 0.123 on each side of an even split --- the
voter theory claiming at least 62.3 percent of the archive and the
entrepreneur rival at most 37.7. The case-level evidence gives Mor a
clean comparison --- weaker entrepreneurial capacity in 1852 than in
1848 paired with stronger Catholic alignment --- and both coders agreed
on the observations that record that comparison. What the counts cannot
do, against a rival granted its best case, is reach 20 on their own. Two
of the eight agreed pro-\(H_1\) observations are results of Mor's
constituency-level statistical analysis, counted as observations under
the rule of Section~\ref{sec-coding-rules}. Without them the counts are
six-and-two and the uniform-weights Bayes factor is 10.1, below the
threshold.

{

\begin{longtable}[]{@{}
  >{\raggedright\arraybackslash}p{(\linewidth - 4\tabcolsep) * \real{0.6866}}
  >{\centering\arraybackslash}p{(\linewidth - 4\tabcolsep) * \real{0.1567}}
  >{\centering\arraybackslash}p{(\linewidth - 4\tabcolsep) * \real{0.1567}}@{}}

\caption{\label{tbl-app-mor-sens}Bayes factors for Mor (2022) when
coding error, observation bias, a smoking-gun weight, and the weights on
\(\theta\) each take three values. Agreed counts \((k, r) = (8, 2)\); 14
observations set aside as ambiguous.}

\tabularnewline

\toprule\noalign{}
\begin{minipage}[b]{\linewidth}\raggedright
What we vary
\end{minipage} & \begin{minipage}[b]{\linewidth}\centering
Uniform-weights BF
\end{minipage} & \begin{minipage}[b]{\linewidth}\centering
Worst-case BF
\end{minipage} \\
\midrule\noalign{}
\endhead
\bottomrule\noalign{}
\endlastfoot
Coding error: \(x = 0 \,/\, 2 \,/\, 4\) observations re-coded & 29.57 /
2.64 / 0.38 & 4.00 / 0.64 / 0.24 \\
Observation bias: \(\omega = 1 \,/\, 2 \,/\, 5\) & 29.57 / 4.62 / 0.81 &
na / na / na \\
Smoking gun on one pro-\(H_1\) observation: \(w = 1 \,/\, 3 \,/\, 5\) &
29.57 / 88.04 / 270 & 4.00 / 9.44 / 23.92 \\
Weights on \(\theta\), rescaled onto each half: Beta\((1,1)\) /
Beta\((1/2,1/2)\) / Beta\((10,10)\) & 29.57 / 13.72 / 298 & 4.00 / 3.02
/ 6.03 \\

\end{longtable}

}

How sensitive are these Bayes factors to the coding, the search, and the
weights on \(\theta\)? If two of the eight observations we read as
favoring the voters are re-read as favoring the entrepreneurs, the
uniform-weights Bayes factor falls from 29.6 to 2.6 --- below 20 --- and
the worst-case Bayes factor falls from 4.00 to 0.64, while the required
separation grows from 0.123 to 0.318: after two re-readings the voter
theory would have to claim more than four fifths of the archive before
the counts were 20-to-1 evidence. If evidence for the voters were twice
as easy to find as evidence for the entrepreneurs (\(\omega = 2\)), the
uniform-weights Bayes factor falls to 4.6. The Jeffreys weights,
Beta\((1/2, 1/2)\) rescaled onto each half, give a Jeffreys-weights
Bayes factor of 13.7. Beta\((10, 10)\) rescaled onto each half, which
puts each theory's \(\theta\) near the middle of its half, gives 297.8.
The worst-case Bayes factor takes those weights on the working theory's
range only, since the rival's claim is evaluated at one half whatever
the rival's proponent states: 3.0 with the Jeffreys weights and 6.0 with
Beta\((10, 10)\). The Bayes factors move this far because few
observations carry them: with ten agreed observations, each observation
carries more of the conclusion than in a study with seventeen (Winward,
below). The 14 we set aside as ambiguous give a reader many observations
to disagree about, and the table says how much disagreement would change
each Bayes factor. That disagreement would begin with the two episodes
that fit both stories, the February 1867 ballot-tearing and the 1870
sequencing claim; to change the reports, it would have to continue into
the eight agreed observations. A historian of Prussia would have to read
observations both coders took as voters coordinating on aggrievement as
instead recording entrepreneurs building the coalition. Whether those
observations can be read that way is a question about Prussia, not about
our model --- and the historians who know these sources can answer it.

\subsection{Steinsson (2024): a single case in which no agreed
observation favors the rivals}\label{sec-app-steinsson}

Steinsson (2024) presents a single-case process trace of English
Wikipedia, asking why the encyclopedia transformed from a venue that
lent credence to fringe content in its early years into one that
proactively debunks it (p.235, abstract). Steinsson's working theory
(\(H_1\)) traces the change to internal dynamics among Wikipedia
editors: an Anti-Fringe (AF) editor camp won early disputes over the
Neutral Point of View rule, gained institutional power through
experience, numerical advantage, and a sourcing hierarchy, and the
opposing Pro-Fringe (PF) camp gradually exited the platform. The shift
in content followed from the shift in editor population. The four
explicit rival explanations are external events such as Trump and ``fake
news''; an influx of new editors with new ideologies; slow attitudinal
change among Wikipedians; and changes in the news and scientific sources
Wikipedia cites (pp.249--250). They take one label \(H_R\) because each
explains the change by something other than the editor turnover the
working theory names, which is one claim about \(\theta\): at most half
of the evidence supports the turnover account. Our counts ask of each
observation only whether it supports the working theory, so a rival
label can cover any number of alternatives that agree on that much, and
Steinsson's four never have to be told apart here because no agreed
observation favors any of them.

The evidence here keeps growing: Wikipedia produces its own record, and
disputes, arbitration rulings, and editor exits keep accumulating after
any study of the case stops. Yet everything in it is written down, and
each observation we coded can be checked. We report both Bayes factors,
as we do for every study in this section, and for this study the two
agree about the conclusion.

We coded the article into 23 observations, recorded in the coding table
at \texttt{replications/steinsson\_2023/cases.csv}, one observation per
line with a page citation and verbatim quote. Both coders agreed that 12
of the 23 favor \(H_1\). They agreed on no observation favoring the
rivals --- \(r = 0\). The remaining 11 we set aside as ambiguous. The 12
agreed observations include positive evidence for the
editor-population-shift theory --- AF victories in the early arbitration
cases (p.246), AF-aligned authorship of three supporting guidelines
(p.246), the source-deprecation hierarchy starting with the Daily Mail
in 2017 (p.247), the disproportionate exits of PF voters across five
hotly contested referenda (p.249, Table 3), the documented content shift
across 63 articles (p.241, Table 1) --- together with Steinsson's tests
of those four explicit alternative explanations, each of which Steinsson
finds inconsistent with the data (pp.249--250). The 11 ambiguous
observations include several that the skeptical coder flagged as
potentially favoring the rivals --- the
early-Wikipedians-were-scientists footnote on p.250, the PF-editor
administrative-bias dossier in footnote 27 on p.248, Sanger's later
condemnation of NPOV interpretations on p.240, the Croatian-Wikipedia
counterexample on p.236 --- but the charitable coder conceded none of
them. An observation flagged by one coder alone counts as ambiguous, not
as evidence for the rivals.

At \((k, r) = (12, 0)\) the uniform-weights Bayes factor is 8,191 and
the worst-case Bayes factor is 630.1 --- both far above the BF
\(\geq 20\) threshold, and the worst-case value carries the proved error
rate with no claim about the size of the evidence base. The separation
says the same thing from a third angle: the two theories' claims need to
differ by only 0.063 on each side of an even split --- the working
theory claiming 56.3 percent of the evidence base --- for
twelve-of-twelve to be 20-to-1 evidence. Unanimous evidence is strong
under both Bayes factors and under the separation. The size of any one
number is not the finding. The questions that matter are the ones the
sensitivity table answers: how many of the twelve agreed observations
would have to be re-read as evidence for the rivals, how much easier
pro-\(H_1\) evidence would have to have been to find than pro-rival
evidence, how much weight one observation would have to carry, and how
the weights on \(\theta\) would have to change, before the Bayes factors
fall below 20.

\newpage{}

{

\begin{longtable}[]{@{}
  >{\raggedright\arraybackslash}p{(\linewidth - 4\tabcolsep) * \real{0.6619}}
  >{\centering\arraybackslash}p{(\linewidth - 4\tabcolsep) * \real{0.1871}}
  >{\centering\arraybackslash}p{(\linewidth - 4\tabcolsep) * \real{0.1511}}@{}}

\caption{\label{tbl-app-steinsson-sens}Bayes factors for Steinsson
(2024) when coding error, observation bias, a smoking-gun weight, and
the weights on \(\theta\) each take three values. Agreed counts
\((k, r) = (12, 0)\); 11 observations set aside as ambiguous. Coding
error takes larger values (\(x = 0/4/8\)) because both Bayes factors are
far above 20 at the agreed coding.}

\tabularnewline

\toprule\noalign{}
\begin{minipage}[b]{\linewidth}\raggedright
What we vary
\end{minipage} & \begin{minipage}[b]{\linewidth}\centering
Uniform-weights BF
\end{minipage} & \begin{minipage}[b]{\linewidth}\centering
Worst-case BF
\end{minipage} \\
\midrule\noalign{}
\endhead
\bottomrule\noalign{}
\endlastfoot
Coding error: \(x = 0 \,/\, 4 \,/\, 8\) observations re-coded & 8,191 /
6.49 / 0.15 & 630 / 1.10 / 0.17 \\
Observation bias: \(\omega = 1 \,/\, 2 \,/\, 5\) & 8,191 / 325 / 24.04 &
na / na / na \\
Smoking gun on one pro-\(H_1\) observation: \(w = 1 \,/\, 3 \,/\, 5\) &
8,191 / 32,767 / 131,071 & 630 / 2,184 / 7,710 \\
Weights on \(\theta\), rescaled onto each half: Beta\((1,1)\) /
Beta\((1/2,1/2)\) / Beta\((10,10)\) & 8,191 / 6,219 / 86,511 & 630 / 957
/ 180 \\

\end{longtable}

}

Only the coding question brings either Bayes factor below 20. If four of
the twelve agreed observations are re-read as favoring the rivals, the
uniform-weights Bayes factor falls to 6.49 and the worst-case Bayes
factor to 1.10; the worst-case value crosses below 20 first, at two
re-readings (ten of twelve gives 9.44, below the ceiling of 18.4 that
Proposition~\ref{prp-ceiling} puts on that count), while the
uniform-weights Bayes factor stays above 20 through three. If eight of
the twelve favored the rivals, the uniform-weights Bayes factor is 0.15
and the worst-case value 0.17 --- the worst-case Bayes factor is defined
at every count, including counts favoring the rivals, where it falls
below one. No other question in the table brings the uniform-weights
Bayes factor below 20 on its own: a search twice as likely to find
pro-\(H_1\) evidence as pro-rival evidence, whatever the underlying
evidence base contains, leaves it at 325, and a search five times as
likely leaves it at 24.04. Stating other weights on \(\theta\) replaces
it with a different Bayes factor, and both are above 20: the
Jeffreys-weights Bayes factor is 6,219, and Beta\((10, 10)\) rescaled
onto each half gives 86,511. The worst-case Bayes factor is 957 with the
Jeffreys weights and 180 with Beta\((10, 10)\). Treating one pro-\(H_1\)
observation as a smoking gun only raises both Bayes factors.

A reader who doubts this conclusion has one route: the coding. To bring
the uniform-weights Bayes factor below 20, a reader must re-read at
least four of the twelve agreed observations as evidence for the rivals;
to bring the worst-case Bayes factor below 20, two re-readings suffice,
and after them a separation is one of the things the study could report
instead. The eleven ambiguous observations, each documented with its
verbatim quote and the reasoning of the coder who flagged it, are where
such a re-reading would start; to change the conclusion it would have to
continue into the agreed twelve. Whether the early Wikipedians'
scientific backgrounds, the administrative-bias dossier, or the Croatian
counterexample favor the rivals --- and whether any of the agreed twelve
should be re-coded --- are questions about Wikipedia's recorded history,
not about our model, and the editors and scholars who know that history
can answer them. The framework's contribution here is not the verdict
--- Steinsson could compute the numbers. It is the table, which says how
much rests on each part of the inference: nearly everything on the
coding of the twelve agreed observations, and almost nothing on the
weights on \(\theta\); the search would have to have favored pro-\(H_1\)
evidence five to one before the uniform-weights Bayes factor came near
20.

\subsection{Andersen (2024): a three-case comparison where the two Bayes
factors fall on opposite sides of 20}\label{sec-app-andersen}

Andersen (2024) asks why Scandinavia achieved the extensive and peaceful
agrarian reforms that underwrote stable democratization, while France
and Prussia did not. Andersen's working theory (\(H_1\)) is that
meritocratic recruitment to central administration plus state control
over local administration produced impartial state-society relations and
smooth reform. The rival (\(H_R\)) is the Moore-style
socioeconomic-structure account: peasant rebellions and elite violence
drive reforms, and state capacity is a consequence, not a cause. The
design is a three-case comparative process trace.

We coded the article into 25 observations, recorded in the coding table
at \texttt{replications/andersen\_2023/cases.csv}, one observation per
line with a page citation and verbatim quote. Both coders agreed that 9
observations favor \(H_1\) and that 3 favor \(H_R\). The 9 pro-\(H_1\)
observations include the cross-case productivity test that rules out a
Moore-style productivity rival (p.59), the 1719 Swedish merit-based
public-employment rules (p.62), the Danish 1736 mandate that public
employees hold a Copenhagen law degree (p.62), the abolition of
Stavnsb\aa{}ndet in 1788 and the peaceful land redistribution that
followed (p.58), and the negative-case evidence from France and Prussia
(pp.58--59). The 3 pro-\(H_R\) observations are concessions Andersen
makes to the rival: the violent Dalecarlian Rebellion of 1743 (p.58),
the 1762 Bergen tax riot specifically about corrupt distribution of
trading rights (p.58), and the peasant estate's staunch support for
limited suffrage in Sweden in 1866 (p.56). We set the remaining 13
observations aside as ambiguous. Eleven of the 13 were flagged by only
one coder --- seven by the skeptical coder alone, among them Gustav
III's 1792 assassination, the 1811 Kl\aa{}gerup gathering, Norway's 1814
odelsretten reversal, and the concession about suffrage violence in
Sweden in 1917 --- and under the merge rule a flag from one coder alone
enters neither count.

At \((k, r) = (9, 3)\) --- the same counts as the paper's running
example --- the uniform-weights Bayes factor is 20.67, just above the BF
\(\geq 20\) threshold, and the worst-case Bayes factor is 2.73, far
below it. The two Bayes factors fall on opposite sides of the threshold
at the same counts, and the difference comes from what each grants the
rival. The uniform-weights Bayes factor spreads the rival's claim evenly
over every value of \(\theta\) at or below one half, including values
near zero under which nine-of-twelve is nearly impossible, and dividing
by that average yields 20.67. The worst-case Bayes factor grants the
rival its best case, an evidence base split evenly, and against that
case nine of twelve is worth at most 4.81 whatever weights the
researcher states (Proposition~\ref{prp-ceiling}). One thing Andersen
could report with the proved error rate is a separation: 0.123 on each
side of an even split --- the meritocratic-state theory claiming at
least 62.3 percent of the evidence base and the Moore-style rival at
most 37.7 --- for the nine-and-three counts to be 20-to-1 evidence.

{

\begin{longtable}[]{@{}
  >{\raggedright\arraybackslash}p{(\linewidth - 4\tabcolsep) * \real{0.6866}}
  >{\centering\arraybackslash}p{(\linewidth - 4\tabcolsep) * \real{0.1567}}
  >{\centering\arraybackslash}p{(\linewidth - 4\tabcolsep) * \real{0.1567}}@{}}

\caption{\label{tbl-app-andersen-sens}Bayes factors for Andersen (2024)
when coding error, observation bias, a smoking-gun weight, and the
weights on \(\theta\) each take three values. Agreed counts
\((k, r) = (9, 3)\); 13 observations set aside as ambiguous.}

\tabularnewline

\toprule\noalign{}
\begin{minipage}[b]{\linewidth}\raggedright
What we vary
\end{minipage} & \begin{minipage}[b]{\linewidth}\centering
Uniform-weights BF
\end{minipage} & \begin{minipage}[b]{\linewidth}\centering
Worst-case BF
\end{minipage} \\
\midrule\noalign{}
\endhead
\bottomrule\noalign{}
\endlastfoot
Coding error: \(x = 0 \,/\, 2 \,/\, 4\) observations re-coded & 20.67 /
2.44 / 0.41 & 2.73 / 0.56 / 0.23 \\
Observation bias: \(\omega = 1 \,/\, 2 \,/\, 5\) & 20.67 / 2.85 / 0.40 &
na / na / na \\
Smoking gun on one pro-\(H_1\) observation: \(w = 1 \,/\, 3 \,/\, 5\) &
20.67 / 55.89 / 156 & 2.73 / 5.90 / 13.68 \\
Weights on \(\theta\), rescaled onto each half: Beta\((1,1)\) /
Beta\((1/2,1/2)\) / Beta\((10,10)\) & 20.67 / 9.53 / 253 & 2.73 / 2.07 /
4.38 \\

\end{longtable}

}

Under the uniform-weights Bayes factor, Andersen's conclusion is
sensitive to the coding, the search, and the weights on \(\theta\)
alike. If two of the nine observations both coders agreed favor \(H_1\)
are re-read as favoring the rival, the uniform-weights Bayes factor
falls to 2.4. If pro-\(H_1\) evidence were twice as easy to find in
these histories as pro-rival evidence (\(\omega = 2\)), it falls to 2.9.
The Jeffreys weights, Beta\((1/2, 1/2)\) rescaled onto each half, give a
Jeffreys-weights Bayes factor of 9.5, the value the applications table
reports. The re-reading and the search each take the uniform-weights
Bayes factor below 20 on their own, and the Jeffreys-weights Bayes
factor is below 20 as well. In the full table, two changes leave a Bayes
factor above 20: the smoking-gun weight, and Beta\((10, 10)\) rescaled
onto each half, under which each theory predicts a value of \(\theta\)
near the middle of its half and the Bayes factor those weights give is
252.9. The worst-case Bayes factor has no threshold crossing to lose at
these counts: it is 2.73 at the agreed coding, and its sensitivity story
runs through the separation. Two re-readings take the worst-case value
to 0.56 and push the required separation from 0.123 to 0.318; a
smoking-gun weight of 3 raises the worst-case value to 5.9 and a weight
of 5 to 13.7, both still below 20. The worst-case Bayes factor is 2.1
with the Jeffreys weights and 4.4 with Beta\((10, 10)\), below 20 either
way. So the two Bayes factors disagree about what nine-and-three can
support. Under the uniform-weights Bayes factor the counts are evidence
just past the threshold, fragile in every direction. Under the
worst-case Bayes factor they cannot reach the threshold at all, and a
separation is one of the things Andersen could report in place of a
verdict. Whether the meritocratic-state theory in fact claims an
evidence base as lopsided as 62-to-38 is a question about Scandinavian
historiography, not about the arithmetic. Historians of Scandinavia can
answer it. The table cannot.

\subsection{Hammoud-Gallego and Freier (2023): symbolic refugee
protection in a mixed-methods
design}\label{sec-app-hammoud-gallego-freier}

Hammoud-Gallego and Freier (2023) ask why Latin American states
liberalized refugee legislation in the early twenty-first century, when
conventional determinants of immigration and refugee policy ---
immigrant and refugee stocks, emigrant numbers, democratization --- do
not predict liberalization in the region. Their working theory (\(H_1\))
is \emph{symbolic refugee protection}: leftist Pink-Tide ideology,
regional integration, and human-rights signaling drove liberalization.
The rival (\(H_R\)) in our coding is the conventional, instrumental
story: liberalization responded to migration pressures at home and
foreign-policy pressures from abroad, not to symbolic politics.
Hammoud-Gallego and Freier themselves formulate six hypotheses rather
than two; our two-theory coding sets their symbolic account against the
conventional determinants. The paper combines large-N quantitative
analysis with within-case process tracing of Argentina and Mexico, and
we coded findings from both as observations
(Section~\ref{sec-coding-rules}).

We coded 28 observations from the paper into the coding table at
\texttt{replications/hammoud\_gallego\_freier\_2022/cases.csv}, one
observation per line with a page citation and verbatim quote. Both
coders agreed that 11 favor \(H_1\) and 2 favor \(H_R\). The 11
pro-\(H_1\) observations include the leftist-government coefficient in
the Tobit specification (p.459), the trade-openness coefficient (p.459),
the spatial autocorrelation coefficients indicating regional diffusion
(p.460), the chronological correlation of human-rights treaties with
regulatory complexity in Argentina and Mexico (p.464), and Argentine
interviews on the technical rather than public character of the refugee
law (p.462). They also include the two null results, on immigrant and
refugee stocks and on emigrant numbers: Hammoud-Gallego and Freier put
both among the predictors the prior literature names and their own
models do not support, and explain liberalization by leftist ideology
and human-rights signaling rather than by either, so a null on either is
a null on a conventional determinant, which is the rival in our
coding.\footnote{Our earlier coding credited the emigrant-numbers null
  to the rival, on the ground that diaspora politics is one of the
  working theory's own mechanisms. The article's abstract says the
  opposite --- its models ``do not support some consistent predictors of
  policy liberalization identified by the literature such as immigrant
  and refugee stocks, democratization, and the number of emigrants'' ---
  and the two nulls come from one sentence of the source, so we now code
  them alike.} The 2 pro-\(H_R\) observations are the Gonz\'alez-Murphy
account of US bilateral pressure on Mexico (``a slap in the face with a
white glove,'' p.466) and the authors' own concession that three of six
theoretical hypotheses receive no statistical support (p.468). We set
aside the remaining 15 observations as ambiguous, mostly because the
published findings discriminate poorly between symbolic adoption and the
instrumental story: the 2010 Tamaulipas massacre read as a
critical-juncture trigger, the UNHCR's facilitating role, and several
interview testimonies that the charitable coder counted as favoring
\(H_1\) but the skeptical coder did not concede.

At \((k, r) = (11, 2)\) the uniform-weights Bayes factor is 153.57,
above the BF \(\geq 20\) threshold. The worst-case Bayes factor is
14.91, with separation 0.083, so here too the conclusion depends on
which rival Hammoud-Gallego and Freier are answering: against one
granted its best case, eleven-and-two falls short of 20. The ambiguous
count is itself a substantive finding: our coders could not agree on
fifteen of the twenty-eight observations --- more than in any other
study we coded --- because the published evidence often fits the
symbolic story and the instrumental story equally well. Seven of its
thirteen agreed observations are results of the paper's quantitative
analysis, counted as observations under the rule of
Section~\ref{sec-coding-rules}. Without them the counts are five-and-one
and the uniform-weights Bayes factor is 15.0, below the threshold: that
is the number for a reader who counts the process-tracing evidence
alone.

{

\begin{longtable}[]{@{}
  >{\raggedright\arraybackslash}p{(\linewidth - 4\tabcolsep) * \real{0.6765}}
  >{\centering\arraybackslash}p{(\linewidth - 4\tabcolsep) * \real{0.1544}}
  >{\centering\arraybackslash}p{(\linewidth - 4\tabcolsep) * \real{0.1691}}@{}}

\caption{\label{tbl-app-hgf-sens}Bayes factors for Hammoud-Gallego and
Freier (2023) when coding error, observation bias, a smoking-gun weight,
and the weights on \(\theta\) each take three values. Agreed counts
\((k, r) = (11, 2)\); 15 observations set aside as ambiguous.}

\tabularnewline

\toprule\noalign{}
\begin{minipage}[b]{\linewidth}\raggedright
What we vary
\end{minipage} & \begin{minipage}[b]{\linewidth}\centering
Uniform-weights BF
\end{minipage} & \begin{minipage}[b]{\linewidth}\centering
Worst-case BF
\end{minipage} \\
\midrule\noalign{}
\endhead
\bottomrule\noalign{}
\endlastfoot
Coding error: \(x = 0 \,/\, 2 \,/\, 4\) observations re-coded & 154 /
10.14 / 1.53 & 14.91 / 1.49 / 0.41 \\
Observation bias: \(\omega = 1 \,/\, 2 \,/\, 5\) & 154 / 12.07 / 1.45 &
na / na / na \\
Smoking gun on one pro-\(H_1\) observation: \(w = 1 \,/\, 3 \,/\, 5\) &
154 / 477 / 1,523 & 14.91 / 38.93 / 107 \\
Weights on \(\theta\), rescaled onto each half: Beta\((1,1)\) /
Beta\((1/2,1/2)\) / Beta\((10,10)\) & 154 / 60.32 / 3,917 & 14.91 /
11.04 / 21.15 \\

\end{longtable}

}

This conclusion too is sensitive to the coding, the search, and the
weights on \(\theta\). If two of the eleven observations both coders
agreed favor \(H_1\) are re-read as favoring the rival, the
uniform-weights Bayes factor falls to 10.1 and the worst-case value to
1.5, with the required separation growing from 0.083 to 0.146. If
evidence for symbolic protection were twice as easy to find in the
published sources as evidence for the instrumental story
(\(\omega = 2\)), the uniform-weights Bayes factor falls to 12.1. The
Jeffreys weights, Beta\((1/2, 1/2)\) rescaled onto each half, give a
Jeffreys-weights Bayes factor of 60.3, still above 20, and
Beta\((10, 10)\) rescaled onto each half gives 3,917. The worst-case
Bayes factor is 11.0 with the Jeffreys weights and 21.1 with
Beta\((10, 10)\). Whether two of those eleven observations should be
re-read, and whether a search of these published sources could have
favored the symbolic story two to one, are questions about Latin
American refugee policy, not about our model. The fifteen ambiguous
observations are documented in the coding table, each with its verbatim
quote and the disagreement noted; scholars of Latin American migration
policy can re-read them, and the table says how many re-readings would
change the conclusion. Hammoud-Gallego and Freier can make the case for
their interpretation where they have a substantive argument; a reader
who disagrees can name the observations they would re-code. The
conversation that follows is quantitative rather than rhetorical.

\subsection{Pavone and Stiansen (2022): two pre-specified rivals and the
weight of one letter}\label{sec-app-pavone-stiansen}

Pavone and Stiansen (2022) process-trace Norway's 2019 social-benefits
reform. Did policy makers reform preemptively, anticipating EFTA Court
judicial review (the working theory, ``shadow effect of courts,''
\(m_2\) in their notation)? Or did they respond to managerial
recognition of legal obligation, as the literature would predict
(\(m_1\), our \(H_R\))? Pavone and Stiansen name both mechanisms in
advance and test them against the same archive --- correspondence, audit
reports, parliamentary testimony --- asking which one the timing and
substance of the reform fits. Two rivals and one shared archive: this is
the comparison the framework was built for.

We coded 22 observations from the article into the coding table at
\texttt{replications/pavone\_stiansen\_2021/cases.csv}, one observation
per line with a page citation and verbatim quote. Both coders agreed
that 7 favor \(H_1\) and 4 favor \(H_R\); the remaining 11 we set aside
as ambiguous. Two institutions recur in the evidence: NAV, the Norwegian
welfare agency whose benefits practice was at issue, and the NIC, a
quasi-judicial body within the Norwegian state that can refer questions
to the EFTA Court. The 7 observations favoring \(H_1\) include NAV's
decade-long suppression of internal legal concerns (p.330), NAV's
treatment of the NIC as a subordinate advisory body rather than
appealing through the courts (p.331), the November 2018 NIC letter
explicitly threatening to refer to the EFTA Court (p.331), NAV's
documented motive in the January 2019 letter --- to reduce the
likelihood of EFTA Court referral, in their own words (p.332) --- and
NAV's January 2019 jurisdictional gambit to argue NIC lacked standing to
refer (p.332). The 4 observations favoring \(H_R\) are the textbook
\(m_1\) evidence Pavone and Stiansen themselves engage with: their
concession that initial noncompliance is partially attributable to
insufficient legal knowledge (p.330), the 2017 Tolley CJEU ruling
against the United Kingdom on point (p.330), the substance of the 2019
reform (full EEA compliance plus reopening old cases plus compensating
victims, p.329) --- what \(m_1\) predicts once legal obligation is
recognized --- and the consistent on-record managerial framing by the
Attorney General, the NAV Director, and the Director of Public
Prosecutions (p.329).

At \((k, r) = (7, 4)\) the uniform-weights Bayes factor is 4.16 and the
worst-case Bayes factor is 0.83 --- the one study in the set whose
worst-case value falls below one, because seven of eleven is close to
the even split the rival predicts. The required separation is 0.231: the
shadow-of-courts theory would have to claim more than 73 percent of the
evidence for these counts to be 20-to-1 evidence. With two rivals
specified in advance and a shared documentary archive, neither mechanism
is much better supported than the other when we count only the
observations both coders agreed on. The smoking-gun observations Pavone
and Stiansen rely on --- NAV's own internal correspondence acknowledging
the EFTA Court motivation --- appear in the same archive as the textbook
managerial-framing observations: the substance of the reform and the
consistent public framing by senior officials. Coding with two coders
makes that tension explicit.

{

\begin{longtable}[]{@{}
  >{\raggedright\arraybackslash}p{(\linewidth - 4\tabcolsep) * \real{0.6866}}
  >{\centering\arraybackslash}p{(\linewidth - 4\tabcolsep) * \real{0.1642}}
  >{\centering\arraybackslash}p{(\linewidth - 4\tabcolsep) * \real{0.1493}}@{}}

\caption{\label{tbl-app-ps-sens}Bayes factors for Pavone and Stiansen
(2022) when coding error, observation bias, a smoking-gun weight, and
the weights on \(\theta\) each take three values. Agreed counts
\((k, r) = (7, 4)\); 11 observations set aside as ambiguous.}

\tabularnewline

\toprule\noalign{}
\begin{minipage}[b]{\linewidth}\raggedright
What we vary
\end{minipage} & \begin{minipage}[b]{\linewidth}\centering
Uniform-weights BF
\end{minipage} & \begin{minipage}[b]{\linewidth}\centering
Worst-case BF
\end{minipage} \\
\midrule\noalign{}
\endhead
\bottomrule\noalign{}
\endlastfoot
Coding error: \(x = 0 \,/\, 2 \,/\, 4\) observations re-coded & 4.16 /
0.63 / 0.08 & 0.83 / 0.29 / 0.15 \\
Observation bias: \(\omega = 1 \,/\, 2 \,/\, 5\) & 4.16 / 0.79 / 0.10 &
na / na / na \\
Smoking gun on one pro-\(H_1\) observation: \(w = 1 \,/\, 3 \,/\, 5\) &
4.16 / 10.14 / 25.04 & 0.83 / 1.49 / 2.89 \\
Weights on \(\theta\), rescaled onto each half: Beta\((1,1)\) /
Beta\((1/2,1/2)\) / Beta\((10,10)\) & 4.16 / 2.69 / 16.23 & 0.83 / 0.74
/ 1.02 \\

\end{longtable}

}

For Pavone and Stiansen the question runs in the opposite direction from
the other applications: not how many observations would have to be
re-read for the conclusion to fail, but what a reader would have to
grant for the Bayes factors to exceed 20. Of the four questions in the
table, only the smoking-gun weight raises the Bayes factors; re-coding
and observation bias lower both, and a rival-tilted prior lowers the
uniform-weights Bayes factor, the only one of the two whose denominator
depends on weights over the values of \(\theta\) consistent with the
rival's claim. The candidate for that weight is a single document: NAV's
January 2019 internal correspondence, in which the proposed reform's
stated motive is to preclude EFTA Court adjudication. A weight of 3 on
that one observation brings the uniform-weights Bayes factor to 10.1 and
the worst-case value to 1.49; a weight of 5 raises the uniform-weights
Bayes factor just above 20, to 25.0, and the worst-case value to 2.89;
and at the weight of 10 the main paper uses for its smoking-gun
illustration, the weighted counts are (16, 4) and the worst-case value
is 20.5, just above the threshold. Pavone and Stiansen's argument is in
effect that this one piece of evidence is much more probative than the
routine confirmations of the legal-obligation story, and the numbers
cannot settle whether it is. Is one letter, in which NAV states its
motive in its own words, worth ten routine confirmations? Scholars of
courts and compliance can argue that question; the table says the
conclusion turns on exactly that.

\subsection{Winward (2021): within-region variation in mass
violence}\label{sec-app-winward}

Winward (2021) asks why mass categorical violence took different forms
and frequencies across regions during Indonesia's 1965--66 killings. The
working theory (\(H_1\)) is that low state intelligence capacity forced
security forces to rely on civilian elites for information; civilian
elites widened targeting criteria; and logistical strain from large
detainee populations drove mass executions. The rivals (\(H_R\)) are
Balcells-style political-cleavage explanations --- violence tracks the
strength of an out-group's local opponents, not state capacity --- and
Hoover-Green militia-empowerment dynamics. Winward compares the forms
and frequencies of violence across three provinces: Central Java, East
Java, and West Java.

We coded 27 observations from Winward's article into the coding table at
\texttt{replications/winward\_2020/cases.csv}, one observation per line
with a page citation and verbatim quote. Both coders agreed that 14
favor \(H_1\) and 3 favor \(H_R\); we set aside the remaining 10 as
ambiguous. The 14 pro-\(H_1\) observations include the antecedent low
intelligence capacity in Central Java (p.16), the underdeveloped
sub-district command structure (p.17), the RPKAD's explicit solicitation
of intelligence from PKI rivals (p.18), the documented militia training
of 24,000 youths in Surakarta (p.18), and the Muhammadiyah denunciation
of LEKRA cultural-event attendees (p.19). They also include the
proliferation of makeshift prisons (p.20), the near-ubiquitous torture
in Central Java prisons (p.21), the US Embassy airgram explicitly tying
executions to detainee logistics (p.21), and the cross-province
killing-to-incarceration ratios: 1:1 in West Java, 2:1 in Central Java,
and 8:1 in East Java (p.12). The 3 pro-\(H_R\) observations are
Chandra's finding that violence tracks PKI opponents, not PKI size
(p.20); the Garut sub-case, in which ex-Darul-Islam militia drove mass
killing in a high-capacity province (p.24); and Winward's own concession
that the West Java versus East Java comparison is imperfect (p.11).

At \((k, r) = (14, 3)\) the uniform-weights Bayes factor is 264 and the
worst-case Bayes factor is 21.3. This is one of the two studies in the
set whose worst-case Bayes factor is above 20 on its own: an
eleven-observation margin is enough even at the rival's best case. The
separation says the same thing: 0.068 on each side of an even split
suffices, the smallest separation among the mixed-count studies.
Seventeen agreed observations, fourteen of them favoring \(H_1\), put
the uniform-weights Bayes factor far enough above 20 that no single
observation carries it. The worst-case Bayes factor, at 21.3, is a
different matter, and the paragraph after the table says what one
re-reading does to it. The ten ambiguous observations are mostly
episodes below the province level --- Subang, Cirebon, Suharto's role.
One coder read these episodes as instances of Winward's province-level
argument; the other read variation within provinces as a reason to doubt
coding at the province level.

{

\begin{longtable}[]{@{}
  >{\raggedright\arraybackslash}p{(\linewidth - 4\tabcolsep) * \real{0.6715}}
  >{\centering\arraybackslash}p{(\linewidth - 4\tabcolsep) * \real{0.1606}}
  >{\centering\arraybackslash}p{(\linewidth - 4\tabcolsep) * \real{0.1679}}@{}}

\caption{\label{tbl-app-wn-sens}Bayes factors for Winward (2021) when
coding error, observation bias, a smoking-gun weight, and the weights on
\(\theta\) each take three values. Agreed counts \((k, r) = (14, 3)\);
10 observations set aside as ambiguous.}

\tabularnewline

\toprule\noalign{}
\begin{minipage}[b]{\linewidth}\raggedright
What we vary
\end{minipage} & \begin{minipage}[b]{\linewidth}\centering
Uniform-weights BF
\end{minipage} & \begin{minipage}[b]{\linewidth}\centering
Worst-case BF
\end{minipage} \\
\midrule\noalign{}
\endhead
\bottomrule\noalign{}
\endlastfoot
Coding error: \(x = 0 \,/\, 2 \,/\, 4\) observations re-coded & 264 /
19.78 / 3.16 & 21.34 / 2.24 / 0.57 \\
Observation bias: \(\omega = 1 \,/\, 2 \,/\, 5\) & 264 / 12.12 / 1.01 &
na / na / na \\
Smoking gun on one pro-\(H_1\) observation: \(w = 1 \,/\, 3 \,/\, 5\) &
264 / 775 / 2,337 & 21.34 / 54.04 / 143 \\
Weights on \(\theta\), rescaled onto each half: Beta\((1,1)\) /
Beta\((1/2,1/2)\) / Beta\((10,10)\) & 264 / 91.58 / 15,795 & 21.34 /
15.19 / 34.53 \\

\end{longtable}

}

What would it take to change this conclusion? The worst-case Bayes
factor is the more delicate of the two: at 21.3 it is barely above 20,
so a single re-reading among the fourteen agreed pro-\(H_1\)
observations takes it to 6.0, below the threshold, and at two
re-readings it is 2.24 with the required separation grown from 0.068 to
0.106. The uniform-weights Bayes factor falls to 19.8 --- just below 20
--- at the same two re-readings, and to 12.1 if evidence favoring the
capacity theory were twice as easy to find as evidence favoring the
rivals (\(\omega = 2\)). Both alternative weightings leave the Bayes
factor above 20: the Jeffreys-weights Bayes factor is 91.6, and
Beta\((10, 10)\) rescaled onto each half gives 15,795. The worst-case
Bayes factor is 21.3 under the uniform weights, 15.2 with the Jeffreys
weights --- below 20 --- and 34.5 with Beta\((10, 10)\). Strong findings
benefit from this exercise as much as weak ones do: each answer replaces
an adjective with a quantity, so rather than calling the evidence
``strong'' or ``decisive,'' Winward can say how many re-readings or how
much search imbalance would put each Bayes factor below 20. Whether any
of those fourteen observations should be read the other way, and whether
a search of these sources could have turned up evidence for the capacity
story twice as readily as evidence for the cleavage and militia stories,
are questions about Central Java, East Java, and West Java, not about
our model. Scholars of the 1965--66 Indonesian killings can answer them,
and the table says how many re-readings or how much imbalance it would
take.

\subsection{What these applications teach us about the
framework}\label{sec-app-lessons}

Three lessons emerge from the six applications. First, the framework
applies to a wide range of process-tracing designs: within-case
(Steinsson, Pavone and Stiansen), comparative across a small number of
cases (Andersen, Mor), regional comparison (Winward), and a
mixed-methods design (Hammoud-Gallego and Freier). The same model
handles all of them because it treats each piece of evidence
identically, whatever case it came from. An author trained to think of
within-case process tracing and small-N comparison as statistically
distinct can use the framework as one common tool for both, and a reader
of two papers in different traditions can compare them on a common
scale.

Second, the sensitivity table changes what an author and a reviewer can
say to each other. Without it, a reviewer who doubts a published claim
must either accept the author's narrative or reject it. With it, the
reviewer can ask questions that have specific answers --- how many
observations would have to be re-coded, how much easier pro-\(H_1\)
evidence would have to have been to find than pro-rival evidence, how
much weight one observation would have to carry, how far toward the
rival a prior would have to tilt --- and the table answers each one. The
answers differ across the six studies. For Winward, re-coding two of the
fourteen observations both coders agreed favor \(H_1\) leaves the
uniform-weights Bayes factor at 19.8, just below 20, and the worst-case
value at 2.24. For Andersen, the same two re-codings take the
uniform-weights Bayes factor to 2.4 and push the required separation to
0.318. For Pavone and Stiansen's two-rival design, both Bayes factors
are below 20 at the agreed coding; the uniform-weights Bayes factor
exceeds 20 only when NAV's January 2019 letter carries the weight of
five observations, and the worst-case value only at a weight of ten, as
a sensitivity analysis. None of these is a verdict on the paper. Each
tells scholars who know the case exactly what it would take --- how many
re-coded observations, how one-sided a search --- for the conclusion to
change.

Third, the framework does not adjudicate the methodological choices that
go into it; it makes them visible. The choice between the
uniform-weights and the worst-case Bayes factor, the choice of BF
\(\geq 20\) as a threshold, the choice of which observations support
which theory, and the choice of which observations deserve smoking-gun
weights are all live questions on which careful readers will disagree.
Each of the six subsections shows one or more of these choices doing
real work: the choice of Bayes factor in Andersen, where the
uniform-weights Bayes factor sits just above the threshold and the
worst-case Bayes factor says the counts cannot reach it at the rival's
best case; the decisions about categorizing observations as supporting
one or the other theory in Steinsson and in Hammoud-Gallego and Freier,
where observations flagged by only one coder stay out of the agreed
counts, and where counting the results of a study's own quantitative
analysis as observations (Section~\ref{sec-coding-rules}) is what keeps
Hammoud-Gallego and Freier and Mor above the threshold under the uniform
weights; and the smoking-gun weight in Pavone and Stiansen, where one
letter must carry the weight of five observations for the
uniform-weights Bayes factor to exceed 20. The separation belongs on
this list as the framework's own answer to close counts: for Mor,
Andersen, and Pavone and Stiansen, the honest summary is not a verdict
but a stated distance --- how far apart the two theories' claims about
the archive must be before the counts decide. The contribution is to put
these choices in a form where the author and the reviewer can
collaborate productively, rather than leave them implicit.

\section{Reproducing the running example}\label{sec-reproduction}

This section shows how to reproduce every number the running example
reports: the uniform-weights Bayes factor, the worst-case Bayes factor,
the separation, and the sensitivity values. The numbers come from two
places. The uniform-weights Bayes factor's functions are in the R
package \texttt{DrWrinch}, named after Dorothy Maud Wrinch (1894--1976),
the mathematician whose joint papers with Harold Jeffreys in the early
1920s developed the framework that later became Jeffreys's (1961) theory
of Bayes factors; the function that computes it is
\texttt{bf\_binomial}. The worst-case Bayes factor functions are in the
replication repository at \texttt{tests/bf\_bounded.R}, with their unit
tests in \texttt{tests/test\_bf\_bounded.R}, and the code below sources
that file. The package uses \texttt{y\_W} for the count of pro-\(H_1\)
observations and \texttt{y\_R} for the count of pro-\(H_R\)
observations, corresponding to \(k\) and \(r\) in this supplement's
notation.

The package is on GitHub at
\url{https://github.com/bowers-illinois-edu/DrWrinch} and installs with
\texttt{remotes}; the worst-case Bayes factor functions are in the
paper's replication repository:

\begin{Shaded}
\begin{Highlighting}[]
\CommentTok{\# install.packages("remotes")}
\NormalTok{remotes}\SpecialCharTok{::}\FunctionTok{install\_github}\NormalTok{(}\StringTok{"bowers{-}illinois{-}edu/DrWrinch"}\NormalTok{)}
\end{Highlighting}
\end{Shaded}

\begin{Shaded}
\begin{Highlighting}[]
\FunctionTok{library}\NormalTok{(DrWrinch)}
\FunctionTok{source}\NormalTok{(}\StringTok{"../tests/bf\_bounded.R"}\NormalTok{)}
\end{Highlighting}
\end{Shaded}

The country \countrylabel example has nine observations favoring the
working theory and three favoring the rival. The uniform-weights Bayes
factor reads \texttt{y\_W} and \texttt{y\_R} as Bernoulli successes and
failures with an unknown probability \(\theta\) of supporting \(H_1\):

\begin{Shaded}
\begin{Highlighting}[]
\FunctionTok{bf\_binomial}\NormalTok{(}\AttributeTok{y\_W =} \DecValTok{9}\NormalTok{, }\AttributeTok{y\_R =} \DecValTok{3}\NormalTok{)}
\end{Highlighting}
\end{Shaded}

\begin{verbatim}
[1] 20.67196
\end{verbatim}

This matches the uniform-weights Bayes factor reported in the main
paper. The worst-case Bayes factor at the same counts is the closed form
of Equation~\ref{eq-bounded-cf}, and the separation is the solved value
of Equation~\ref{eq-gap}:

\begin{Shaded}
\begin{Highlighting}[]
\FunctionTok{bounded\_bf}\NormalTok{(}\AttributeTok{y\_W =} \DecValTok{9}\NormalTok{, }\AttributeTok{y\_R =} \DecValTok{3}\NormalTok{)}
\end{Highlighting}
\end{Shaded}

\begin{verbatim}
[1] 2.732168
\end{verbatim}

\begin{Shaded}
\begin{Highlighting}[]
\FunctionTok{separation\_g}\NormalTok{(}\AttributeTok{y\_W =} \DecValTok{9}\NormalTok{, }\AttributeTok{y\_R =} \DecValTok{3}\NormalTok{, }\AttributeTok{threshold =} \DecValTok{20}\NormalTok{)}
\end{Highlighting}
\end{Shaded}

\begin{verbatim}
[1] 0.123
\end{verbatim}

The first value is the 2.73 derived in Section~\ref{sec-bounded}. The
second is the 0.123 the separated version reports --- the separation
between the two theories' claims at which nine-of-twelve becomes 20-to-1
evidence. The separated version itself, and its observation-bias
correction at a stated bias factor, are:

\begin{Shaded}
\begin{Highlighting}[]
\FunctionTok{separated\_bf}\NormalTok{(}\AttributeTok{y\_W =} \DecValTok{9}\NormalTok{, }\AttributeTok{y\_R =} \DecValTok{3}\NormalTok{, }\AttributeTok{g =} \FloatTok{0.123}\NormalTok{)}
\end{Highlighting}
\end{Shaded}

\begin{verbatim}
[1] 20.3648
\end{verbatim}

\begin{Shaded}
\begin{Highlighting}[]
\FunctionTok{separated\_bf\_bias}\NormalTok{(}\AttributeTok{y\_W =} \DecValTok{9}\NormalTok{, }\AttributeTok{y\_R =} \DecValTok{3}\NormalTok{, }\AttributeTok{g =} \FloatTok{0.123}\NormalTok{, }\AttributeTok{omega =} \FloatTok{1.5}\NormalTok{)}
\end{Highlighting}
\end{Shaded}

\begin{verbatim}
[1] 6.246703
\end{verbatim}

The uniform-weights Bayes factor's sensitivity helpers report the
observation-bias tipping point \(\omega^\star\) (the smallest
\(\omega > 1\) at which the uniform-weights Bayes factor first drops
below the threshold) and the rival-tilted prior tipping point
\texttt{M\_star}, the smallest integer \(M\) at which the
Beta\((1, M+1)\) prior (equivalent to \(M\) pseudo-observations all
favoring the rival) drives the posterior odds below the threshold. That
second object is the posterior odds, not the Bayes factor, for the
reason the prior-sensitivity subsection above explains:

\begin{Shaded}
\begin{Highlighting}[]
\FunctionTok{sens\_binomial}\NormalTok{(}\AttributeTok{y\_W =} \DecValTok{9}\NormalTok{, }\AttributeTok{y\_R =} \DecValTok{3}\NormalTok{, }\AttributeTok{threshold =} \DecValTok{20}\NormalTok{)}
\end{Highlighting}
\end{Shaded}

\begin{verbatim}
$bf
[1] 20.67196

$omega_star
[1] 1.00981

$M_star
[1] 1
\end{verbatim}

The coding-error helper \texttt{sens\_coding()} reports how many
pro-\(H_1\) observations would have to be re-coded as pro-rival before
the uniform-weights Bayes factor falls below the threshold:

\begin{Shaded}
\begin{Highlighting}[]
\FunctionTok{sens\_coding}\NormalTok{(}\AttributeTok{y\_W =} \DecValTok{9}\NormalTok{, }\AttributeTok{y\_R =} \DecValTok{3}\NormalTok{, }\AttributeTok{model =} \StringTok{"binomial"}\NormalTok{, }\AttributeTok{threshold =} \DecValTok{20}\NormalTok{)}\SpecialCharTok{$}\NormalTok{x\_star}
\end{Highlighting}
\end{Shaded}

\begin{verbatim}
[1] 1
\end{verbatim}

One re-coding takes the uniform-weights Bayes factor below 20, matching
Section~\ref{sec-sens-coding}. The worst-case Bayes factor's
coding-error analysis recomputes \texttt{bounded\_bf} and
\texttt{separation\_g} at the re-coded counts, as the recoding table
there shows.

Weighted analyses --- in which one observation carries more probative
force than the others --- are computed by passing the summed integer
weights as the totals. For the smoking-gun example, the researcher
upgrades one pro-\(H_1\) observation to weight \(10\), yielding
\(W = 8 + 10 = 18\) and \(R = 3\):

\begin{Shaded}
\begin{Highlighting}[]
\FunctionTok{bf\_binomial}\NormalTok{(}\AttributeTok{y\_W =} \DecValTok{18}\NormalTok{, }\AttributeTok{y\_R =} \DecValTok{3}\NormalTok{)}
\end{Highlighting}
\end{Shaded}

\begin{verbatim}
[1] 2336.962
\end{verbatim}

\begin{Shaded}
\begin{Highlighting}[]
\FunctionTok{bounded\_bf}\NormalTok{(}\AttributeTok{y\_W =} \DecValTok{18}\NormalTok{, }\AttributeTok{y\_R =} \DecValTok{3}\NormalTok{)}
\end{Highlighting}
\end{Shaded}

\begin{verbatim}
[1] 143.2847
\end{verbatim}

Both are the Bayes factors at the weighted counts
(Section~\ref{sec-weights}).

The package documentation, accessible via \texttt{?bf\_binomial},
\texttt{?sens\_binomial}, and \texttt{?sens\_coding}, describes each
argument. The six applications are reproduced by the scripts in
\texttt{replications/}, one directory per study;
\texttt{make\ replications} re-runs all six, and each
\texttt{results.csv} carries the counts, the uniform-weights Bayes
factor (column \texttt{binom\_bf}), the worst-case Bayes factor, and the
separation.

\section*{References}\label{references}
\addcontentsline{toc}{section}{References}

\protect\phantomsection\label{refs}
\begin{CSLReferences}{1}{1}
\bibitem[\citeproctext]{ref-andersen2024}
Andersen, David. 2024. {``Impartial Administration and Peaceful Agrarian
Reform: The Foundations for Democracy in Scandinavia.''} \emph{American
Political Science Review} 118 (1): 54--68.
\url{https://doi.org/10.1017/S0003055423000205}.

\bibitem[\citeproctext]{ref-collier2011}
Collier, David. 2011. {``Understanding Process Tracing.''} \emph{PS:
Political Science \& Politics} 44 (4): 823--30.
\url{https://doi.org/10.1017/S1049096511001429}.

\bibitem[\citeproctext]{ref-coppock2022qual}
Coppock, Alexander, and Dipin Kaur. 2022. {``Qualitative Imputation of
Missing Potential Outcomes.''} \emph{American Journal of Political
Science} 66 (3): 681--95.
https://doi.org/\url{https://doi.org/10.1111/ajps.12697}.

\bibitem[\citeproctext]{ref-diaconis1980finite}
Diaconis, Persi, and David Freedman. 1980. {``Finite Exchangeable
Sequences.''} \emph{Annals of Probability} 8 (4): 745--64.
\url{https://doi.org/10.1214/aop/1176994663}.

\bibitem[\citeproctext]{ref-fairfield2022}
Fairfield, Tasha, and Andrew E. Charman. 2022. \emph{Social Inquiry and
Bayesian Inference: Rethinking Qualitative Research}. Cambridge
University Press. \url{https://doi.org/10.1017/9781108377522}.

\bibitem[\citeproctext]{ref-goertz2023lnqa}
Goertz, Gary, and Stephan Haggard. 2023. {``Large-{N} Qualitative
Analysis ({LNQA}): Causal Generalization in Case Study and Multimethod
Research.''} \emph{Perspectives on Politics} 21 (4): 1221--39.
\url{https://doi.org/10.1017/S1537592723002037}.

\bibitem[\citeproctext]{ref-gallego2023}
Hammoud-Gallego, Omar, and Luisa Feline Freier. 2023. {``Symbolic
Refugee Protection: Explaining {Latin America}'s Liberal Refugee
Laws.''} \emph{American Political Science Review} 117 (2): 454--73.
\url{https://doi.org/10.1017/S000305542200082X}.

\bibitem[\citeproctext]{ref-hoeffding1963}
Hoeffding, Wassily. 1963. {``Probability Inequalities for Sums of
Bounded Random Variables.''} \emph{Journal of the American Statistical
Association} 58 (301): 13--30.
\url{https://doi.org/10.1080/01621459.1963.10500830}.

\bibitem[\citeproctext]{ref-jeffreys1961theory}
Jeffreys, Harold. 1961. \emph{Theory of Probability}. 3rd ed. Oxford
University Press.

\bibitem[\citeproctext]{ref-kass1995bayes}
Kass, Robert E., and Adrian E. Raftery. 1995. {``{B}ayes Factors.''}
\emph{Journal of the American Statistical Association} 90 (430):
773--95. \url{https://doi.org/10.1080/01621459.1995.10476572}.

\bibitem[\citeproctext]{ref-kish1965survey}
Kish, Leslie. 1965. \emph{Survey Sampling}. John Wiley \& Sons.

\bibitem[\citeproctext]{ref-leipziger2025}
Leipziger, Lasse Egendal, Lasse Lykke Rorbaek, and Svend-Erik Skaaning.
2025. {``Does Ethnopolitical Exclusion Cause Civil War Onset via
Grievances? Evidence from 15 Case Studies.''} \emph{Perspectives on
Politics}, ahead of print.
\url{https://doi.org/10.1017/S1537592725103101}.

\bibitem[\citeproctext]{ref-mor2022}
Mor, Maayan. 2022. {``Government Policies, New Voter Coalitions, and the
Emergence of Ethnic Dimension in Party Systems.''} \emph{World Politics}
74 (1): 121--66. \url{https://doi.org/10.1017/S0043887121000228}.

\bibitem[\citeproctext]{ref-pavone2022}
Pavone, Tommaso, and Øyvind Stiansen. 2022. {``The Shadow Effect of
Courts: Judicial Review and the Politics of Preemptive Reform.''}
\emph{American Political Science Review} 116 (1): 322--36.
\url{https://doi.org/10.1017/S0003055421000873}.

\bibitem[\citeproctext]{ref-slaven2020}
Slaven, Mike, Sara Casella Colombeau, and Elisabeth Badenhoop. 2020.
{``What Drives the Immigration-Welfare Policy Link? Comparing {Germany},
{France} and the {United Kingdom}.''} \emph{Comparative Political
Studies} 54 (5): 855--88.
\url{https://doi.org/10.1177/0010414020957674}.

\bibitem[\citeproctext]{ref-steinsson2024}
Steinsson, Sverrir. 2024. {``Rule Ambiguity, Institutional Clashes, and
Population Loss: How { Wikipedia} Became the Last Good Place on the
Internet.''} \emph{American Political Science Review} 118 (1): 235--51.
\url{https://doi.org/10.1017/S0003055423000138}.

\bibitem[\citeproctext]{ref-vanevera1997guide}
Van Evera, Stephen. 1997. \emph{Guide to Methods for Students of
Political Science}. Cornell University Press.

\bibitem[\citeproctext]{ref-vovk2021evalues}
Vovk, Vladimir, and Ruodu Wang. 2021. {``E-Values: Calibration,
Combination and Applications.''} \emph{The Annals of Statistics} 49 (3):
1736--54. \url{https://doi.org/10.1214/20-AOS2020}.

\bibitem[\citeproctext]{ref-winward2021}
Winward, Mark. 2021. {``Intelligence Capacity and Mass Violence:
Evidence from {Indonesia}.''} \emph{Comparative Political Studies} 54
(3--4): 553--84. \url{https://doi.org/10.1177/0010414020938072}.

\end{CSLReferences}

\end{document}